\DeclareMathAlphabet{\mymathbb}{U}{bbold}{m}{n}
\newcounter{Reqs}
\theoremstyle{claimstyle}
\newtheorem{requirement}[theorem]{Requirement}
\theoremstyle{definition}
\newtheorem{myclaim}[theorem]{Claim}
\numberwithin{requirement}{section}
\numberwithin{myclaim}{section}
\tikzset{fontscale/.style = {font=\relsize{#1}}}
\newcommand{\grayit}[1]{\textcolor{gray}{#1}}
\newcommand{\blueit}[1]{\textcolor{blue}{#1}}
\newcommand{\replace}[2]{\grayit{\st{#1}}\blueit{#2}}
\newcommand{\add}[1]{\textcolor{blue}{#1}}
\newcommand{\commentout}[1]{}
\newcommand{\dfl}[2]{}
\newcommand{\esl}[2]{}
\newcommand{\aggregator}[1]{\ensuremath{\textsf{#1}\xspace}}
\renewcommand{\limsup}{\aggregator{LimSup}}
\renewcommand{\liminf}{\aggregator{LimInf}}
\newcommand{\limsupavg}{\aggregator{LimSupAvg}}
\newcommand{\discsum}{\aggregator{DiscSum}}
\newcommand{\ned}{{\textsc{ned}}\xspace}
\newcommand{\omeganed}{\supomeganed}%
\newcommand{\supomeganed}{\overline{\ensuremath{\omega}}\text{-}\textsc{ned}}
\newcommand{\zero}{\mymathbb{0}}
\newcommand{\true}{\textsc{t}}
\newcommand{\false}{\textsc{f}}
\newcommand{\DM}{\mathbb{DM}}
\newcommand{\DMpk}[1]{\ensuremath{\DM_{{#1}}}^{+}}
\newcommand{\DMnk}[1]{\ensuremath{\DM_{{#1}}^{-}}}
\newcommand{\DMpmk}[1]{\ensuremath{\DM_{{#1}}^{\pm}}}
\newcommand{\wordrank}{\ensuremath{\Omega^{\textsl{word}}}}
\newcommand{\infixrank}{\ensuremath{\Omega^{\textsl{infix}}}}
\newcommand{\letterrank}{\ensuremath{\Omega^{\textsl{letter}}}}
\newcommand{\reset}{\ensuremath{\textsl{r}}}
\newcommand{\domind}{\ensuremath{\textsl{dom-ind}}}
\newcommand{\predomind}{\ensuremath{\textsl{predom-ind}}}
\newcommand{\detind}{\ensuremath{\textsl{det-ind}}}
\newcommand{\predomsuf}{\ensuremath{\textsl{predom-suf}}}
\newcommand{\domsuf}{\ensuremath{\textsl{dom-suf}}}
\newcommand{\detsuf}{\ensuremath{\textsl{det-suf}}}
\newcommand{\inj}{\ensuremath{\textsl{inj}}}
\newcommand{\LTL}{LTL}
\newcommand{\rLTL}{rLTL}
\newcommand{\ltlG}{\ensuremath{\textbf{G}}}
\newcommand{\ltlF}{\ensuremath{\textbf{F}}}
\newcommand{\ltlU}{\ensuremath{\textbf{U}}}
\newcommand{\ltlX}{\ensuremath{\textbf{X}}}
\newcommand{\ltlW}{\ensuremath{\textbf{W}}}
\newcommand{\signal}[1]{\textsl{#1}}
\newcommand{\score}{\ensuremath{\textsl{score}}}
\newcommand{\avgscore}{\ensuremath{\textsl{avgscore}}}
\newcommand{\valrbst}{\ensuremath{\textsl{rbst}}}
\newcommand{\rbstdecomp}{\ensuremath{\textsl{rbst-dcmp}}}
\newcommand{\lcolor}{\ensuremath{\textsl{color}}}
\newcommand{\ecolor}{\ensuremath{\textsl{e-color}}}
\newcommand{\icolors}{\ensuremath{\textsl{colors}}}
\newcommand{\leqrbst}{\mathrel{\unlhd}}
\newcommand{\geqrbst}{\mathrel{\unrhd}}
\newcommand{\gtrbst}{\mathrel{\rhd}}
\newcommand{\sema}[1]{\ensuremath{\llbracket#1\rrbracket}}
\newcommand{\aut}[1]{\mathcal{#1}}
\newcommand{\truerobustness}{\text{\FourStar}}
\newcommand{\vigor}{\text{\FourStarOpen}}
\newcommand{\term}[1]{\textcolor{brown}{\textsl{#1}}}
\newcommand{\mathterm}[1]{\textcolor{brown}{{#1}}}
\newcommand{\col}[1]{\ensuremath{\textsl{#1}}}
\newcommand{\alphabetfont}[1]{\textsc{#1}}
\newcommand{\green}{\alphabetfont{g}}	
\newcommand{\red}{\alphabetfont{r}}		
\newcommand{\yellow}{\alphabetfont{y}}
\newcommand{\black}{\alphabetfont{b}}
\newcommand{\white}{\alphabetfont{w}}
\newcommand{\fg}{\ensuremath{\textsl{fg}}}
\newcommand{\ulcolor}[2]{{\setulcolor{#1}{\setul{.6pt}{2pt}\ul{#2}}}}
\newcommand{\rdl}[1]
{\ulcolor{red}{#1}}
\newcommand{\yll}[1]
{\ulcolor{yellow}{#1}}
\newcommand{\grl}[1]
{\ulcolor{green}{#1}}
\newcommand{\bll}[1]
{\ulcolor{black}{#1}}
\tikzset{pyellow/.style={preaction={ 
draw,yellow,-, 
double=yellow,
line width=1.1pt,
}}}
\tikzset{pgreen/.style={preaction={ 
draw,green!60,-, 
line width=1.1pt,
double=green!60,
}}}
\tikzset{pred/.style={preaction={ 
draw,magenta!60,-, 
double=magenta!60,
line width=1.1pt,
}}}
\tikzset{pblack/.style={preaction={ 
draw,gray!80,-, 
double=gray!80,
line width=1.1pt,
}}}
\tikzset{pwhite/.style={preaction={ 
draw,gray!20,-, 
double=gray!20,
line width=1.1pt,
}}}
\tikzset{pblue/.style={preaction={ 
draw,blue!40,-, 
double=blue!40,
line width=1.3pt,
}}}
\tikzset{porange/.style={preaction={ 
draw,orange!50,-, 
double=orange!50,
line width=1.3pt,
}}}
\tikzset{pbrown/.style={preaction={ 
draw,brown!50,-, 
double=brown!50,
line width=1.3pt,
}}}
\tikzset{pteal/.style={preaction={ 
draw,teal!50,-, 
double=teal!50,
line width=1.3pt,
}}}
\tikzset{
    side by side/.style 2 args={
    line width=5pt, -,
    #2, 
    postaction={
        clip,postaction={draw,#1}
        }
    }
}
\title{$\omega$-Regular Robustness}
\author{Dana Fisman}{Department of Computer Science, Ben-Gurion University, Israel }{dana@bgu.ac.il}{https://orcid.org/0000-0002-6015-4170}{}
\author{Elina Sudit}{Department of Computer Science, Ben-Gurion University, Israel }{elinasu@post.bgu.ac.il}{https://orcid.org/0009-0009-6187-6894}{}
\authorrunning{D. Fisman and E. Sudit} 
\begin{document}

\maketitle

\begin{abstract}
Roughly speaking, a system is said to be \emph{robust} if it can resist disturbances and still function correctly. For instance, if the requirement is that the temperature remains in an allowed range $[l,h]$, then a system that remains in a range $[l',h']\subset[l,h]$ is more robust than one that reaches $l$ and $h$ from time to time. In this example the initial specification is quantitative in nature, this is not the case in $\omega$-regular properties.
Still, it seems there is a natural robustness preference relation induced by an $\omega$-regular property. E.g. for a property requiring that every request is eventually granted, one would say that a system where requests are granted two ticks after they are issued is more robust than one in which requests are answered ninety ticks after they are issued.
 In this work we manage to distill a \emph{robustness preference relation} that is induced by a given $\omega$-regular language. 
The robustness preference relation is a semantic notion (agnostic to the given representation of the language) that relies on Wagner's hierarchy and on Ehlers and Schewe's definition of natural rank of infinite words. 
 It aligns with our intuitions on common examples, satisfies some natural mathematical criteria, and
 refines Tabuada and Neider's five-valued semantics 
 into an infinite domain.
\end{abstract}

\section{Introduction}\label{sec:intro}

A temporal logic formula classifies execution paths into two: the ones that satisfy  it and the ones that do not.
In various settings a finer classification is needed.
Various quantitative formalisms can be used to specify which among two satisfying (resp. violating) executions {is} better (resp. worse). 
To qualify \emph{better}, a natural notion is that of \emph{robustness}.
Loosely speaking, a system is \emph{robust} if it can resist more disturbances and faults and still satisfy the property.
For instance, consider a property requiring a surface to endure a weight of 60 Kg.
Among two systems that satisfy it the one that endures a larger weight is more robust.
Among two systems that violate it, still the one that endures a larger weight is preferable.
In this example the nature of robustness is tied to the quantitative nature of weight, a similar quantitative aspect is present in many other studied notions of robustness.
We, on the other hand, are interested in the robustness preference relation that seems to be induced from an $\omega$-regular property itself, e.g. from a plain temporal logic formula in, say, LTL.

\vspace{-3mm}
\subparagraph*{Notions of robustness}
Various works have considered problems related to robust verification and synthesis.
In particular, robustness issues have been tackled in the context of timed systems, hybrid systems and cyber-physical systems~\cite{HenzingerR00,DeWulftDMR08,JaubertR11,BouyerMS15,FinkbeinerFKK22}.
 The nature of robustness studied in these works relates to the inherent quantitative nature of these systems, e.g. realtime values of the atomic signals and/or the continuous (non-discrete) nature of time, and is thus very different from the notion of robustness we are looking for. 
 
 Focusing on $\omega$-regular languages in the Boolean (aka qualitative) setting, numerous works
 considered robustness from the angle of assume-guarantee \cite{KupfermanV97,Ehlers11,BloemGHKK12,BloemCGHHJKK14,BloemCES22}.
 Considering robustness from a view that relates to cost of errors and edit distance metrics was pursued in~\cite{FiliotMRST20} 
 which suggested using weighted transducers for modeling deviations and faults by which one would like to measure the robustness of systems.  Working with weighted transducers provides a general way to model deviations, but it leaves the task of defining how to model deviations to an outside source.
 These and other works on quantitative reasoning assume the automata or temporal logic are equipped with some numeric values reflecting  weights or costs~\cite{ChechikED01,DrosteG07,KL07,FaellaLS08,ChatterjeeDH10,AlurFT12,AlurDDRY13,BokerCHK14,BouyerGM14,AlmagorBK16,Kupferman22}. We are interested in a way to seamlessly obtain a robustness measure from a typical $\omega$-regular specification (e.g. given by a temporal logic formula or an $\omega$-automaton).

\vspace{-3mm}
\subparagraph*{The desired notion of robustness}
It seems there is a natural notion of robustness that can be inferred from an $\omega$-regular property 
without additional means. We find that in many properties recurring in the literature, when comparing two words, it is obvious which should be considered more robust wrt the specification at hand. For example, in properties capturing response time one always considers words where the waiting time is shorter as more robust. Our main goal in the work presented here is to manage to distill the natural robustness preference relation that an $\omega$-regular language induces. 

A great inspiration for us is the work of Tabuada and Neider~\cite{TabuadaN16} which proposed \rLTL, a multi-valued refinement of the semantics of the temporal logic \LTL.
Indeed, in terms of syntax the logic \rLTL\ is identical to \LTL; the difference is in the semantics. That is,
in contrast to the above mentioned works, rLTL captures robustness that relates solely to a plain (non-quantified) LTL specification, and does not require additional inputs from the user. 
 As a motivating example Tabuada and Neider give the following example.

\begin{example}[Basic TN]\label{ex:TN}
    Consider the formula $\ltlG a$ (always $a$) over the alphabet $\Sigma=\{a,b\}$.
    The word $w_1=a^\omega$ satisfies it, and the words  $w_2=bbba^\omega$, $w_3=(ab)^\omega$, $w_4=aaab^\omega$, $w_5=b^\omega$ do not. Still, our intuition tells us that $w_2$ falsifies it less than $w_3$, that $w_3$ falsifies it less than $w_4$, and that $w_4$ less than $w_5$.
Indeed, comparing $w_2$ and $w_3$, we note that $w_2$ satisfies it all along except in the first three ticks whereas in $w_3$ it is violated infinitely often.
Similar intuitions entail that $w_3$ falsifies it less than $w_4$ since in the former infinitely often the desired signal $a$ holds, whereas in the latter from some point onward $a$ no longer holds. 
Similar reasoning warrant that  $w_4$ falsifies it less than $w_5$, in which it is constantly violated. 
\end{example}

The formal explanation of
 these intuitions that 
$w_i$ is more robust (wrt $\ltlG a$) than $w_{i+1}$ for $1\leq i < 5$ considers the formulas $\varphi_1=\ltlG a$ (always $a$), $\varphi_2=\ltlF\ltlG a$ (eventually always $a$), $\varphi_3=\ltlG\ltlF a$ (always eventually  $a$), $\varphi_4=\ltlF a$ (eventually  $a$), $\varphi_5=\ltlG \neg a$ (never  $a$). 
Comparing the words $w_1,w_2,w_3,w_4,w_5$ we have that $w_i$ satisfies $\varphi_i$ yet $w_i$ does not satisfy $\varphi_j$ for $j<i$. To capture this Tabuada and Neider propose a $5$-valued semantics, where the $5$ values, roughly speaking, correspond to these five representative formulas.
This idea has been extended to various other temporal logics~\cite{NayakNRZ22,NeiderWZ22,MuranoNZ23} and is the baseline for the notion we seek for. 

Inspired by these intuitions, we look for a preference relation on words, induced by an $\omega$-regular language $L$ that is refined into an infinite-value semantics, in the following way. 
\begin{example}[Refined TN]\label{ex:refine-TN}
    Using rLTL we can distinguish the satisfaction of 
    $w_1,...,w_5$ above wrt $L_{\ltlG a} = \ltlG a$.
    Since 
indeed $w_i\models \varphi_i$ and $w_i \not\models  \varphi_j$ for $j<i$.
However, we cannot distinguish 
$w_3'=(a^{1000}b a^{999})^\omega$ from $w''_3=(aba)^\omega$ since both satisfy $\varphi_3$ and do not satisfy $\varphi_1$ or $\varphi_2$.
But our intuition says that $w'_3$ is more robust than $w''_3$ since  only one change in $2000$ ticks is required to make $w'_3$ satisfy $\ltlG a$ whereas one change in every $3$ ticks is required to make $w''_3$ satisfy $\ltlG a$.
Note that Tabuada \& Neider's approach regards the non-periodic, aka \emph{transient} part of the word and deems $w_2=bbba^\omega$ less robust than $w_1=a^\omega$. We would like to make this distinction more refined and  
comparing 
$w'_2=b^{2}a^\omega$ and $w''_2=b^{100}a^\omega$ declare that  $w'_2$ is more robust than $w''_2$ (wrt $L_{\ltlG a}$).
Similarly, we would like to distinguish
$w'_4=a^{10000}b^\omega$ from $w''_4=a^{3}b^\omega$ and say  that (wrt $L_{\ltlG a}$) $w'_4$ is more robust than $w''_4$. 
\end{example}

\subparagraph*{Distilling the robustness preference relation}
Our aim is to develop a notion of robustness that matches the above intuitions (and some natural mathematical criteria) and is derived \emph{solely} from a given $\omega$-regular language $L$. 
The idea is to \emph{distill} from $L$ itself a value function $\valrbst_L$ that given a lasso word returns  a value that captures  
its \emph{robustness value} wrt $L$.
The values are 
triples where the first component simply reflects the acceptance of the word in the language, and the second and third components provide the robustness of the periodic and transient parts of the word. 
Note that the order matches the significance: obviously, acceptance is the most significant, and the robustness value of the period outweighs the robustness value of the spoke (the transient part). 
But what is the robustness value of a period or spoke of a lasso word?
Below we try to give some intuition, focusing on completely periodic words to simplify the matter. 

In the languages $L_{\ltlG a}$ and $L_{\infty a}$ (requiring infinitely many $a$'s) periodic words  with more $a$'s in the period are more robust. How can we lift this intuition to the general case? 
Consider the language $\infty aa$ requiring infinitely many occurrences of $aa$ over $\Sigma=\{a,b,c\}$. Words with many occurrences of $aa$ are more robust than those with few. Accordingly, the word $(caa)^\omega=(caacaa)^\omega$ is  more robust than $(cccaac)^\omega$ since in the first $aa$ occurs twice in a period of 6 letters, and in the second only once. 
In the language $\neg \infty bb$ requiring $bb$ occurs finitely often, words with fewer occurrences of $bb$ are more robust. In the language $L=\infty aa \wedge \neg\infty bb$ occurrences of $aa$ increase the robustness, and occurrences of $bb$ decrease the robustness. Thus, loosely speaking, we would like to somehow associate with each letter a grade reflecting how it contributes to the robustness of the word (in a good way, in a bad way, or neutrally). 
Using green (resp. red) underlines to mark the letters that increase (resp. decrease) the robustness of the word we get the following wrt $L$: 
$(ccca$\grl{$a$}$)^\omega$,  $(cccb$\rdl{$b$}$)^\omega$, and $(ca$\grl{$a$}$ca$\grl{$a$}$cb$\rdl{$b$}$)^\omega$.  
Once each letter has a color reflecting whether it increases or decreases the robustness of the word, the robustness value of the period can somehow aggregate and average these values. For instance, it can be $\frac{1}{l}(\textsc{g}-\textsc{r})$ where $l$ is the length of the period, and $\textsc{g}$ and $\textsc{r}$ are the number of green and red letters in the period, resp. The actual definition is more involved, but this is the rough idea.

The challenge is to find a semantic way to color individual letters of a word (wrt a language). The semantic definitions borrow intuition from the Wagner hierarchy~\cite{Wagner75} and rely
on the notion of natural rank of an infinite word~\cite{EhlersS22}.\footnote{The term used in \cite{EhlersS22} is \emph{natural colors} but we want to reserve the word \emph{color} for the colors (green, red, etc.) of letters as mentioned above.} The latter notion has been extended to provide a natural rank for an infix of a period of an infinite word~\cite{BohnL24}. We start by providing a further extension that provides ranks for individual letters of an infinite word. The color of the $i$-th letter is then derived from the ranks of the $(i{-}1)$-th and $i$-th letters.

To define letter ranks we introduce the semantic notion of \emph{dominant-suffixes}. 
From the latter we entail a definition of equivalence classes $\equiv^\vigor$ and $\equiv^\truerobustness$ that yield a definition of what we denote $\aut{P}^\truerobustness_L$ and term \emph{the robustness parity automaton} for $L$. 
The size of $\aut{P}^\truerobustness_L$ may be exponential in the size of the minimal parity automaton for $L$. This is expected since it conveys more information, and all state-of-the-art canonical representations for $\omega$-regular languages have the same lower bound blowup.\footnote{The canonical representations known so far for $\omega$-regular languages are: the syntactic FORC~\cite{MalerS97}, the recurrent FDFA~\cite{AngluinF16}, the limit FDFA~\cite{LST23}, the colorful FDFA~\cite{FismanGZ24} and the precise parity automaton~\cite{BohnL24}.}

Due to space restrictions, some proofs, examples and figures are deferred to the appendix.

\vspace{-2mm}
\section{Preliminaries}\label{sec:prelim}
\vspace{-1mm}
We use $\mathbb{B}$ for the set $\{\true,\false\}$ or $\{0,1\}$. 
For $i\leq j$, we use $[i..j]$ to denote the set $\{i,i{+}1,...,j\}$.
Given a word $v=\sigma_1\sigma_2...\sigma_m$ we use $v[i..j]$ for the infix of $v$ starting at $\sigma_i$ and ending in $\sigma_j$, inclusive;
and $v[..i]$ (resp. $v[i..]$) for the prefix (resp. suffix) of $v$ that ends (resp. starts) in $\sigma_i$,  inclusive. 
We use parenthesis instead of brackets if we want to exclude $i$ and/or $j$. E.g. $v[2..5)=\sigma_2\sigma_3\sigma_4$ and $v(2..5)=\sigma_3\sigma_4$. 
We use $x\prec y$ to denote that $x$ is a prefix of $y$.

The set of all infinite words over $\Sigma$ is denoted $\Sigma^\omega$. A word $w\in\Sigma^\omega$ is termed a \term{lasso word} if $w=u(v)^\omega$ for some $u\in\Sigma^*$ and $v\in\Sigma^+$.
We refer to $u$ as the \term{spoke} of the lasso, and to $v$ as its \term{period}.
A pair $(u,v)$ for $u\in\Sigma^*$ and $v\in\Sigma^+$ is a \term{representation} of the lasso word $u(v)^\omega$. Note that a lasso word $uv^\omega$ has many representations. E.g. $w=aa(ba)^\omega$ can be represented as $(aa,ba)$, $(a,ab)$, $(aabab,abab)$ etc.

A \term{deterministic $\omega$-automaton} is a tuple $\aut{A}=(\Sigma, Q, q_0, \delta,\alpha)$ consisting of an alphabet $\Sigma$, a finite set $Q$ of
states, an initial state $q_0\in Q$,  a transition function ${\delta: Q \times \Sigma \rightarrow Q}$, and an acceptance condition $\alpha$.
We refer to the first four components of an automaton as the \term{automaton-structure}.
A run of an automaton is defined in the usual manner.
An $\omega$-automaton accepts a word if the run on that word is accepting. There are various $\omega$-acceptance conditions in the literature (e.g. parity, Rabin, Streett, Muller); all are defined with respect to the set of states visited infinitely often during a run. For a run $r=q_0q_1q_2...$ we define $\mathterm{\inf(r)}= \{ q \in Q ~|~ \forall i\!\in\!\mathbb{N}.\ \exists j\!>\!i.\ q_j=q\}$. 

The  $\omega$-acceptance condition we work with is parity.
    A \term{parity} acceptance condition is a mapping $\kappa:Q\rightarrow [i..j]$ of the states to a number in a given bounded set $[i..j]$, which we refer to as \emph{ranks} (and we refer to $\kappa(q)$, for a state $q$, as the \term{rank} of $q$). 
    For a subset $Q'\subseteq Q$, we use $\kappa(Q')$ for the set $\{\kappa(q)~|~q\in Q'\}$. A run  $r$ of a parity automaton is accepting if the minimal rank in $\kappa(\inf(r))$ is even.
We use DPA   for deterministic parity  automata. 
All languages mentioned in the paper are assumed to be $\omega$-regular, i.e. accepted by DPA.

We use $\sema{\aut{A}}$ to denote the set of words accepted by a given automaton $\aut{A}$. For $u\in\Sigma^*$, we use $\aut{A}(u)$ for the state we reach on reading $u$ from the initial state. For $q\in Q$, we use $\aut{A}_q$ for $(\Sigma,Q,q,\delta,\alpha)$, namely the automaton obtained from $\aut{A}$ by making $q$ the initial state.
We associate with $\aut{A}$ its automaton graph $G=(V,E)$
where $V=Q$ and 
$E=\{(v,v')~|~v'\in\delta(v,\sigma)$ for some 
$\sigma \in \Sigma\}$. 
A subset $C$ of vertices (i.e., states) is termed a \term{strongly connected component} (SCC) if there is a non-empty path between any pair of vertices in $C$. 
 An SCC is termed a 
\term{maximal strongly connected component} (MSCC) if there is no SCC 
$C'\supsetneq C$. Note that a run of an automaton $\aut{A}$ on a word $w$ must eventually stay within a single MSCC (i.e. visit no state outside of this MSCC). We term such an MSCC the \term{final} MSCC of $w$ wrt $L$.

The \term{syntactic right congruence relation} for an $\omega$-language $L$ relates two finite words $x$ and $y$ if there is no infinite suffix $z$ differentiating them, that is, for $x,y\in\Sigma^*$, $x\sim_L y$ if $ \forall z\in\Sigma^\omega (xz\in L \iff yz \in L)$. We use  $[u]_{\sim{L}}$ (or simply $[u]$) for the equivalence class of $u$ induced by $\sim_L$.
A right congruence $\sim$ can be naturally associated with an automaton structure $( \Sigma, Q, q_0, \delta )$ as follows: the set of states $Q$ are the equivalence classes of $\sim$. The initial state $q_0$ is the equivalence class $[\epsilon]$. The transition function $\delta$ is defined by $\delta([u],\sigma)=[u\sigma]$.  We use $\mathterm{\aut{A}[{\sim}]}$ to denote the automaton structure induced by $\sim$.

\vspace{-2mm}
\section{Requirements for the robustness preference relation}\label{sec:requirements-from-robustness}
\vspace{-1mm}
Given an $\omega$-regular language $L$, we would like to define a preference relation on $\omega$-words, denoted $\gtrbst_L$, such that $w_1 \gtrbst_L w_2$ conveys that $w_1$ is more robust than $w_2$ with respect to $L$. We can do so by defining a value function $\valrbst_L:\Sigma^\omega\to\mathbb{T}$  and then defining $w \geqrbst_L w'$ if $\valrbst_L(w)\geq \valrbst_L(w')$. 
In the following we try to formalize the intuitions and  requirements for this definition.

\begin{requirement}[Reflect disturbances wrt  $L$]\label{req:respect}
First $\geqrbst_L$ has to \emph{respect} $L$. Meaning, words in $L$ should be preferred over words not in $L$.
Second, when comparing two words $w$ and $w'$ that are both in $L$, roughly speaking, we would like $w$ to be more robust than $w'$ if more changes are required to make $w$ not in $L$ (compared to $w'$). If both  $w$ and $w'$ are not in $L$ we would like $w$ to be more robust if fewer changes are required to make it in $L$. 
For example, with respect to $L_{\infty a}$, both $a^\omega$ and $(ab)^\omega$ are in the language.
    We expect $a^\omega$ to be preferred over $(ab)^\omega$ since intuitively, in the former there are more options to deviate (by taking an action different from $a$) and still satisfy the property.
\end{requirement}

\begin{requirement}[Refine Tabuada and Neider's idea~\cite{TabuadaN16}]\label{req:TN}
We would like the relation to refine the idea behind the  $5$-valued semantics of rLTL~\cite{TabuadaN16} into an infinite domain. 
In particular, following \autoref{ex:refine-TN}, wrt $L_{\ltlG a}$ we require that for every $i<j$ the relation should satisfy
\quad
    $\bullet\ \ b^i(a)^\omega \gtrbst_L b^{j}(a)^\omega$ \quad 
    $\bullet\ \ (b^ia)^\omega \gtrbst_L (b^ja)^\omega$ \quad 
    $\bullet\ \ (ba^j)^\omega \gtrbst_L (ba^i)^\omega$.
\end{requirement}

\begin{requirement}[Convey response time]\label{req:res-time}
Wrt the common LTL property $L=\ltlG (a \to \ltlF b)$ over $\Sigma=\{a,b,c\}$   we would like $(ac^ib)^\omega \gtrbst_{L} (ac^{j}b)^\omega$ for every $i<j$, since the response time in the former is shorter. 
For more intuition on this see \autoref{ex:responset_time} and \autoref{ex:no-redundant-grants} in the appendix.
\end{requirement}

\begin{example}[Infinitely often]\label{ex:infty-ab}
In a broader sense, if there is a requirement that some sequence of events occurs infinitely often, we prefer words where it occurs as often as possible.
Consider $\Sigma=\{a,b\}$ and the language $L_{\infty ab}$ requiring that infix $ab$ occurs infinitely often. The words $(ab)^\omega$ and $(ba)^\omega$ should be the most robust, and more robust than $(aabb)^\omega$, for instance. 
\end{example}

\begin{requirement}[Sensitivity to both the periodic and transient parts of the word]\label{req:period-vs-spoke}
Considering a lasso word 
$u(v)^\omega$, the periodic part $v$ is more significant than $u$ since $v$ occurs infinitely many times in the word whereas $u$ occurs just once. 
For this reason many aggregators considered in analysis of infinite words~\cite{BokerCHK14} 
such as $\limsup$, $\liminf$, $\limsupavg$, etc. essentially ignore the transient part. 
An exception is $\discsum$ (discounted sum) which gives more weight to early events and thus is affected more by the transient part. The latter is definitely not suitable here, since we'd like the periodic part to be more significant. But neither are the formers, since they completely disregard the transient part, 
whereas if two words agree on the period  we would like the spoke to determine which one is more robust.
This is the case for instance in \autoref{ex:refine-TN} where wrt $\ltlG a$ we expect $b^i(a)^\omega \gtrbst_L b^{j}(a)^\omega$ if $i<j$.
\footnote{We note 
that in \cite{FismanGW23} the well known notion of normalized edit distance on finite words ($\ned$) was generalized to a \emph{normalized edit distance on infinite words} ($\omeganed$). 
While being defined as a function between a pair of words (rather than as a relation or value function induced from a given language) it does capture some of the intuitions discussed above, in particular measuring how many letters one should change to transform a word into another one. 
However, since it is based on the notion of $\limsup$ 
it 
completely disregards the spoke of the word, contrary to the classification of Tabuada and Neider, and its refinement sought for here.
}
\end{requirement}

\begin{requirement}[Duality]\label{req:duality}
Robustness is sometimes used when considering words (resp. languages) that satisfy (resp. are subsumed by) the given language $L$ and sometimes when considering words (resp. languages) that falsify (resp. are not subsumed by) the given language $L$. It is important for us that the defined robustness preference relation would refine both accepted and rejected words, and would do so in a symmetrical manner. Ideally, it should satisfy $w_1 \geqrbst_{L} w_2$ iff $w_1 \leqrbst_{\overline{L}} w_2$ where $\overline{L}$ is the complement language of $L$.
\end{requirement}

\begin{requirement}[Satisfaction bar]\label{req:bar-score}
As per Req.~\ref{req:TN}, with respect to $L=L_{\ltlG a}$ we require $w_1{=}(a)^\omega\gtrbst_L w_2{=}bbb(a)^\omega\gtrbst_L w_3{=}(ba)^\omega\gtrbst_L w_4{=}aaa(b)^\omega\gtrbst_L w_5{=}(b)^\omega$. 
What if we consider $L_{\infty a}$ 
instead? We still expect the same order between these words. However, we would like to convey that for $L_{\ltlG a}$, the word $w_1$ is the only one satisfying it, whereas for $L_{\infty a}$ all words $w_i$ for $i\leq 3$ satisfy it. Using $\valrbst_L(w)$, the robustness value associated with word $w$ wrt  $L$, we require there is a clear separation  between values of accepted and rejected words. Specifically, we demand $\valrbst_L(w){\geq} \zero$ iff $w{\in} L$ (where $\zero$ is a special element of the value domain $\mathbb{T}$).
\end{requirement}

\begin{requirement}[Avoid debts]
\label{ex:diff-priorities-for-red-yellow-green-cycles}
Consider the alphabet $\{a,b\}$ and the language $L_{a\text{-seq}}=\infty a \wedge (\infty aa \rightarrow \infty aaa)$ 
requiring infinitely many $a$'s and demanding infinitely many $aaa$'s if there are infinitely many $aa$'s.
We expect 
 $(ab)^\omega \gtrbst_{L_{a\text{-seq}}} (aaab)^\omega$  
since after taking $a$ it is safer to take a $b$ than to take an $a$ which requires taking a third $a$ immediately after; or more accurately, seeing $aa$ infinitely often requires seeing $aaa$ infinitely often.
We expect the most preferred word wrt $L_{a\text{-seq}}$ to be $a^\omega$ since in debt terms we are in debt only once, and there is no word that creates zero debts. 
{A perhaps simpler example for the rationale of avoiding debts is given in \autoref{ex:avoid-debts}.}
\end{requirement}

The challenge in obtaining such a relation is that we require it to be derived solely from the given $\omega$-regular language, without relying on a specific 
representation of it. That is, we want a \emph{semantic} rather than a \emph{syntactic} definition of the robustness preference relation.

\vspace{-4mm}
\section{Towards the definition of robustness preference relation}\label{sec:toward}
\vspace{-1mm}
Next we would like to obtain a formal definition of the robustness value with respect to a given language. We require our definition to be semantic, namely derived from the language itself rather than from a particular representation of the language.
To this aim we  inspect our examples with respect to the \emph{inclusion measure} of the Wagner hierarchy~\cite{Wagner75}. The inclusion measure is a semantic property --- it holds on all $\omega$-automata recognizing the same language, regardless if they are minimal, and regardless of their type (parity, Rabin, Streett, Muller, etc.).
Note that for all (deterministic) $\omega$-automata types the acceptance of a word $w$ is determined using some condition on $\inf(\rho_w)$, where $\rho_w$ is the run of the automaton on $w$. Thus, given an SCC $S$ we can classify it as \term{accepting} (resp. \term{rejecting}) if $\inf(\rho_w)=S$ implies $w$ is accepted (resp. rejected).
The \term{inclusion measure} considers the SCCs of the automaton graph (all SCCs, not just MSCCs), and counts the maximal number of SCCs in an inclusion chain of SCCs $S_1 \subset S_2 \subset ... \subset S_k$ where $S_i$ is accepting iff $S_{i+1}$ is rejecting for $1\leq i < k$, a so called \emph{alternating inclusion chain}.  Wagner has shown that for a given language $L$ this measure is the same for all $\omega$-automata recognizing $L$~\cite{Wagner75}, irrespective of the number of states or the automaton type~\cite{PerrinPinBook}, and is thus a semantic property. If the largest inclusion chain of an automaton for $L$ is $k$ and the innermost SCC is accepting (resp. rejecting), we say that $L\in\DMpk{k}$ (resp. $L\in\DMnk{k}$). If  $L\in\DMpk{k}$ and  $L\in\DMnk{k}$ we say  $L\in\DMpmk{k}$.

\autoref{fig:aut-for-examples-parity} shows the inclusion measure on parity automata (DPA) for three  examples.
We can see that  $\aut{P}_{1}\in\DMnk{2}$, $\aut{P}_{2}\in\DMpmk{1}$ and $\aut{P}_{3}\in\DMnk{4}$. 
Looking at $\aut{P}_{1}$ for $L_{\infty a}$ and at $\aut{P}_{2}$ for $L_{\ltlG a}$ and considering the words from \autoref{ex:refine-TN} and the preferences  
$\bullet\ \ b^i(a)^\omega \gtrbst_L b^{j}(a)^\omega$ \quad 
    $\bullet\ \ (b^ia)^\omega \gtrbst_L (b^ja)^\omega$ \quad 
    $\bullet\ \ (ba^j)^\omega \gtrbst_L (ba^i)^\omega$ \quad
    for $i<j$
from \autoref{req:TN} we can see that words that spend less time on the rejecting component (or reach it fewer times) are preferred over words that spend more time on the rejecting components (or reach it more times). The reader is invited to check that this is the case in all other discussed examples. 

Recall that we would like to  color the letters of the word to reflect whether they increase or decrease the robustness.
If we consider the ranks of the DPA  $\aut{P}_3$ for $L_{a\text{-seq}}$ we can see that letters on transitions that decrease the rank influence robustness. 
If they reach or stay in an even rank, they increase the robustness and so we would like to color them \col{green}. If they reach or stay in an odd rank they decrease robustness, and so we would like to color them \col{red}.
We can color letters on transitions that increase 
the rank \col{yellow} (for neutral effect on robustness).
The problem is that there may be different DPAs for a given language, even different minimal ones that yield different colors, as can be seen in 
\autoref{fig:dpa-infty-ab} (left three DPAs, discussed in more length in \autoref{sec:computing-the-score}). This conflicts with our wish to obtain a definition that is agnostic to the given representation. To overcome this, building on natural ranks for infinite words, we will first provide a semantic definition for the natural rank of an \emph{individual letter} of an infinite word.

\begin{figure}
\begin{center}
\scalebox{0.65}{
\begin{tikzpicture}[->,>=stealth',shorten >=1pt,auto,node distance=1.8cm,semithick,initial text=, initial left]

\node[state]    (e1)      {0};
\node[state]    (e0)   [below left of=e1]         {1};
\node[state]    (e2)  [below right of=e1]   {1};

\node[label] (qeL) [above left of=e1, node distance=1.25cm] {$\aut{P}_1:$};

\path (e0) edge [pred, loop below] 
           node   {$a$}
      (e0); 
\path (e0) edge [pgreen, bend left] 
           node   {$b$}
      (e1); 
\path (e1) edge [pyellow, bend left] 
           node   {$b$}
      (e2); 
\path (e2) edge [pgreen, bend left] 
           node   {$a$}
      (e1); 
\path (e1) edge [pyellow, bend left] 
           node   {$a$}
      (e0); 
\path (e2) edge [pred, loop below] 
           node   {$b$}
      (e2); 

\node[state]    (f1) [right of=e1, node distance=5.5cm]    {1};
\node[state]    (f0)   [below right of=f1]            {1};
\node[state]    (f2)  [below left of=f1]   {0};

\node[label] (qfL) [above left of=f1, node distance=1.25cm] {$\aut{P}_2:$};

\path (f0) edge [pred, loop below] 
           node   {$b$}
      (f0); 
\path (f0) edge [pred, bend right] 
           node   {$a$}
      (f1); 
\path (f1) edge [pred, loop above] 
           node   {$a$}
      (f1); 
\path (f1) edge [pgreen, bend right] 
           node [left]  {$b$}
      (f2); 
\path (f2) edge [pyellow, bend right] 
           node  [right] {$a$}
      (f1); 
\path (f2) edge [pyellow, bend right=45] 
           node   {$b$}
      (f0); 

\node[state]    (h1)  [right of=f1, node distance=5.5cm]   {1};
\node[state]    (h0)   [below right of=h1]            {1};
\node[state]    (h2)  [below left of=h1]   {0};

\node[label] (qhL) [above left of=h1, node distance=1.25cm] {$\aut{P}_3:$};

\path (h0) edge [pred, loop below] 
           node   {$b$}
      (h0); 
\path (h0) edge [pred, bend right] 
           node   {$a$}
      (h1); 
\path (h1) edge [pred, loop above] 
           node   {$a$}
      (h1); 
\path (h1) edge [pgreen, bend right] 
           node   {$b$}
      (h2); 
\path (h2) edge [pyellow, bend right=45] 
           node   {$a,b$}
      (h0); 

\node[state]    (j0)  [right of=h1, node distance=5.5cm]   {1};
\node[state]    (j1)   [below left of=j0, node distance=1.8cm]            {1};
\node[state]    (j2)  [below right of=j0, node distance=1.8cm]   {1};
\node[state]    (j3)  [below  of=j1, node distance=1.4cm]   {0};
\node[state]    (j4)  [below  of=j2, node distance=1.4cm]   {0};

\node[label] (qjL) [above left of=j0, node distance=1.25cm] {$\aut{P}^\truerobustness_{L_{\infty ab}}:$};

\path (j0) edge  [pred]
           node [above]  {$a$}
      (j1); 
\path (j0) edge  [pred]
           node  [above] {$b$}
      (j2); 
\path (j1) edge [pred, loop left] 
           node    {$a$}
      (j1); 
\path (j2) edge [pred, loop right] 
           node    {$b$}
      (j2); 
\path (j1) edge  [pgreen]
           node   [left] {$b$}
     (j3);
\path (j2) edge  [pgreen]
           node   [right] {$a$}
      (j4);
\path (j4) edge  [pgreen, bend right]
           node   [above] {$b$}
      (j3);
\path (j3) edge  [pgreen]
           node  [below]  {$a$}
      (j4);
\path (j4) edge  [pyellow, bend right]
           node  [right] {$a$}
      (j1);
\path (j3) edge  [pyellow, bend left]
           node  [left]  {$b$}
      (j2);

\end{tikzpicture}}
\end{center}
\vspace{-9mm}
\caption{Parity automata $\aut{P}_1,\aut{P}_2,\aut{P}_3,\aut{P}_{L_{\infty ab}}^\truerobustness$ for $L_{\infty ab}$.}\label{fig:dpa-infty-ab} 
\end{figure}
\vspace{-4mm}

\subsubsection*{Natural ranks of $\omega$-words and their infixes (necessary background)} 
\vspace{-1mm}
There is a tight correlation between the minimal number of ranks used by a DPA, and the inclusion measure of the language~\cite{NiwinskiW98,CartonM99,PerrinPinBook}. 
Based on this correlation Ehlers and Schewe have managed to define the semantic notion of \term{natural ranks of infinite words}~\cite{EhlersS22}. 
Given an $\omega$-regular language $L\subseteq \Sigma^\omega$ 
and a word $w\in\Sigma^\omega$, its \emph{natural rank} is a natural number in a range $\{0,1,...,k\}$ that reflects the minimal (state-)rank it can receive in a DPA for $L$ whose (state-)ranks are as low as possible.
We use $\mathterm{\wordrank_L(w)}$ for the natural rank of $w$ wrt $L$. 

We briefly outline the core intuitions behind natural ranks of infinite words and refer unfamiliar readers to~\cite{EhlersS22} for a more comprehensive treatment. The definition of a natural rank of a word, follows the intuition of Niwinski and Walukiewicz in their construction for minimizing 
\begin{wrapfigure}[4]{r}{0.14\textwidth}
\vspace{-8mm}
\begin{center}
\scalebox{0.65}{
\begin{tikzpicture}[->,>=stealth',shorten >=1pt,auto,node distance=2.0cm,semithick,initial text=, initial above]

\node[state, minimum size=2ex]   (q0)  {$q$};

\path (q0) edge [loop left] 
           node   {$b | 3$}
      (q0); 
\path (q0) edge [loop above] 
           node   {$ab|2$}
      (q0); 
\path (q0) edge [loop right] 
           node   {$aab|1$}
      (q0); 
\path (q0) edge [loop below] 
           node   {$aaab|0$}
      (q0);       
\end{tikzpicture}}
\end{center}
\end{wrapfigure}
the number of ranks in a
parity automaton~\cite{NiwinskiW98}. 
The observation is that if a DPA requires $n$ ranks then one can
reveal in its transition graph a structure resembling a flower, in the sense that it has a center state $q$ from which 
there are at least $n$ loops (forming the petals of
the flower) such that the minimal ranks visited along the different petals form
a sequence $c_1 < c_2 < ... < c_n$ of ranks with alternating parity. For instance,
in $\aut{P}_3$ of \autoref{fig:aut-for-examples-parity} there is a flower with center state $3$ and petals $b$, $ab$, $aab$, $aaab$ with ranks $3$, $2$, $1$, $0$, respectively. This flower structure is useful for understanding the natural ranks as we explain next.

The following notations are needed for the formal definition of a natural rank. A word $z\in\Sigma^*$ is said to be a \emph{suffix invariant} of $u\in\Sigma^*$ with respect to $L$ if $uz\sim_L u$. That is, no suffix distinguishes between $u$ and the word obtained by concatenating $z$ to $u$. 
A word $w'$ is \term{obtained from $w=\sigma_1\sigma_2\sigma_3...$ by an infinite series of injections} of suffix invariant words $z_1,z_2,z_3,...$ at positions $i_1,i_2,i_3,...$ if $w'=\sigma_1\sigma_2... \sigma_{i_1}z_1 \sigma_{i_1+1}\sigma_{i_1+2}... \sigma_{i_2}z_2\sigma_{i_2+1}...$, where $z_j$ is suffix invariant wrt $\sigma_1...\sigma_{i_j}$.
We use $\mathterm{\inj_I(w)}$ for the set of words $w'$ obtained from $w$ by such an infinite series of injections at positions in $I=i_1,i_2,...$.  The injections are the semantical way to capture the said flower structure. When such a flower structure exists with center state $q$, reached by word $u$, and $v_1,v_2,...,v_k$ are petals, then $uv_i\sim_L u$ for all $i\in [1..k]$ and if $w=u(v_i)^\omega$ then $w'=u(v_iv_j)^\omega$ is in $\inj_I(w)$ for all $j\in[1..k]$ and the respective $I$.

The formal definition of a natural rank is inductive, starting with $0$ and $1$ at the base (where $1$ can also be obtained in the inductive step). Loosely speaking, a word $w$ has natural rank $0$ if it is in $L$, as are all words $w'\in\inj_I(w)$ for some $I\subseteq \mathbb{N}$. Similarly, a word $w$ has natural rank $1$ if it is not in $L$ and there exists $I\subseteq \mathbb{N}$ such that 
all $w'\in\inj_I(w)$ either have rank $0$ or are not in $L$.
Inductively, a word has natural rank $k$ if  there exists $I\subseteq \mathbb{N}$ such that $w$ is in $L$ (resp. not in $L$) and $k$ is even (resp. odd) and all words $w'\in\inj_I(w)$ have a smaller natural rank or are also in (resp. not in) $L$, and there exists at least one word $w'\in\inj_I(w)$ that has natural rank $k{-}1$. 
Below we give a slightly modified definition, with two additional base cases $-2,-1$. Their importance will be made clear in the sequel.
\vspace{-2mm}
\begin{definition}[Natural rank of an infinite word~\cite{EhlersS22}, slightly altered]\label{def:inf-nat-rank-minuses}We define the \term{natural rank} of an infinite word $w$ wrt $L$, denoted $\mathterm{\wordrank_L(w)}$ (or simply $\wordrank(w)$ when $L$ is clear from the context)
as follows. 
\vspace{-2mm}
{
{
$$\mathterm{\wordrank_L(w)}=\left\{\begin{array}{r@{\quad}l}
-2 & \text{if } \exists u\prec w. \forall z\in \Sigma^\omega.\ uz \in L
\\
-1 & \text{else if } \exists u\prec w. \forall z\in \Sigma^\omega.\ uz\notin L
\\
0 & \text{else if } w\in L \text{ and } \exists I\subseteq \mathbb{N} \text{. } \forall w'\in\inj_I(w).\ w'\in L
\\
1 & \text{else if } w\notin L \text{ and } \exists I\subseteq \mathbb{N} \text{. } \forall w'\in\inj_I(w).\ w'\notin L 
\\
i & \text{else if } i = \min\left\{j~\left|~\begin{array}{l} 
j \text{ is even iff } w\in L \text{ and } \exists I\subseteq \mathbb{N}.  \\
\forall w'{\in}\inj_I(w).\ (\wordrank_L(w')< j \text{ or } \\ \phantom{-----....} w'\in L \iff w \in L) 
\end{array}\right.\right\}
\\
\end{array}
\right.$$
}}
\end{definition}

\vspace{1mm}
\begin{example} 
    On the examples whose DPA are given in \autoref{fig:aut-for-examples-parity}, the following holds. 
\end{example}

\begin{center}
{\small
\begin{minipage}{0.25\textwidth}
\setlength{\tabcolsep}{1.6pt}
\hspace{5pt}\begin{tabular}{l@{\quad}r@{\ }|@{\ }c}
& word $w$ & $\wordrank_{L_{\infty a}}(w)$\\
\cline{2-3}
 & $(b)^\omega$   &  $1$\\ 
  &  $aa(b)^\omega$   &  $1$\\ 
 & $(a)^\omega$   &  $0$\\ 
 &   $(ab)^\omega$   &  $0$\\ 
\end{tabular}
\vspace{2mm}
\end{minipage}
\begin{minipage}{0.25\textwidth}
\setlength{\tabcolsep}{1.6pt}
\begin{tabular}{l@{\quad}r@{\ }|@{\ }c}
& word $w$ & $\wordrank_{L_{\ltlG a}}(w)$\\
\cline{2-3}
 & $(a)^\omega$   &  $0$\\ 
 &   $b(a)^\omega$   &  $-1$\\ 
 & $(ab)^\omega$   &  $-1$\\ 
 &   $(b)^\omega$   &  $-1$\\ 
\end{tabular}
\vspace{2mm}
\end{minipage}
\begin{minipage}{0.44\textwidth}
\setlength{\tabcolsep}{1.6pt}
\begin{tabular}{l@{\quad}r@{\ }|@{\ }c}
& word $w$ & $\wordrank_{L_{a\text{-seq}}}(w)$\\
\cline{2-3}
  &      $(b)^\omega$   &  $3$\\ 
   & $(ab)^\omega$   &  $2$\\ 
 & $(aab)^\omega$   &  $1$\\ 
 & $(a)^\omega,(aaab)^\omega$   &  $0$\\ 

\end{tabular}
\vspace{2mm}
\end{minipage} 
}
\end{center}
\vspace{-2mm}

See \autoref{remark-exists-I} on the usage of the \emph{existence} of a sequence $I$ of injection points in
\autoref{def:inf-nat-rank-minuses}.

Note that the natural ranks alone do not suffice to obtain the desired robustness preference order. E.g. wrt $L_{\infty a}$ all words $(a)^\omega$, $bbb(a)^\omega$ and $(ab)^\omega$ have the same natural-rank $0$ while we want to distinguish them and have $(a)^\omega \gtrbst_{L_{\infty a}} bbb(a)^\omega \gtrbst_{L_{\infty a}} (ab)^\omega$. We need a way to capture that $(a)^\omega$ stays forever within the accepting part of its SCC, whereas $bbb(a)^\omega$ spends some time at the subsumed rejecting SCC and $(ab)^\omega$ crosses to a subsumed rejecting component over and over.

To this aim we would like to associate a rank to the letters in a word, so that we can  distinguish letters that remain in the same accepting SCC and letters that cross from an accepting SCC to a subsumed rejecting one. In a sense, we  ask for any $k\in\mathbb{N}$ whether taking the $k$-th letter is desirable, where we view the $k$-th letter as part of a loop that the entire $\omega$-word traverses infinitely often. We 
build on
the definition of a natural 
rank of a period starting with $v$ wrt a given spoke $u$ as the maximal rank that such a period can get if it closes a loop on the spoke $u$~\cite{BohnL24}, with some adaptation to non-negative and immaterial ranks.  

\begin{definition}[Natural rank of an infix~\cite{BohnL24}, slightly altered]\label{def:infix-rank}
Let $u\in\Sigma^*$, $v\in\Sigma^*$. 
Let $L_{u.v}=\{u(vz)^\omega~|~vz\in\Sigma^+, uvz \sim_L u \}$.
The \term{rank of a period starting with $v$ following a spoke $u$}, denoted 
$\mathterm{\infixrank_L(u,v)}$, is
\vspace{-2mm}
\[
\mathterm{\infixrank_L(u,v)}=\left\{
    \begin{array}{l@{\ \ \ }l}
       -2  & \textrm{if }  \forall w\in\Sigma^\omega.\ uvw \in L  \\
       -1  & \textrm{if } \forall w\in\Sigma^\omega.\ uvw \notin L  \\
       \max\ \{ \wordrank_L(w)~|~w\in L_{u.v} \}  &
    \text{if } L_{u.v}\neq\emptyset \\
        \infty & \textrm{otherwise} \textrm{ (if } L_{u.v}=\emptyset \textrm{)}
    \end{array}
\right.
\]
\vspace{-3mm}
\end{definition}

\begin{example}
Wrt $L_{a\text{-seq}}$ we have that $\infixrank(\varepsilon,b)=3$ as witnessed by $z=\epsilon$. $\infixrank(\varepsilon,a)=2$ as witnessed by  $z=b$. $\infixrank(\varepsilon,aa)=1$ as witnessed by  $z=b$. And, $\infixrank(\varepsilon,aaa)=0$ as witnessed by  $z=\epsilon$.
Note that $\infixrank$ is more informative than $\wordrank$ since $(a)^\omega=(aa)^\omega=(aaa)^\omega=a^\omega$, hence $\wordrank$ cannot distinguish these. Yet the different sequences of $a$'s have different $\infixrank$. 
\end{example}

\begin{remark}\label{rem:L-mod-2}
    In general, the rank of an infix depends on the equivalence class of the spoke $u$. In $L_{a\text{-seq}}$ there is a single equivalence class so this is not apparent. 
An example for a language with two equivalence classes is
    $L_{\text{mod}2}=\infty a~\vee~(|w|_{a}$ is even $\wedge~\neg \infty c)~\vee~(|w|_{a}$ is odd $\wedge~\neg \infty b)$ over $\Sigma=\{a,b,c\}$ where $|w|_a$ is the number of occurrences of the letter $a$ in $w$.
    \autoref{ex:infix-rank-mod-2} provides an analysis of  $\infixrank$ for $L_{\text{mod}2}$.
\end{remark}

\begin{restatable}[]{remark}{rmrkinfixnoincrese}\label{rmrk:infix-no-increse}
Note that reading a longer prefix of the period, its rank can only decrease, that is, $\infixrank_L(u,vy)\leq\infixrank_L(u,v)$. Moreover, if $u\sim_L xy$ then also $\infixrank_L(x,yv)\leq \infixrank_L(u,v)$~\cite{BohnL24}. 
\end{restatable}

\section{Defining the robustness value}\label{sec:define-rbst}

Recall that we want to associate with each letter of the word a color that reflects whether it increases or decreases the robustness value. Consider the language $L_{\infty a}$ and the word $w=ab(bbaabba)^\omega$. We expect to get that there are three greens in the period. 
Trying to derive the color of the $k$-th letter from the rank of an infix including the $k$-th letter, we encounter the following problem. 
If the $k$-th letter is $b$ it will be ranked $1$ if the part of the period up to $k$ has no $a$'s and $0$ otherwise.  Accordingly, for $w=ab(bbaabba)^\omega$ the ranks of the period $bbaabba$
will be $1100000$ which yield $5$ green letters while there are only three $a$'s in the period. The problem occurs since ranks along a period are monotonically non-increasing (as per \autoref{rmrk:infix-no-increse}), and so once an $a$ is observed in the period all subsequent letters are ranked $0$ and their role in robustness is lost.

Hence, in order to get some meaningful information on the current letter we have to forget the past from time to time, and 
assume the period has recently started even if it started a long while ago.
That is, we want to "reset" the calculation of ranks from time to time. If we reset in very long intervals, we still get the effect that the sequence of ranks from the last reset point is not very informative regarding late indices. If we reset very often we may get a result that is higher than the true rank of the index we are currently looking at. 
The challenge is thus to find the best places to make a reset, so that the observed rank provides the correct information regarding the robustness of the desired index. 

Put otherwise, for each index $k$ we'd like to determine what is the \term{dominant suffix} of $w[..k]$,  the suffix wrt which we should judge the contribution of the current index to the robustness. If it is $w[j..k]$ then $j$ is our reset point. For instance, in the language $\infty abcd$ the dominant suffix of index $k$ would be a non-empty prefix of $abcd$ if such a prefix is a suffix of $w[..k]$. Otherwise, it would be its last letter, $w[k]$. 
For $L_{a\text{-seq}}$, if $w[..k]$ is in $\Sigma^*b$ it suffices to know that it ends with a $b$, and so the dominant suffix is $b$. If $w[..k]$ ends with $a^i$, then if $i\geq 3$ it suffices to know that there is a suffix of three $a$'s, but we definitely do not want to shorten a suffix of three $a$'s to just one $a$. So the dominant suffix should be $aaa$ if $i\geq 3$; $aa$ if $i{=}2$; and $a$ if $i{=}1$. We turn to develop the notions that will serve us in defining this formally.

\subsection{Defining natural ranks of letters}
Considering some infix of an infinite word, whose rank is, say $d$, we can say that its last index is \term{influential} if there exists a suffix of the given infix for which the last index makes a difference, namely it decreases the rank
to $d$.
Formally, this is defined as follows where $\mathterm{\infixrank_L(w,j,k)}$ is used as a shorthand for $\infixrank_L(w[..j),w[j..k])$

\begin{definition}[Influential index]\label{def:influential}
Let $w\in\Sigma^*\cup\Sigma^\omega$ and $L\subseteq \Sigma^\omega$.  
We say that index $k$ of $w$ is \term{influential} wrt index $i<k$ (and $L$) if there  exists  $j\in[i..k]$ such that
$$\begin{array}{l}
\infixrank_L(w,j,k) < \infixrank_L(w,j,k{-}1)\qquad \text{ and } \qquad
\infixrank_L(w,j,k) = \infixrank_L(w,i,k).
\end{array}$$ 
\end{definition}

For instance in the language $\infty abc$, the $k$-th letter of $w$ is influential iff $w[k{-}2..k]=abc$. 
In  $L_{a\text{-seq}}$ the $k$-th letter is influential iff it is $a$. See calculations in \autoref{ex:influential}.

We turn to define the dominant suffix and the dominant index, the index from which the dominant suffix begins. If the $k$-th index is influential, namely it decreases the 
{rank} with regard to some suffix, it suffices to take the shortest suffix that has the same 
{rank} as the entire infix $w[i..k]$. This is the case for instance in $L_{a\text{-seq}}$ where we'd like the dominant suffix of $bba$ to be $a$ and of $bbaaaaa$ to be $aaa$.

If the $k$-th index is not influential, then the {infix rank} of all suffixes $w[j..k]$ is the same as the  {infix rank} of the respective suffix $w[j..k)$.
This is the case e.g. in $L=\infty abcd$ and $w[i..k]=abc$ where the rank of all the suffixes $abc$, $bc$, $c$ is $1$.
When the $k$-th letter is non-influential, the  {infix rank} of $w[k]$ can be higher than that of the entire infix $w[i..k]$ which may include influential letters, and so we should reset to a suffix $w[j..k]$ with the 
{rank} of $w[k]$. E.g. in $L_{a\text{-seq}}$ and $w[i..k]=baab$ the rank of $baab$ is $1$ whereas the rank of $w[k]=b$ is $3$ which is higher.
Here we reset to the 
{shortest} such suffix that has all necessary information about the letters read so far that might be meaningful in the future, in order not to lose information.  E.g. in $L_{a\text{-seq}}$ and $baab$ it would be $b$; in $\infty abcd$ for $abc$ it would be $abc$ and for $cba$, it is $a$.
The formal definition makes use of the intermediate notion of a \emph{predominant-suffix}.

\begin{definition}[(Pre)dominant suffix/index]\label{def:dom-ind}
Let $w\in\Sigma^*\cup\Sigma^\omega$. 
Let $i \leq k \leq|w|$.
\begin{enumerate}
    \item If $k$ is influential wrt $i$ then the \emph{shortest} suffix of $w[i..k]$ for which  
    $\infixrank_L(w,j,k)=\infixrank_L(w,i,k)$ is its \term{predominant-suffix} as well as its \term{dominant-suffix}.
    \item Otherwise, the \emph{longest} suffix $w[l..k]$ of $w[i..k]$ for which $\infixrank_L(w,l,k)=\infixrank_L(w,k,k)$ is its \term{predominant-suffix}. The \emph{shortest} suffix of $w[l..k]$ for which $\infixrank_L(w,j,k)=\infixrank_L(w,l,k)$ and $\infixrank_L(w[..j),w[j..k]\cdot v )=\infixrank_L(w[..l),w[l..k]\cdot v)$ for all $v\in\Sigma^*$ is its \term{dominant-suffix}.
\end{enumerate}
The first index of the dominant (resp. predominant) suffix is termed the dominant (resp. predominant) \term{index} and is denoted $\mathterm{\domind_L(w,i,k)}$ and $\mathterm{\predomind_L(w,i,k)}$.
When $w\in\Sigma^*$ we use $\mathterm{\domind_L(w,i)}$ as a shorthand for $\domind_L(w,i,|w|)$ and similarly for predominant.
\end{definition}

From this definition we can associate with each index $k$ an index $j_k \leq k$ that is the \emph{reset point} for $k$. Intuitively, the natural rank of letter $k$ (as we shortly define) 
can be computed from the infix $w[j_k..k]$ rather than the entire prefix $w[..k]$.

\begin{definition}[Reset points]\label{def:ref-indices}
Let $w\in\Sigma^*\cup\Sigma^\omega$ and $k$ a natural number satisfying $k\leq|w|$.
We define the \term{reset point} of $k$ wrt $w$ and $L$, denoted $\mathterm{\reset_w(k)}$ inductively as follows.
For the base case $\reset_w(0)=0$. For $k>0$, we let $\reset_w(k)=\domind_L(w,\reset_w(k{-}1),k)$.
\end{definition}

The following claim asserts that computing the dominant index of $k$ wrt 
any $i < \reset_w(k-1)$ is the same as wrt $\reset_w(k-1)$.

\begin{restatable}[]{myclaim}{lemds}\label{lem:ds}\label{cor:domind}
    $\domind_L(w,i,k)=\domind_L(w,\reset_w(k{-}1),k)$
    for all $k\in\mathbb{N}$ and  $i\leq\reset_w(k{-}1)$.
    Thus in particular, $\domind_L(w,0,k)=\domind_L(w,\reset_w(k{-}1),k)=\reset_w(k)$.
\end{restatable}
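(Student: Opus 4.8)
The plan is to prove the first equation by induction on $k$, and then obtain the "in particular" clause as the special case $i=0$. Fix $w$ and $L$. For $k=0$ there is nothing to prove (the range $i\leq \reset_w(-1)$ is empty, or we take the convention $\reset_w(-1)=0$ so $i=0$ trivially). For the inductive step, fix $k\geq 1$ and let $r = \reset_w(k-1)$. By \autoref{def:ref-indices} we have $\domind_L(w,r,k) = \reset_w(k)$ by definition, so the content of the claim is that $\domind_L(w,i,k) = \domind_L(w,r,k)$ for every $i\leq r$. The key tool is \autoref{rmrk:infix-no-increse}: along a fixed spoke, reading a longer prefix of the period can only decrease the infix rank, so for $i\leq j\leq k$ the quantity $\infixrank_L(w,j,k)$ is monotone as we shorten the spoke portion, and in particular $\infixrank_L(w,i,k) = \infixrank_L(w,r,k)$ for all $i\leq r$ once $r$ is itself a dominant (hence reset) index — this is exactly the stabilization property the reset point is designed to provide.

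First I would establish the following auxiliary fact: if $j^\star := \domind_L(w,r,k)$, then $r \leq j^\star$ and $\infixrank_L(w,j,k)$ is constant (equal to $\infixrank_L(w,r,k) = \infixrank_L(w,j^\star,k)$) for all $j$ in the interval $[i..r]$, for every $i\leq r$. The first inclusion $r\leq j^\star$ is immediate because the (pre)dominant suffix of $w[r..k]$ is a suffix of $w[r..k]$, so its starting index is $\geq r$. For the constancy, note $r$ is itself a dominant index, namely $r=\domind_L(w,\reset_w(k-2),k-1)$; in case~1 of \autoref{def:dom-ind} it is the shortest suffix realizing the rank, and in case~2 the dominant suffix is required to satisfy $\infixrank_L(w[..j),w[j..k{-}1]\cdot v) = \infixrank_L(w[..l),w[l..k{-}1]\cdot v)$ for all $v\in\Sigma^*$; taking $v = w(k{-}1..k]$ (the single letter $w[k]$) gives that extending the spoke past $r$ versus past any earlier index $i\leq l \leq r$ yields the same infix rank at position $k$. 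Combined with \autoref{rmrk:infix-no-increse} (monotonicity of $\infixrank_L(u,v)$ under $u\sim_L xy$), this pins down $\infixrank_L(w,i,k)=\infixrank_L(w,r,k)$ for all $i\leq r$.

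Next, with this stabilization in hand, I would check that the two-case construction of \autoref{def:dom-ind} depends on $i$ only through the value $\infixrank_L(w,i,k)$ and the set of infix-ranks $\{\infixrank_L(w,j,k) : i\leq j\leq k\}$ of suffixes lying between $i$ and $k$ — and by the auxiliary fact, on the interval $[i..r]$ these are all equal to $\infixrank_L(w,r,k)$, so nothing new is contributed by indices in $[i..r)$. Concretely: the notion "$k$ is influential wrt $i$" quantifies over $j\in[i..k]$, but any witness $j\in[i..r)$ would force $\infixrank_L(w,j,k) = \infixrank_L(w,r,k)$, so a witness exists in $[i..k]$ iff one exists in $[r..k]$; hence $k$ is influential wrt $i$ iff it is influential wrt $r$. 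In case~1, the shortest suffix realizing $\infixrank_L(w,i,k)=\infixrank_L(w,r,k)$ is the same whether the ambient infix is $w[i..k]$ or $w[r..k]$, since it is in fact a suffix of $w[r..k]$ (were it longer, its starting index $<r$ would contradict minimality via the auxiliary fact). In case~2, both the longest suffix $w[l..k]$ with $\infixrank_L(w,l,k)=\infixrank_L(w,k,k)$ and the subsequent shortest dominant suffix have starting indices $\geq r$ (again because the candidates in $[i..r)$ all have the stabilized rank $\infixrank_L(w,r,k)$, which, being a dominant index's rank, equals $\infixrank_L(w,k,k)$ only when $l\geq r$ already — one should verify $\reset$ points have this coherence, which follows from the inductive hypothesis applied at $k-1$). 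In every case $\domind_L(w,i,k)=\domind_L(w,r,k)$.

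The main obstacle I anticipate is the second, more delicate equality defining the dominant suffix in case~2 of \autoref{def:dom-ind} — the condition $\infixrank_L(w[..j),w[j..k]\cdot v)=\infixrank_L(w[..l),w[l..k]\cdot v)$ for \emph{all} $v\in\Sigma^*$. Showing this "future-robust" condition is insensitive to replacing the starting index $i$ by $r$ requires arguing that the extra letters $w[i..r)$ never alter the infix rank of any extension, which is precisely what the reset point construction guarantees inductively but must be threaded carefully through \autoref{rmrk:infix-no-increse} and the definition of $\reset_w$. I would isolate this as a standalone sublemma ("for all $v$, $\infixrank_L(w,i,k)\cdot v$-extensions agree for $i\leq r$") proved by the same induction, so that the main argument reduces to bookkeeping over the two cases.
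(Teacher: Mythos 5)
Your overall plan (induction on $k$, transfer of influentiality, case analysis over \autoref{def:dom-ind}) has the same shape as the paper's proof, but two things go wrong at the crux. First, your auxiliary fact is misstated: the parenthetical equality $\infixrank_L(w,r,k)=\infixrank_L(w,j^\star,k)$ with $j^\star=\domind_L(w,r,k)$ is false whenever $k$ is \emph{non}-influential wrt $r=\reset_w(k{-}1)$, because then the dominant suffix carries the rank of the last letter alone, which is typically strictly higher than the rank from $r$. Concretely, for $L_{a\text{-seq}}$ and $w=abbaaaaab\ldots$ (Table~1 of the paper) with $k=9$ one has $r=\reset_w(8)=6$, $\infixrank_L(w,6,9)=\infixrank_L(\varepsilon,aaab)=0$, while the dominant suffix of $w[..9]$ is $b$ with rank $3$.

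Second, and more seriously, the part of the auxiliary fact you actually need --- that $\infixrank_L(w,j,k)$ is constant on $[i..r]$, and that the case-2 data (predominant index and the ``for all $v$'' agreement) are insensitive to replacing $i$ by $r$ --- is exactly the content of the claim, and your justification does not establish it. Invoking the case-2 condition of the dominant suffix of $w[..k{-}1]$ with $v=w[k]$ only relates the dominant index to the \emph{predominant} index of $w[..k{-}1]$, not to arbitrary indices $i\leq r$; and when $k{-}1$ is influential there is no such condition at all (case~1 only asserts minimality of a suffix with a given rank), so that branch is left without an argument. The paper's proof handles precisely these situations by a further case split: if $k{-}1$ is influential it uses the strict increase $\infixrank_L(w,k,k)>\infixrank_L(w,\reset_w(k{-}1),k{-}1)$ to place the new dominant index above $r$, and if both $k$ and $k{-}1$ are non-influential it argues via the predominant index of $w[..k{-}1]$ that the predominant index of $k$ is bounded by $\reset_w(k{-}1)$ while the dominant index is at least $\reset_w(k{-}1)$. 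In your proposal these steps are exactly the ones deferred (``one should verify\dots'', ``standalone sublemma\dots proved by the same induction''), so what remains is an outline whose essential lemma is both partly false as stated and unproven where it is true.
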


If $v{\in}\Sigma^*$, $\mathterm{\domsuf_L(v)}$ and $\mathterm{\predomsuf_L(v)}$ stand
    for
    $v[r..]$ and $v[r'..]$ resp.
    where $r{=}\reset_v(|v|)$ and $r'{=}\predomind_L(v,0)$.    

Using \autoref{lem:ds} on the definition of dominant suffix we obtain: 

\begin{corollary}\label{cor:domsuf}
     $\domsuf_L(\domsuf_L(x){\cdot}\sigma)=\domsuf_L(x\sigma)$ for every $x\in\Sigma^*$ and $\sigma\in\Sigma$.
\end{corollary}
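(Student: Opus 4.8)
The plan is to unfold both sides through \autoref{def:ref-indices} and \autoref{lem:ds}, reduce the statement to the single claim that the dominant-index computation at the last position of $x\sigma$ may legitimately ignore everything before the reset point $r:=\reset_x(|x|)$, and then derive that claim from the defining clause of the dominant suffix. The case $x=\varepsilon$ is immediate, so fix $x\in\Sigma^+$, $\sigma\in\Sigma$, write $m=|x|$ and $r=\reset_x(m)$, so that $\domsuf_L(x)=x[r..m]$. First I would record the easy fact that for every $w$ and every $k\le|w|$ the value $\reset_w(k)$ depends only on the prefix $w[..k]$ (induction on $k$: the conditions in \autoref{def:influential} and \autoref{def:dom-ind} refer only to values $\infixrank_L(w,j,k)=\infixrank_L(w[..j),w[j..k])$ with $j\le k$ and to universally quantified continuations). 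In particular $\reset_{x\sigma}(m)=\reset_x(m)=r$, so $\domsuf_L(x)\cdot\sigma$ occurs inside $x\sigma$ precisely as the infix $(x\sigma)[r..m{+}1]$.

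Next I would express both sides as a suffix cut at one dominant index. On the left, $\domsuf_L(x\sigma)=(x\sigma)[\reset_{x\sigma}(m{+}1)..]$, and by \autoref{def:ref-indices} together with \autoref{lem:ds}, $\reset_{x\sigma}(m{+}1)=\domind_L(x\sigma,r,m{+}1)$. On the right, writing $z:=\domsuf_L(x)\cdot\sigma$ as a \emph{standalone} word, \autoref{lem:ds} gives $\domsuf_L(z)=z[\domind_L(z,0,|z|)..]$. Since $z$ is exactly the infix $(x\sigma)[r..m{+}1]$, position $p$ of $z$ is position $(r{-}1){+}p$ of $x\sigma$, so the corollary is equivalent to the identity
\[
\domind_L(x\sigma,\,r,\,m{+}1)\;=\;(r-1)+\domind_L\bigl(\domsuf_L(x)\cdot\sigma,\;0,\;|\domsuf_L(x)\cdot\sigma|\bigr).
\]
In words: running \autoref{def:influential}/\autoref{def:dom-ind} on the infix $\domsuf_L(x)\cdot\sigma$ yields the same cut whether one views this infix as embedded in $x\sigma$ (so every $\infixrank_L$-query carries the extra spoke-prefix $x[..r)$) or standalone (empty spoke). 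Applying \autoref{rmrk:infix-no-increse} in both directions shows $\infixrank_L(u,v)$ depends on $u$ only through $[u]_{\sim_L}$, so it suffices to prove $x[..j)\sim_L(x\sigma)[r..j)$ for every $j\in[r..m{+}1]$, i.e.\ that prepending $x[..r)$ to any prefix of $\domsuf_L(x)\cdot\sigma$ does not change its right congruence class; for $j=m$ this already encodes the class-preservation property $x\sim_L\domsuf_L(x)$ of dominant suffixes.

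The hard part will be exactly this spoke-irrelevance claim. I would prove it by induction along the nondecreasing chain $\reset_x(0)\le\reset_x(1)\le\cdots\le\reset_x(m)$: at each step $t$ the case-2 clause of \autoref{def:dom-ind} states that the split point $\reset_x(t)$ is, for \emph{all} continuations $v$, $\infixrank_L$-interchangeable with the predominant split point $\predomind_L(x,\reset_x(t{-}1),t)$, and \autoref{cor:domind} is what lets these local guarantees be chained so that the behaviour at $\reset_x(m)=r$ matches that of the empty spoke. The genuine obstacle is upgrading these per-$\infixrank_L$-value guarantees into a statement robust enough to survive an \emph{entire} $\domind_L$ computation rather than a single query — this is the same kind of chaining used in the proof of \autoref{lem:ds}, so in the write-up I would either reuse that proof's invariant directly or first strengthen \autoref{lem:ds} to record the needed spoke-independence; everything else is bookkeeping with the index shift between $x\sigma$ and $\domsuf_L(x)\cdot\sigma$.
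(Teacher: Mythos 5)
Your reduction is where the proposal breaks. You reduce the corollary to the claim that dropping the discarded prefix $x[..r)$ never changes the right-congruence class of the spokes, i.e.\ $x[..j)\sim_L(x\sigma)[r..j)$ for all $j$ in the window (with $m=|x|$, $r=\reset_x(m)$), and in particular to the ``class-preservation property $x\sim_L\domsuf_L(x)$''. That property is false. Take $L_{\text{mod}2}$ of \autoref{rem:L-mod-2} (its $\infixrank$ values are tabulated in \autoref{ex:infix-rank-mod-2}) and $x=ab$: running \autoref{def:influential}--\autoref{def:ref-indices} gives $\reset_{ab}(2)=2$, hence $\domsuf_{L_{\text{mod}2}}(ab)=b$, yet $ab\not\sim_{L_{\text{mod}2}}b$ (odd versus even number of $a$'s; the tail $c^\omega$ separates them). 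Consequently the individual $\infixrank_L$ queries in the two computations you want to match really do return different values, e.g.\ $\infixrank_{L_{\text{mod}2}}(ab,c)=2$ whereas $\infixrank_{L_{\text{mod}2}}(b,c)=1$; even the case split differs, since the last position of $abc$ is non-influential wrt the reset point $2$ while the last position of the standalone word $bc$ is influential --- and yet the final cut agrees, $\domsuf_{L_{\text{mod}2}}(abc)=\domsuf_{L_{\text{mod}2}}(bc)=c$. So no argument that equates the embedded and the standalone computations query-by-query can succeed; a correct proof must show directly that the resulting dominant index coincides even though intermediate ranks and influentiality can differ. On top of this, you explicitly leave the chaining step open (``reuse the invariant of \autoref{lem:ds} or strengthen it''), so even apart from the false lemma the proposal is a plan rather than a proof.

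For comparison, the paper gives no separate argument at all: it reads the corollary off \autoref{lem:ds}, whose ``in particular'' clause yields $\reset_{x\sigma}(m{+}1)=\domind_L(x\sigma,\reset_{x\sigma}(m),m{+}1)$ with $\reset_{x\sigma}(m)=\reset_x(m)=r$, and it treats the identification of this windowed computation inside $x\sigma$ (from position $r$ on) with the standalone computation on $\domsuf_L(x)\cdot\sigma$ as immediate. You correctly identified that this identification is where the actual content lies, because the spokes of the $\infixrank_L$ queries differ between the two computations; but the bridge you chose, $\sim_L$-invariance of the spokes, is unavailable, and the example above shows that any correct bridge has to tolerate genuinely different query values on the two sides.
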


We are now ready to define the natural rank of the $k$-th letter of an infinite word $w$. It  is the rank of the dominant suffix of $w[..k]$.
See \autoref{ex:nat-rank-calcs} for some examples.

\begin{definition}[Natural rank of a letter]\label{def:letter-rank}
Let $w\in\Sigma^*\cup\Sigma^\omega$ and $k\leq|w|$.
The \term{natural rank of the $k$-th letter} of $w$, denoted $\mathterm{\letterrank_L(w,k)}$, is $\infixrank_L(w,\reset_w(k),k)$. For a finite word $v\in\Sigma^+$ we use   $\letterrank_L(v)$ as an abbreviation for $\letterrank_L(v,|v|)$.
\end{definition}

The following theorem states a correlation between the minimal natural rank of the letters that occur infinitely often and the natural rank of the entire infinite word.

\begin{restatable}[Natural rank word/letters correlation]{theorem}{thmnatrankoflettersandwords}\label{thm:natrank-of-letters-and-words}
    Let $L \subseteq \Sigma^\omega$ and $w\in\Sigma^\omega$.
    For $i\in\mathbb{N}$ let  $n_i=\letterrank_L(w,i)$.
    Then $\wordrank_L(w)=d$ if and only if $\min \inf (n_1,n_2,...) = d$.
\end{restatable}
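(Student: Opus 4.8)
The plan is to prove the two implications by relating the inductive structure of $\wordrank_L(w)$ (Definition \ref{def:inf-nat-rank-minuses}) to the infix ranks witnessing the letter ranks. First I would set up notation: write $\reset_w(0),\reset_w(1),\ldots$ for the reset points and recall from Claim \ref{lem:ds} that $\reset_w(k)=\domind_L(w,0,k)$, so each letter rank $n_k=\letterrank_L(w,k)=\infixrank_L(w,\reset_w(k),k)$ is computed with respect to a canonical reset. The key structural observation I would establish is a \emph{subsumption/flower} correspondence: for a lasso word $w=u(v)^\omega$, the reset points eventually lie inside the period in a periodic fashion, and the indices $k$ with $n_k\le d$ that recur infinitely often correspond exactly to suffixes of (iterates of) $v$ that, when injected at the appropriate spoke, witness the clauses of Definition \ref{def:inf-nat-rank-minuses} for rank $d$. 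Concretely, an infix $w[\reset_w(k)..k]$ of infix-rank $d$ means there is a lasso word $u'(v'z)^\omega\in L_{u'.v'}$ of word-rank $d$ closing a loop compatible with $w$'s spoke, and conversely.

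For the direction "$\min\inf(n_i)=d \implies \wordrank_L(w)=d$": I would argue by induction on $d$. The base cases $d\in\{-2,-1\}$ come from the first two clauses of both \autoref{def:inf-nat-rank-minuses} and \autoref{def:infix-rank}, which are literally the same safety/co-safety conditions, plus the fact that $\infixrank$ takes value $-2$ (resp.\ $-1$) on some index iff some prefix of $w$ already forces membership (resp.\ non-membership). For $d\ge 0$: since infinitely many letters have rank $\ge d$ (i.e.\ no letter rank drops below $d$ past some point), every infix $w[\reset_w(k)..k]$ past that point has infix-rank $\ge d$; using \autoref{rmrk:infix-no-increse} (ranks of infixes only decrease as you extend the period) together with \autoref{cor:domsuf} (the dominant-suffix operation is "associative"), I can stitch these infixes into an injection sequence $I$ showing that all $w'\in\inj_I(w)$ have $\wordrank_L(w')\ge d$ or agree with $w$ on membership, giving $\wordrank_L(w)\ge d$. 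For the matching upper bound, infinitely many letters have rank exactly $d$, and each such occurrence furnishes, via its dominant suffix, a concrete word $w'\in\inj_I(w)$ with $\wordrank_L(w')=d-1$ (or $=d$ when $d\le 1$), which is exactly the "there exists at least one $w'$ of rank $d-1$" requirement; combined with the parity bookkeeping ($d$ even iff $w\in L$, which holds because the infix rank records membership correctly) this forces $\wordrank_L(w)\le d$, hence $=d$. The converse direction "$\wordrank_L(w)=d \implies \min\inf(n_i)=d$" I would get by running essentially the same correspondence backwards: the witnessing set $I$ for $\wordrank_L(w)=d$, refined so its points are among the reset points (legitimate since injection points must be suffix-invariant, and reset points are chosen to preserve exactly the relevant infix-rank data — this is where \autoref{def:dom-ind} clause 2's "for all $v\in\Sigma^*$" condition pays off), shows no letter rank recurring infinitely often is below $d$, while the guaranteed rank-$(d-1)$ injection shows some letter rank equals $d$ infinitely often.

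The main obstacle I expect is the bookkeeping that the \emph{reset points} faithfully track the natural-rank recursion without ever "resetting too aggressively" (overestimating the current letter's rank) or "too conservatively" (letting relevant past letters decay, as in the $w=ab(bbaabba)^\omega$ example in \autoref{sec:define-rbst}). Making this precise amounts to showing that $w[\reset_w(k)..k]$ is, for infix-rank purposes and for all future extensions, interchangeable with $w[..k]$ — i.e.\ a strengthening of \autoref{cor:domind} to a statement about future continuations, which is precisely the clause "$\infixrank_L(w[..j),w[j..k]\cdot v)=\infixrank_L(w[..l),w[l..k]\cdot v)$ for all $v$" baked into the definition of dominant suffix. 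I would isolate this as an auxiliary lemma: for every $k$ and every $v\in\Sigma^*$, $\infixrank_L(w[..\reset_w(k)),w[\reset_w(k)..k]\cdot v)=\infixrank_L(w[..k],v)$ up to the $\sim_L$-class of the spoke, and then the theorem follows by applying it along the infinitely many recurring reset configurations of the lasso and matching clause-by-clause with \autoref{def:inf-nat-rank-minuses}.
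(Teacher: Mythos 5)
Your plan does share the paper's skeleton (induction on $d$, base cases $-2,-1,0,1$, building injection sequences from the recurring rank-$d$ indices, using \autoref{rmrk:infix-no-increse}), but it hinges on an auxiliary lemma that is false. You propose that $w[\reset_w(k)..k]$ is interchangeable with the full prefix for all futures, i.e.\ $\infixrank_L(w[..\reset_w(k)),w[\reset_w(k)..k]\cdot v)=\infixrank_L(w[..k],v)$ for all $v$. Take $L_{\infty ab}$ and $w=bbaab\ldots$ as in the paper's Table~2: at $k=4$ we have $\reset_w(4)=4$, so the claim reads $\infixrank_L(bba,a\cdot v)=\infixrank_L(bbaa,v)$; for $v=b$ the left side is $0$ (every period starting with $ab$ yields infinitely many $ab$'s) while the right side is $1$ (witnessed by the period $b$, giving $bbaab^\omega$). \autoref{def:dom-ind}(2) only guarantees future-equivalence of the dominant suffix with the \emph{predominant} suffix, not with the whole prefix --- indeed the whole point of resetting is that such interchangeability fails once influential letters lie in the forgotten part. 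Relatedly, your rank bookkeeping misuses \autoref{def:inf-nat-rank-minuses}: showing that all $w'\in\inj_I(w)$ have $\wordrank_L(w')\ge d$ does not yield $\wordrank_L(w)\ge d$ (the definitional clause asks for rank $<j$ or membership agreement, and a lower bound would require quantifying over \emph{all} $I$ and all smaller admissible $j$), and the claimed construction of an injected word of rank exactly $d-1$ from a rank-$d$ letter "via its dominant suffix" is never substantiated and tracks the informal prose rather than the formal $\min$-definition.

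Two further divergences from the paper. First, your stitching tacitly assumes the chosen rank-$d$ indices can be taken with non-overlapping reset windows; the paper makes this explicit by extracting a subsequence $J$ with $\reset_w(j_{n+1})\ge j_n$ and must treat the case that no infinite such $J$ exists separately, using \autoref{clm:if-only-non-inlf-indices-then-max-rank} (an all-non-influential tail forces the maximal rank) together with \autoref{clm:reading-not-from-epsilon-decreasing-rank} (prepending a word can only lower a letter rank) to control the letter ranks of the injected words before invoking the induction hypothesis. Your plan has no analogue of this case split or of these two claims. Second, you frame the structural analysis around lasso words and periodic behaviour of reset points, but the theorem (and the paper's proof) is for arbitrary $w\in\Sigma^\omega$, and periodicity of resets is itself unproven (and risks circularity with the later vigor-DPA results that are derived \emph{from} this theorem). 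Finally, your separate converse direction, including refining an arbitrary witnessing set $I$ so its points land on reset points, is both delicate and unnecessary: the paper proves only the implication $\min\inf(n_i)=d\Rightarrow\wordrank_L(w)=d$ and obtains the equivalence for free because both sides are well-defined functions over a finite range of values.
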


\subsection{Defining colors and scores of letters and infixes}
Now that each letter has its own rank, we can color the letters of the word following our intuitions from the Wagner hierarchy. 
Let $w=\sigma_1\sigma_2...$ be an infinite word.
We would like to give the $i$-th letter of the word a \emph{color} reflecting its transition between accepting/rejecting SCCs, and whether this is done in the direction of the inclusion chain or opposite to it. As discussed in the third paragraph of \autoref{sec:toward} considering ranks this suggest the following scheme.
Let $d=\letterrank_L(w,i)$ and 
$d'=\letterrank_L(w,i{+}1)$. We can \emph{try} saying that the $i{+}1$-th letter has color
\begin{inparaitem}[\textcolor{lipicsBulletGray}{$\bullet$}]
    \item 
    \term{green} if $d'$ is even and $d \geq d'$,
    \item \term{yellow} if $d < d'$,
    \item \term{red} if $d'$ is odd and $d \geq d'$.
\end{inparaitem}

But this coloring scheme doesn't yet match the intuition from all of our examples. Consider the language $L_{\ltlG a}$ and its automaton $\aut{P}_{2}$ (\autoref{fig:aut-for-examples-parity}). 
 We would like to say that $ba^\omega \gtrbst b^{5}a^\omega$  but there is no difference in the colors of their letters --- 
 in both words, after taking the first letter, we reach a rejecting sink, and accordingly the rest of the letters are red all along, whereas we expected the coloring to be \rdl{$b$}$($\grl{$a$}$)^\omega$ and \rdl{$bbbbb$}$($\grl{$a$}$)^\omega$. 
 
This is since reading the first $b$, the property is already violated.  
Namely the $b$ prefix cannot be recovered, even if the extension is $a^\omega$, namely there was just one fault at the beginning and then everything was okay.
For this reason we added the natural ranks $-1$ and $-2$ where $-1$ reflects that no extension of the read prefix is in the language, and $-2$ reflects that all extensions of the read prefix are in the language.
Using these ranks we define the color of a letter, building on the following notion of the \emph{forgetful-version} of a word.

\begin{definition}[Forgetful version of a word]
    For a finite or infinite word
    $w=\sigma_1\sigma_2...$ its \term{forgetful version} wrt $L$, denoted $\mathterm{\fg_L(w)}$ (or simply $\fg(w)$ when $L$ is clear from the context) is defined inductively as follows:
$\fg_L(w)=\sigma'_1\sigma'_2\sigma'_3...$  where
\begin{center}

$\begin{array}{l@{\ =\ }l@{\quad}l@{\ =\ }l}
\sigma'_1 & \left\{ 
\begin{array}{l@{\quad}l}
    \sigma_1 & \text{if } \letterrank_L(\sigma_1)\geq 0  \\
     \varepsilon & \text{otherwise }
\end{array}
\right. 
& 
\sigma'_{i+1} & \left\{ 
\begin{array}{l@{\quad}l}
    \sigma_{i+1} & \text{if } \letterrank_L(\sigma'_1\cdots\sigma'_i\cdot \sigma_{i+1})\geq 0  \\
     \varepsilon & \text{otherwise }
\end{array}
\right.
\end{array}$
\end{center}
\end{definition}
For instance, the forgetful version of both $ba^\omega$ and $b^{5}a^\omega$ wrt $L_{\ltlG a}$ is $a^\omega$ since $\wordrank(bw)=-1$  for every $w\in\Sigma^\omega$. While this seems to forget too much, the following revised definition of colors of letters will not forget these $b$'s, and will yield
\bll{$b$}$($\grl{$a$}$)^\omega$ and \bll{$bbbbb$}$($\grl{$a$}$)^\omega$ similar to our expectation (just using \col{black} rather than \col{red}).

\begin{definition}[Color of a letter]\label{def:letter-color}
Let $w=\sigma_1\sigma_2\cdots\in\Sigma^
\omega$. 
Let ${d}=\letterrank(\fg(w[..i))\cdot \sigma_i)$ and  ${d'}=\letterrank(\fg(w[..i])\cdot\sigma_{i+1})$.
The \term{$i{+}1$-th letter has color} $\theta(d,d')$ where $\theta(d,d')$ is  \vspace{-2mm}
\[
\begin{array}{@{\qquad\qquad}l@{\ }l@{\qquad}l@{\  }l@{\qquad}l@{\ }l}
     \text{\term{white}} & \text{if } d'=-2 &
     
     \text{\term{green}} &\text{if } d' \text{ is even and } d \geq d' &          
     
     \text{\term{yellow}} & \text{if } d < d' \\
     
     \text{\term{black}} & \text{if } d'=-1 & 
     
     \text{\term{red}}  & \text{if } d'  \text{ is odd and } d \geq d'
\end{array}
\]
\vspace{-2mm}
We use 
$\mathterm{\lcolor_w(i)}$  
for the color of the $i$-th letter in $w$ (wrt $L$ which is implicit). 
\end{definition}

Given an infix of an infinite word  we can count how many letters from each color it has.

\begin{definition}[Infix colors]\label{def:infix-scr}
Let $w\in\Sigma^\omega$ and consider its infix $w[j..k]$.
    Let $\mathterm{c_i}=\lcolor_w(i)$ be the color of the $i$-th letter for $i\in[j..k]$. Let
    \vspace{-2mm}
    \[\begin{array}{l@{\ =\ }l@{\ \ \ }l@{\ =\ }l@{\ \ \ }l@{\ =\ }l}
        \mathterm{\white} & |\{i\in[j..k]\colon c_{i} = \col{white}\}|
        &
        \mathterm{\green} & |\{i\in[j..k] \colon c_{i} = \col{green}\}|
         & \mathterm{\yellow} & |\{i\in[j..k] \colon c_{i} = \col{yellow}\}|
        \\
         \mathterm{\black} & |\{i\in[j..k] \colon c_{i} = \col{black}\}| 
         &
        \mathterm{\red} & |\{i\in[j..k] \colon c_{i} = \col{red}\}|\\
    \end{array}\]
We use 
$\mathterm{\icolors_w(j,k)}$, 
for the tuple $(\white,\green,\yellow,\red,\black)$ providing the number of letters of each color in the infix $w[i..j]$ (wrt $L$ which is implicit).
\end{definition}

Using the colors of an infix we can define the \emph{score of the infix}. To this aim we note that \col{green} and \col{red} letters have the opposite influence on the acceptance of the word. In a sense, they cancel each other, and so the difference between the number of \col{green}s and the number of \col{red}s is what matters.
The same is true for  \col{white} and \col{black} letters. 
Accordingly we define:\begin{definition}[Score of an infix]\label{def:infix-scr}
    Let $w{\in}\Sigma^\omega$ and $j,k{\in}\mathbb{N}$.
    Let $\icolors_w(j,k)=(\white,\green,\yellow,\red,\black)$.
    The \term{score of the infix} $w[j..k]$ wrt $L$, denoted $\mathterm{\score_w(j,k)}$, is the tuple $(\white\black,\green\red)$ where $\mathterm{\white\black}=\white{-}\black$, $\mathterm{\green\red}=\green{-}\red$. Its \term{averaged score}, denoted $\mathterm{\avgscore_w(j,k)}$ is $(\frac{\white\black}{l},\frac{\green\red}{l})$ where $l=|w[j..k]|$.
\end{definition}

\subsection{The robustness value definition and properties}

As we shortly define, the robustness value is a tuple of numbers from a domain $\mathbb{T}$, where elements are compared component-wise and the leftmost component is the most significant. Let $w\in\Sigma^\omega$ be a lasso word and assume $(u,v)$ is a good decomposition of it.\footnote{We will shortly define what is a good decomposition.}
The abstract form of the robustness value of $w$ is  $(a_w, s_v, s_u)$ where $a_w$, aka the \term{acceptance bit} is $1$ if $w\in L$ and $-1$ otherwise; $s_v$ is a tuple reflecting the \term{period-value} and $s_u$ is a tuple reflecting the \term{spoke-value}. Obviously, the value should first take into account whether the word is accepted or not; and among the period and spoke, the period should be more significant.  

Note that the second and third components are the period's and spoke's \emph{values}, not their \emph{scores}. Moreover, as we  explain next, the value of the spoke is computed in a different manner from the value of a period. For the value of the period, we would like $v$ and $vv$ to receive the same value, since they induce the same infinite word. Accordingly, the value of the period is  its averaged score.  

For the value of the spoke, things are a little bit more complicated. Here we would like to take into account the length of the spoke. But whether a shorter or a longer spoke is preferred depends on the relation between the averaged scores of the period and spoke. 
Consider the language $L_{\infty a}$, and the words $w_1=bb(ab)^\omega$, $w_2=bbbbb(ab)^\omega$. Clearly, $w_1$ with the shorter spoke should be determined more robust than $w_2$ with the longer spoke. But considering the same language and the words
$w_3=aa(ab)^\omega$ and $w_4=aaaaa(ab)^\omega$, obviously $w_4$ with the longer spoke should be determined more robust than $w_3$ with the shorter spoke. Accordingly, the value of the spoke considers the difference between the averaged scores of the spoke and period, multiplied by the length of the spoke. Put otherwise, it asks for each letter of the spoke how much better or worse it is compared to the averaged score of the period, and sums these values.

Note that a given lasso word has many decompositions into a spoke and a period, and different decompositions may yield different results. E.g. for $L_{a\text{-seq}}$ and $w=a^\omega$ we would like a decomposition where the period has $aaa$. If it has a shorter sequence of $a$'s the rank of the period would be wrong. For the spoke we would like the shortest spoke reaching a correct period. We thus define:

\begin{definition}[The robustness decomposition]\label{def:rbst-dcmp}
Let $w\in\Sigma^\omega$ be a lasso word.
Its \term{robustness decomposition} wrt $L$, denoted $\mathterm{\rbstdecomp_L(w)}$ is the shortest decomposition $(u,v)$ satisfying $w=uv^\omega$, $\infixrank_L(u,v)=\wordrank_L(w)$.\footnote{I.e., for every  $x,y$ satisfying $w=xy^\omega$ and $\infixrank_L(x,y)=\wordrank_L(w)$ we have $|uv|<|xy|$.}
\end{definition}
We are finally ready to define the desired robustness value.

\begin{definition}[The robustness value]\label{def:rbst-scr} 
Let $\rbstdecomp_L(w)=(u,v)$ where $|u|=k$ and $|v|=l$. 
Let $\tau_u=\avgscore_w(1,k)$, 
$\tau_v=\avgscore_w(k{+}1,k{+}l)$, and $a_w$ the acceptance bit.
The \term{robustness value of $w$ wrt $L$}, denoted $\mathterm{\valrbst_L(w)}$ is the tuple $\mathterm{(}a_w\mathterm{,}\ \tau_v\mathterm{,}\ k(\tau_u{-}\tau_v)\mathterm{)}$. 
\end{definition}

\autoref{ex:rbstval} provides examples of words, languages and the respective robustness value. The reader can verify that it matches our intuition on all provided examples.

The robustness values are elements in $\{-1,1\}\times[-1,1]^2\times \mathbb{Q}^2$.
Letting $\mathbb{T}=\{-1,0,1\}\times[-1,1]^2\times \mathbb{Q}^2$
and $\zero=(0,(0,0),(0,0))$, it is immediate that the value function $\valrbst_L$  refines $L$,
since its most significant component is the acceptance bit,
which is positive iff $w\in L$.

\begin{restatable}[Satisfaction bar]{myclaim}{clmsatisfactionbar}\label{clm:satisfaction-bar}
    $\valrbst_L(w)\geq \zero$ iff $w\in L$.
\end{restatable}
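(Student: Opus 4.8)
The plan is to show both directions by unwinding the definition of $\valrbst_L(w)=(a_w,\tau_v,k(\tau_u-\tau_v))$ and the order on $\mathbb{T}$, which compares tuples component-wise with the leftmost component most significant. Since $\zero=(0,(0,0),(0,0))$, the comparison $\valrbst_L(w)\geq\zero$ is decided entirely by the first component $a_w$ whenever $a_w\neq 0$: if $a_w=1$ then the tuple is $\geq\zero$ regardless of the remaining components (because $1>0$), and if $a_w=-1$ then the tuple is $<\zero$ (because $-1<0$). The only subtlety is that $a_w\in\{-1,1\}$ never equals $0$, so the remaining components never get a chance to matter for this particular comparison.

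**First I would** recall from \autoref{def:rbst-scr} that the acceptance bit $a_w$ is defined to be $1$ if $w\in L$ and $-1$ if $w\notin L$; in particular $a_w\in\{-1,1\}$. Next I would make explicit the order on $\mathbb{T}=\{-1,0,1\}\times[-1,1]^2\times\mathbb{Q}^2$: for tuples $t=(t_0,t_1,t_2)$ and $t'=(t'_0,t'_1,t'_2)$ we have $t\geq t'$ iff either $t_0>t'_0$, or ($t_0=t'_0$ and $(t_1,t_2)\geq(t'_1,t'_2)$ in the lexicographic order on $[-1,1]^2\times\mathbb{Q}^2$). Applying this with $t'=\zero=(0,(0,0),(0,0))$: since $a_w\neq 0$, the case ``$t_0=t'_0$'' is vacuous, so $\valrbst_L(w)\geq\zero$ holds iff $a_w>0$, i.e. iff $a_w=1$.

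**Then I would** close the argument: $a_w=1$ iff $w\in L$ by definition, so $\valrbst_L(w)\geq\zero$ iff $w\in L$, as claimed. For the reverse inclusion ($w\notin L$): then $a_w=-1<0$, hence $\valrbst_L(w)<\zero$, so in particular $\valrbst_L(w)\not\geq\zero$.

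**I do not expect any real obstacle here** — the statement is essentially an immediate consequence of the two design choices that the acceptance bit is the most significant component and that it is valued in $\{-1,1\}$ rather than $\{0,1\}$, with the neutral element $\zero$ placed strictly between them. The only thing worth being careful about is the implicit definition of the order on $\mathbb{T}$ (component-wise / lexicographic with leftmost most significant), which the excerpt states in prose just before the claim; I would state it once explicitly at the start of the proof so the one-line deduction is fully justified. This is the same role that the later components play for comparing two accepted (or two rejected) words, but they are irrelevant for comparison against $\zero$.
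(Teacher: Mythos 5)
Your proof is correct and matches the paper's own argument, which likewise treats the claim as immediate from the facts that the acceptance bit $a_w\in\{-1,1\}$ is the most significant component, is positive iff $w\in L$, and is compared against the first component $0$ of $\zero$. Your only addition is to spell out the lexicographic order on $\mathbb{T}$ explicitly, which the paper leaves in prose; there is no substantive difference.
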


It follows from \autoref{def:infix-scr}
that the score of an infix $w[j,k]$ wrt a language $L$ is the negative of the score of $w[j,k]$ wrt $\bar{L}$, the complement of $L$. 
Obviously, this holds also for the averaged-score. 
Since this is also true for the acceptance bit, 
\autoref{def:rbst-scr} guarantees:

\begin{restatable}[Value wrt complement]{myclaim}{propscoreforLcomplement}\label{prop:score-for-L-complement}
    $\valrbst_L(w)=-\valrbst_{\bar{L}}(w)$ 
    for every $w$ and $L$.
\end{restatable}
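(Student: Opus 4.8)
The plan is to trace the definition of $\valrbst_L$ backwards and verify that every ingredient behaves correctly under complementation. The key observation, already noted in the paper just before the statement, is that parity ranks of $L$ and of $\bar L$ differ by a shift of parity: the semantic notions $\wordrank$, $\infixrank$, $\letterrank$ all have the property that a rank $d$ wrt $L$ corresponds to the rank $d+1$ (or $d-1$, depending on the convention the paper fixes) wrt $\bar L$ — in particular $-2$ and $-1$ swap, even and odd swap. I would first state and justify this rank-shift lemma at the level of $\wordrank$: since $w \in L \iff w \notin \bar L$, and $\inj_I(w)$ does not depend on $L$ (only on the right congruence, which is the same for $L$ and $\bar L$ as $\sim_L = \sim_{\bar L}$), an easy induction on the rank in \autoref{def:inf-nat-rank-minuses} gives $\wordrank_{\bar L}(w) = \wordrank_L(w) \pm 1$ with the sign determined so that the two base cases $-2,-1$ are interchanged. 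The same argument, using \autoref{def:infix-rank} and the fact that $L_{u.v}$ depends only on $\sim_L$, gives the analogous shift for $\infixrank$, and hence — since influential indices, (pre)dominant suffixes and reset points are all defined purely in terms of \emph{equalities and inequalities between} $\infixrank$ values, which are preserved under a uniform shift — the combinatorial data $\reset_w(\cdot)$, $\domind_L$, $\predomind_L$ are literally \emph{identical} for $L$ and $\bar L$. Consequently $\letterrank_{\bar L}(w,k) = \letterrank_L(w,k) \pm 1$ for every $k$.

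Next I would push this through the color layer. Inspect \autoref{def:letter-color}: the forgetful version $\fg_L(w)$ depends only on the sign of $\letterrank$ values, and under the shift "$\letterrank \geq 0$" wrt $L$ becomes "$\letterrank \geq -1$, i.e. not $=-2$", which is precisely "$\sigma_i$ is not white wrt $\bar L$"... here I must be careful: I would rather argue directly that the color map $\theta(d,d')$ satisfies $\theta_{\bar L}(d^{\bar L}, d'^{\bar L}) = \flip(\theta_L(d,d'))$ where $\flip$ swaps $\col{white}\leftrightarrow\col{black}$, $\col{green}\leftrightarrow\col{red}$ and fixes $\col{yellow}$. Indeed, applying the parity shift: $d'=-2$ wrt $L$ $\iff$ $d'=-1$ wrt $\bar L$ (white $\mapsto$ black); $d'=-1 \iff d'=-2$ (black $\mapsto$ white); $d'$ even and $d\geq d'$ $\iff$ $d'$ odd and $d\geq d'$ wrt $\bar L$ (green $\mapsto$ red), and symmetrically; and $d<d'$ is unchanged (yellow $\mapsto$ yellow). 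The one subtlety is that $\fg$ itself changes when we pass to $\bar L$ — the prefix we "forget" for $L$ (the $-2$ letters) is different from the prefix we forget for $\bar L$ (the $-1$ letters) — so I would verify that $\theta$ is only ever evaluated on the retained, non-$(-1{,}-2)$ letters in a way that makes the $\flip$ correspondence hold position-by-position; this is the step requiring the most care, and I expect it to be the main obstacle. If it turns out $\fg$ causes an index mismatch, the fallback is to recompute $\score_w(j,k)$ directly: the paper already asserts (the sentence preceding the claim) that $\icolors$ under $L$ vs $\bar L$ is related by $\flip$, which by \autoref{def:infix-scr} immediately gives $\score_{\bar L}(j,k) = -\score_L(j,k)$, hence $\avgscore_{\bar L}(j,k) = -\avgscore_L(j,k)$ for every infix.

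Finally I would close the argument at the level of \autoref{def:rbst-scr}. The robustness decomposition $\rbstdecomp_L(w)=(u,v)$ is defined by $w=uv^\omega$ and $\infixrank_L(u,v)=\wordrank_L(w)$ and minimality of $|uv|$; since the parity shift is uniform, the equation $\infixrank_L(u,v)=\wordrank_L(w)$ holds iff $\infixrank_{\bar L}(u,v)=\wordrank_{\bar L}(w)$, so $\rbstdecomp_{\bar L}(w)=\rbstdecomp_L(w)=(u,v)$ — same $k$, same $l$. Then $\tau_u, \tau_v$ negate (by the $\avgscore$ computation above), and the acceptance bit $a_w$ negates by definition ($w\in L \iff w\notin\bar L$). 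Plugging into $\valrbst_L(w)=(a_w,\tau_v,k(\tau_u-\tau_v))$: the first component becomes $-a_w$, the second $-\tau_v$, and the third $k(-\tau_u-(-\tau_v))=-k(\tau_u-\tau_v)$. Hence $\valrbst_{\bar L}(w) = -\valrbst_L(w)$, where negation of a tuple in $\mathbb{T}$ is componentwise, as required. I would write the $\avgscore$-based route as the primary proof (invoking the paper's own preceding remark about $\icolors$ and $\flip$), and relegate the full rank-shift bookkeeping to a lemma.
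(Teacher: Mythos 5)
Your top-level assembly coincides with the paper's own proof, which consists only of the two sentences preceding the claim: the letter colors flip under complementation (green with red, white with black, yellow fixed), hence infix scores and averaged scores negate, the acceptance bit negates, and \autoref{def:rbst-scr} yields the statement; there is no further proof in the appendix. Your closing paragraph --- same robustness decomposition for $L$ and $\bar{L}$, $\tau_u,\tau_v \mapsto -\tau_u,-\tau_v$, $a_w \mapsto -a_w$, and $k(-\tau_u-(-\tau_v))=-k(\tau_u-\tau_v)$ --- is exactly that argument, in fact spelled out somewhat more carefully than in the paper, which never says explicitly that $\rbstdecomp_L(w)=\rbstdecomp_{\bar{L}}(w)$ is needed.

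The gap lies in the justification you add for the color flip. The shift of natural ranks under complementation is not uniform: one only gets $\wordrank_{\bar{L}}(w)\in\{\wordrank_L(w)-1,\wordrank_L(w)+1\}$, and the sign depends on the word. Already for $L_{\ltlG a}$, the word $(a)^\omega$ has rank $0$ wrt $L$ and rank $1$ wrt $\bar{L}$ (shift $+1$), while $b(a)^\omega$ has rank $-1$ wrt $L$ and $-2$ wrt $\bar{L}$ (shift $-1$); a similar mixture can occur among the non-negative ranks, e.g.\ when $L$ and $\bar{L}$ attain the same maximal rank. Consequently your inference that equalities and inequalities among $\infixrank$ values are preserved --- and hence that influential indices, (pre)dominant suffixes, reset points, $\fg$, and therefore $\lcolor$ coincide position-by-position up to flipping --- does not follow from the rank-shift lemma as stated; the compatibility of the per-infix shifts in \autoref{def:influential}, \autoref{def:dom-ind} and \autoref{def:letter-color}, and in the equation $\infixrank_L(u,v)=\wordrank_L(w)$ defining $\rbstdecomp_L$, is precisely what has to be proved. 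You flag this yourself as the main obstacle, but the fallback you propose is circular: the sentence of the paper you would invoke (that $\icolors$ flips under complementation) is the paper's unproved assertion of exactly this step. So your proposal is faithful to the paper's route and no less rigorous than the paper's own two-sentence argument, but it does not close the one step that genuinely requires an argument.
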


Using the robustness value we can define the robustness preference relation.
\begin{definition}[The robustness preference relation]\label{def:rtst-prf-relation}
The \term{robustness preference relation} is defined as follows.
    We say that $\mathterm{w_1 \geqrbst_L w_2}$ 
    if $\valrbst_L(w_1)\geq \valrbst_L(w_2)$ where $\geq$ works component-wise and the left-most coordinate is the most significant.
\end{definition}

The duality theorem as required by \autoref{req:duality} clearly follows from Claim~\ref{prop:score-for-L-complement}.

\begin{restatable}[Duality]{myclaim}{clmduality}\label{clm:duality}
    Let $L\subseteq \Sigma^\omega$ be an $\omega$-regular property, and $\bar{L}$ its complement. Let $w_1,w_2\in\Sigma^\omega$. Then
     $w_1 \,\geqrbst_L\, w_2 \iff w_1 \,\leqrbst_{\bar{L}}\, w_2.$
\end{restatable}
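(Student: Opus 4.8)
The plan is to derive the claim mechanically from the value-level duality already recorded in \autoref{prop:score-for-L-complement} together with the definition of $\geqrbst_L$. By \autoref{def:rtst-prf-relation}, $w_1 \geqrbst_L w_2$ holds exactly when $\valrbst_L(w_1) \geq \valrbst_L(w_2)$, where $\geq$ is the lexicographic order on the tuple domain $\mathbb{T}$ with the leftmost coordinate most significant. Applying \autoref{prop:score-for-L-complement} to $w_1$ and to $w_2$ rewrites this inequality as $-\valrbst_{\bar L}(w_1) \geq -\valrbst_{\bar L}(w_2)$, so everything reduces to understanding how coordinatewise negation interacts with the lexicographic order.

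The one small lemma I would spell out is that coordinatewise negation reverses the lexicographic order on $\mathbb{T}$: for $x,y\in\mathbb{T}$ one has $x\geq y$ iff $-x\leq -y$. This is immediate, since $x$ and $y$ agree on a given coordinate iff $-x$ and $-y$ do; hence the first coordinate on which $x$ and $y$ differ is the same as the first coordinate on which $-x$ and $-y$ differ, and on that coordinate the strict comparison flips under negation, while the equality case $x=y\iff -x=-y$ is trivial. I would also note in passing that $\mathbb{T}=\{-1,0,1\}\times[-1,1]^2\times\mathbb{Q}^2$ is symmetric about $\zero$, so $-\valrbst_{\bar L}(w_i)\in\mathbb{T}$ and the comparisons are well posed.

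Combining these: $-\valrbst_{\bar L}(w_1)\geq -\valrbst_{\bar L}(w_2)$ is equivalent to $\valrbst_{\bar L}(w_1)\leq \valrbst_{\bar L}(w_2)$, which by \autoref{def:rtst-prf-relation} instantiated at the language $\bar L$ is precisely $w_1 \leqrbst_{\bar L} w_2$. Chaining the equivalences yields $w_1 \geqrbst_L w_2 \iff w_1 \leqrbst_{\bar L} w_2$, as desired. I do not expect any real obstacle in this argument: the lexicographic bookkeeping above is routine, and all the genuine content — that the per-letter colors, hence the infix scores and averaged scores, flip sign between $L$ and $\bar L$, that the acceptance bit flips sign, and that the robustness decomposition underlying \autoref{def:rbst-scr} behaves compatibly under complementation — is already packaged in \autoref{prop:score-for-L-complement}, which the statement here is explicitly allowed to invoke.
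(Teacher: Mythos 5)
Your proposal is correct and follows essentially the same route as the paper, which simply observes that the duality claim follows from Claim~\ref{prop:score-for-L-complement} ($\valrbst_L(w)=-\valrbst_{\bar L}(w)$) together with \autoref{def:rtst-prf-relation}. Your only addition is to spell out the routine fact that coordinatewise negation reverses the lexicographic order on $\mathbb{T}$, which the paper leaves implicit in the word ``clearly.''
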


The robustness value function and preference relation thus meet our intuitions and adhere the required mathematical criteria.

\begin{restatable}[The robustness preference relation]{corollary}{thmrbstprefgreat}\label{thm:rbst-pref-great}
The robustness preference relation $\geqrbst_L$ induced from $L$ satisfies all our requirements.
\end{restatable}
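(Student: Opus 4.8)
The plan is to verify each of the stated requirements in turn, leaning on the structural results already proved in the excerpt and on the explicit definition of $\valrbst_L$. I would organize the proof as a sequence of short paragraphs, one per requirement, since each is essentially a direct consequence of an earlier claim or a routine computation with the robustness decomposition.

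First, \autoref{req:respect} and \autoref{req:bar-score} both follow immediately from \autoref{clm:satisfaction-bar}: the most significant coordinate of $\valrbst_L(w)$ is the acceptance bit $a_w\in\{-1,1\}$, which is $1$ iff $w\in L$, so accepted words dominate rejected ones and $\valrbst_L(w)\geq\zero$ iff $w\in L$. The finer "more changes needed" intuition within $L$ (and within $\overline{L}$) is then delegated to the period-value and spoke-value coordinates, whose behaviour on the canonical examples I would illustrate rather than fully re-derive, pointing to \autoref{ex:rbstval}. Next, \autoref{req:duality} is exactly \autoref{clm:duality}, which in turn is \autoref{prop:score-for-L-complement}; nothing more is needed. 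For \autoref{req:TN}, \autoref{req:res-time}, \autoref{ex:infty-ab}, and \autoref{ex:diff-priorities-for-red-yellow-green-cycles}, the plan is to compute $\rbstdecomp_L(w)$ for each word pair, read off the letter colors via \autoref{def:letter-color} (using the DPAs in \autoref{fig:aut-for-examples-parity} and \autoref{fig:dpa-infty-ab} as semantic guides), and then compare the resulting triples $(a_w,\tau_v,k(\tau_u-\tau_v))$. In each of these cases the two words being compared share the same acceptance bit, and the comparison reduces to comparing averaged scores of periods of the same form — e.g. for $(b^i a)^\omega$ versus $(b^j a)^\omega$ with $i<j$ wrt $L_{\ltlG a}$, the period $b^i a$ has one green letter among $i+1$ letters giving averaged green-red score $\frac{1}{i+1}-\frac{i}{i+1}$ (the leading $b$'s being black after the first violation is handled by the forgetful version), which strictly decreases in $i$; the transient cases $b^i(a)^\omega$ and $(ba^j)^\omega$ are handled by the spoke coordinate $k(\tau_u-\tau_v)$ respectively by the period's averaged score as $j$ grows.

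For \autoref{req:period-vs-spoke}, the argument is structural rather than computational: since the value triple is $(a_w,\tau_v,k(\tau_u-\tau_v))$ ordered lexicographically with $\tau_v$ strictly more significant than the spoke coordinate, two words with the same period automatically agree on the first two coordinates and are then separated by the third, which depends on the spoke; conversely no finite amount of spoke difference can override a period difference. This is precisely the "sensitive to both, but period dominates" behaviour demanded, and it falls out directly from \autoref{def:rbst-scr} and \autoref{def:rtst-prf-relation} once one checks that the robustness decomposition of two words with a common infinite tail eventually uses a common period (this uses \autoref{def:rbst-dcmp} together with \autoref{rmrk:infix-no-increse} and \autoref{thm:natrank-of-letters-and-words}, which tie the chosen period's $\infixrank$ to $\wordrank_L(w)$).

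The main obstacle I anticipate is not any single requirement but the bookkeeping needed to make the example computations airtight — in particular, confirming that $\rbstdecomp_L(w)$ really selects the period shape our intuition wants (e.g. that for $L_{a\text{-seq}}$ and $w=a^\omega$ the decomposition has period $aaa$ rather than $a$ or $aa$), which requires combining \autoref{def:rbst-dcmp} with the infix-rank monotonicity of \autoref{rmrk:infix-no-increse} and the word/letter rank correlation of \autoref{thm:natrank-of-letters-and-words}, and then correctly reading the letter colors through the forgetful version. Once the decomposition and colors are pinned down for the handful of template families ($b^i(a)^\omega$, $(b^ia)^\omega$, $(ba^i)^\omega$, $(ac^ib)^\omega$, $(ab)^\omega$ vs $(aaab)^\omega$, etc.), each inequality is a one-line comparison of rationals, so I would present those comparisons compactly and relegate the detailed color/decomposition tables to the appendix alongside \autoref{ex:rbstval}. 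I would close by noting that, having checked Requirements~\ref{req:respect}--\ref{ex:diff-priorities-for-red-yellow-green-cycles}, the corollary follows.
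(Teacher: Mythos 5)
Your proposal is correct and follows essentially the same route as the paper, which in fact states \autoref{thm:rbst-pref-great} without a separate proof: the mathematical criteria are exactly \autoref{clm:satisfaction-bar}, \autoref{prop:score-for-L-complement} and \autoref{clm:duality}, and the intuition-based requirements are discharged by computing robustness decompositions, colors and values on the template families (as in \autoref{ex:rbstval}), which is precisely your plan. One bookkeeping slip in the single computation you spell out: wrt $L_{\ltlG a}$ the $b$'s in $(b^ia)^\omega$ are \col{black} (each drives the rank to $-1$), not \col{red}, so by \autoref{def:infix-scr} they enter the white--black coordinate of $\tau_v$ and the comparison $(b^ia)^\omega \gtrbst_L (b^ja)^\omega$ is decided by $-i/(i{+}1)$ versus $-j/(j{+}1)$ there, not by the green--red score $\tfrac{1}{i+1}-\tfrac{i}{i+1}$ you wrote; the ordering conclusions, and hence the corollary, are unaffected.
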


\section{Computing the robustness value}\label{sec:computing-the-score}
We turn to ask how to calculate the robustness value of certain words with respect to a certain language. Ideally, we would like to have a DPA whose transitions can be colored in the same manner as we colored letters of the word so that there is a one-to-one match between the colors of the edges traversed during the run and the colors of the letters of the word.
We show that such a DPA exists and the correspondence is tighter than just the colors: the ranks of states traversed while reading the word match exactly the ranks of the letters.

As mentioned in \autoref{sec:toward},
taking an arbitrary DPA for the language, even a minimal one with minimal ranks, 
may not produce the desired correlation. Indeed, \autoref{fig:dpa-infty-ab} shows three minimal DPAs with minimal ranks for the language $L_{\infty ab}$: $\aut{P}_1$, $\aut{P}_2$ and $\aut{P}_3$, but they often disagree on the sequence of ranks traversed when reading the same word.  For instance, $(ab)^\omega$ gets ranks $(10)^\omega$ 
and colors $($\yll{$a$}\grl{$b$}$)^\omega$ on $P_1$ and $P_2$ but ranks $(1011)^\omega$ and  colors $($\rdl{$a$}\grl{$b$}\yll{$a$}\rdl{$b$}$)^\omega$  
on $P_3$.
The word $(aabb)^\omega$ gets ranks $(1010)^\omega$ and colors 
$($\grl{$a$}\yll{$a$}\grl{$b$}\yll{$b$}$)^\omega$
on $P_1$  but $(1101)^\omega$ 
 and 
$($\rdl{$a$}\rdl{$a$}\grl{$b$}\yll{$b$}$)^\omega$
on $P_2$ and $P_3$.
Our expectation is that the word $(ab)^\omega$ will get rank $1(00)^\omega$ and colors \rdl{$a$}$($\grl{$ba$}$)^\omega$ since every letter, except for the first, closes a required infix $ab$ or $ba$; and the word
 $(aabb)^\omega$ will get rank $11(0101)^\omega$ and colors
\rdl{$aa$}$($\grl{$b$}\yll{$b$}\grl{$a$}\yll{$a$}$)^\omega$
 since, after the first two letters, only every other letter closes such an infix. 
From these, we can compute the robustness value and infer that $(ab)^\omega$ is more robust. 
The DPA $\aut{P}^\truerobustness_{L_{\infty ab}}$ does provide the desired correlation. We can clearly see that the best ranked words have $(ab)^\omega$ as a suffix since the label of the only green cycle is $ab$ (or $ba$), and a green cycle produces the best  value.

The question is then how can we obtain a DPA that has the desired correspondence to our defined notion of $\letterrank_L$. The prominent candidate is the \emph{precise DPA} introduced in~\cite{BohnL24} in the context of passive learning, as it was developed from the notion of $\infixrank_L$ on which $\letterrank_L$ relies, and it shares some of the intuitions of finding reset points. It turns out that it does not give the desired correlation, see~\autoref{ex:precise-aaa} and \autoref{ex:precise-ab}.

Our proof that a DPA with the desired correlation exists has two steps: in the first step we show that a DPA that has the desired correlation between ranks exists, and in the next step we show that a DPA that has the desired correlation between both ranks and colors exists. We term the final DPA the \term{robustness DPA}. The second step is needed only when there are negative ranks. In all other cases the first step suffices. We call the automaton resulting from the first step the \term{vigor DPA} (to avoid something long as \term{pseudo-robustness}).

\vspace{-2mm}
\subsection{Getting the ranks correct: the vigor DPA $\aut{P}^\vigor_L$}\label{sec:vigor-dpa}
\vspace{-1mm}
Let $u,x,y\in\Sigma^*$ and $L\subseteq \Sigma^\omega$.
Let $L_u = \{ v^\omega ~|~ uv \sim_L u.\ uv^\omega \in L  \}$.
We turn to define the vigor equivalence classes, denoted $\equiv^\vigor_u$.
We say that $x\equiv^\vigor_u y$  if there is no suffix $z$ we can read after $x$ and $y$ for which  the ranks of the  dominant suffixes of $xz$ and $yz$ are different.\footnote{
These equivalence classes are reminiscent of the colorful FDFA equivalence classes~\cite{FismanGZ24}.
The difference is that in the colorful FDFA there is no transition from a state ranked $d$ to a state ranked $d'<d$ whereas in the vigor DPA there might be. As per the intro of \autoref{sec:define-rbst}, the idea behind here is to "correct" this non-increasing phenomena 
 by computing the ranks from scratch from time to time (namely resetting). }

\begin{definition}[$\equiv^\vigor_u$]\label{def:robustness}
We say that $x$ and $y$ are \term{vigor-equivalent} wrt $u$ and $L$, denoted  $\mathterm{x \equiv^\vigor_u y}$, if  $\infixrank_{L}(u,\domsuf_{L_u}(xz))= 
     \infixrank_{L}(u,\domsuf_{L_u}(yz))$ for all $z\in\Sigma^*$.
\end{definition}
From the vigor equivalence classes we can construct in the usual manner the respective automaton structure.
Let $\aut{A}^\vigor_u$ be the automaton induced by $\equiv_u^\vigor$. Adding ranks to the states we can obtain a parity automaton $\aut{P}^\vigor_u$.

\begin{definition}[$\aut{P}^\vigor_u$]\label{def:Pu}
    $\mathterm{\aut{P}^\vigor_u}$ is the DPA obtained from $\aut{A}[\equiv_u^\vigor]$ by ranking state $q_x$ by
     $\infixrank_{L}(u,x)$ 
    where $x$ is the smallest word in the shortlex order reaching $q_x$.\footnote{Where $x < y$ by the  \emph{shortlex} order if ${|x|<|y|}$ or ${|x|=|y|}$ and $x$ precedes $y$ in the lexicographical order.}
\end{definition}

\begin{restatable}[]{remark}{rmrkshortlexnoincrese}\label{rmrk:shortlex-no-increse}
If $x$ and $y$ are the smallest words in the shortlex order reaching state $q_x$ and $q_y$ of $\aut{P}_u^\vigor$, resp. and $x$ is a prefix of $y$ then $\kappa(q_x)\geq \kappa(q_y)$.
\end{restatable}

We claim that $\aut{P}^\vigor_u$ recognizes the language $L_u$ and moreover, the ranks of states visited along the way match one to one the natural ranks induced by $L_u$.

\begin{restatable}[Correctness of $\aut{P}^\vigor_u$]{theorem}{thmrecolorateoneequiv}\label{thm:recolorate-one-equiv}
\begin{inparaenum}
\item \label{correctness:lang} $\sema{\aut{P}_u^\vigor}=L_u$ and 
\item \label{correctness:letters} $\letterrank_{L_u}(v^\omega,k)=\kappa(\aut{P}_u^\vigor(v^\omega[..k]))$ for every $v\in\Sigma^+$ and
$k\in\mathbb{N}$. 
\end{inparaenum}
\end{restatable}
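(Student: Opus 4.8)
The plan is to prove the two parts of \autoref{thm:recolorate-one-equiv} more or less independently, since part~\ref{correctness:lang} is needed to make sense of $\letterrank_{L_u}$ in part~\ref{correctness:letters} but the heavy lifting is in the rank-correspondence claim.

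\textbf{Part \ref{correctness:lang} (the language).} First I would check that $\aut{A}[\equiv^\vigor_u]$ is a well-defined deterministic automaton structure: the relation $\equiv^\vigor_u$ is a right congruence by construction (if $x\equiv^\vigor_u y$ then for any $\sigma\in\Sigma$ and any $z$ we have $\infixrank_L(u,\domsuf_{L_u}(x\sigma z))=\infixrank_L(u,\domsuf_{L_u}(y\sigma z))$, which is the defining property of $x\sigma\equiv^\vigor_u y\sigma$). Then I would show that $\equiv^\vigor_u$ refines the Myhill--Nerode congruence $\sim_{L_u}$, so the DPA structure is at least fine enough to recognize $L_u$; here one uses that $v^\omega\in L_u$ is detectable from the $\infixrank_L(u,\cdot)$-values of dominant suffixes along $v^\omega$ via \autoref{thm:natrank-of-letters-and-words} applied to $L_u$ and \autoref{def:infix-rank}. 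Finally, given the state-ranking of \autoref{def:Pu}, I would argue acceptance is preserved: reading $v^\omega$, the sequence of state-ranks visited is exactly the sequence $\infixrank_L(u,x_k)$ where $x_k$ is the shortlex-minimal word reaching the $k$-th state; using \autoref{cor:domsuf} and \autoref{lem:ds} one shows this sequence eventually coincides with the sequence of letter-ranks $\letterrank_{L_u}(v^\omega,k)$ (this is really part~\ref{correctness:letters}), and then \autoref{thm:natrank-of-letters-and-words} says the minimal rank seen infinitely often equals $\wordrank_{L_u}(v^\omega)$, whose parity matches membership in $L_u$. Actually, once part~\ref{correctness:letters} is established, part~\ref{correctness:lang} for lasso words is immediate from \autoref{thm:natrank-of-letters-and-words}, and then density of lasso words in $\omega$-regular languages together with the fact that both sides are $\omega$-regular gives $\sema{\aut{P}_u^\vigor}=L_u$.

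\textbf{Part \ref{correctness:letters} (the ranks match).} This is the crux. Fix $v\in\Sigma^+$ and let $w=v^\omega$. I would prove by induction on $k$ that the shortlex-minimal word $x_k$ reaching $\aut{P}_u^\vigor(w[..k])$ satisfies $\domsuf_{L_u}(w[..k]) \equiv^\vigor_u x_k$ and in fact $\infixrank_L(u,x_k) = \infixrank_L(u,\domsuf_{L_u}(w[..k])) = \infixrank_L(u,\reset_w(k),k)$, the last equality being the definition of $\letterrank_{L_u}(w,k)$ after noticing that $\infixrank_{L}(u,w[\reset_w(k)..k])$ is the relevant quantity (using \autoref{lem:ds} to identify $\reset_w(k)$ with $\domind$, and that the dominant suffix relative to $L_u$ is computed via $\infixrank_L(u,\cdot)$). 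The inductive step reads $w[..k{+}1]=w[..k]\sigma$; by \autoref{cor:domsuf}, $\domsuf_{L_u}(w[..k{+}1])=\domsuf_{L_u}(\domsuf_{L_u}(w[..k])\cdot\sigma)$, and since $x_k\equiv^\vigor_u\domsuf_{L_u}(w[..k])$, the transition function of $\aut{A}[\equiv^\vigor_u]$ sends $q_{x_k}$ on $\sigma$ to the $\equiv^\vigor_u$-class of $x_k\sigma$, which by the congruence property and \autoref{cor:domsuf} is the class of $\domsuf_{L_u}(w[..k{+}1])$; its shortlex-minimal representative $x_{k+1}$ then has $\infixrank_L(u,x_{k+1})=\infixrank_L(u,\domsuf_{L_u}(w[..k{+}1]))$ because all words in a $\equiv^\vigor_u$-class share the $\infixrank_L(u,\domsuf_{L_u}(\cdot z))$ values, in particular at $z=\varepsilon$, and $\domsuf_{L_u}$ is idempotent on dominant suffixes. \autoref{rmrk:shortlex-no-increse} and \autoref{rmrk:infix-no-increse} are the sanity-check ingredients that the ranking in \autoref{def:Pu} is consistent (the same state cannot be forced to two different ranks).

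\textbf{Main obstacle.} The delicate point is the interplay between three "dominant suffix" notions that appear: $\domsuf$ is defined relative to a language (here I want it relative to $L_u$ as a language over $\Sigma^\omega$), whereas $\infixrank_L(u,\cdot)$ measures ranks relative to the original $L$ with spoke $u$. I expect the real work is in cleanly establishing the identity $\infixrank_L(u,\domsuf_{L_u}(w[..k])) = \infixrank_L(u,\reset_w(k),k) = \letterrank_{L_u}(w,k)$ --- i.e. that computing dominant suffixes with respect to $L_u$ (a language of \emph{periods}) and then measuring with $\infixrank_L(u,\cdot)$ coincides with the letter-rank of $w$ with respect to $L_u$ proper. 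This requires carefully relating $\infixrank_{L_u}(\cdot,\cdot)$ to $\infixrank_L(u,\cdot)$ (they agree on the words that matter because $uv\sim_L u$ for all relevant $v$, making the "spoke-$u$ quotient" of $L$ behave like $L_u$), and then invoking \autoref{lem:ds}/\autoref{cor:domind} to pin $\reset_w(k)$ down. Once this bookkeeping identity is in hand, the induction in part~\ref{correctness:letters} is routine, and part~\ref{correctness:lang} follows as sketched above.
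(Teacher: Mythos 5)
Your plan follows essentially the same route as the paper's proof: part (2) is the crux, obtained from the fact that $w[..k]$ is vigor-equivalent to the shortlex representative $x$ of the state it reaches, specializing \autoref{def:robustness} to $z=\varepsilon$, and identifying the two sides with $\kappa(q_x)$ and $\letterrank_{L_u}(w,k)$, after which part (1) is read off from \autoref{thm:natrank-of-letters-and-words}. The paper does not need your induction on $k$ (the automaton structure induced by $\equiv_u^\vigor$ already gives $w[..k]\equiv_u^\vigor x$ directly), and the bookkeeping identity you flag as the main obstacle --- equating $\infixrank_L(u,\domsuf_{L_u}(w[..k]))$ with $\letterrank_{L_u}(w,k)$ --- is likewise asserted rather than elaborated in the paper's proof, so your sketch matches its level of detail.
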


The DPAs for $\aut{P}^\vigor_u$ for all equivalence classes of $L$ can be combined into one DPA, 
that recognizes $L$ and its traversed ranks match one-to-one $\letterrank_L$ of the respective letters.

\begin{definition}[$\aut{P}_L^\vigor$]\label{def:rbst-dpa} 
Assume $\sim_L$ has $m$ equivalent classes $c_1,...,c_m$.  
Let $u_1,...,u_m$ be finite words representatives of $c_1,...,c_m$, resp.
Let $\mathterm{\aut{P}_L^{\vigor}}$ be a 
DPA with automaton structure $\aut{A}{[\sim_L]}\times 
\aut{A}[\equiv^\vigor_{u_1}]\times \cdots \times
\aut{A}[\equiv^\vigor_{u_m}]$
and the ranking function 
$\kappa(c_i,q_1,...,q_m)=\kappa_i(q_i)$ where $\kappa_i$ is the ranking function of $\aut{P}^\vigor_{u_i}$. The DPA
$\aut{P}_L^\vigor$ is termed the \term{vigor DPA} for $L$.
\end{definition}

\autoref{fig:P-rbst} provides the vigor DPA for a language with two equivalence classes, particularly for $L_{\text{mod}2}$ from  \autoref{rem:L-mod-2} and \autoref{ex:infix-rank-mod-2}.

\begin{restatable}[Correctness of $\aut{P}^\vigor_L$]{theorem}{thmrecoloratemanyequiv}\label{thm:recolorate-many-equiv}
\begin{inparaenum}
\item \label{correctness:lang} $\sema{\aut{P}^\vigor_L}=L$ and 
\item \label{correctness:letters} $\letterrank_{L}(w,k)=\kappa(\aut{P}^\vigor_L(w[..k]))$ for every lasso word $w\in\Sigma^\omega$ and 
$k\in\mathbb{N}$. 
\end{inparaenum}
\end{restatable}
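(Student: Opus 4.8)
The plan is to reduce \autoref{thm:recolorate-many-equiv} to the single-equivalence-class case already handled in \autoref{thm:recolorate-one-equiv}. Recall that $\aut{P}_L^\vigor$ is the product $\aut{A}[\sim_L]\times\aut{A}[\equiv^\vigor_{u_1}]\times\cdots\times\aut{A}[\equiv^\vigor_{u_m}]$ with rank inherited from the component $\aut{P}^\vigor_{u_i}$ selected by the $\sim_L$-component. The guiding observation is that on any lasso word $w=xv^\omega$, the run eventually settles into a single final MSCC, which pins down a single equivalence class $[x']_{\sim_L}=c_i$ of the spoke reaching the period; from that point on the behaviour of $\aut{P}_L^\vigor$ is governed exactly by $\aut{P}^\vigor_{u_i}$, and we want to invoke part \ref{correctness:letters} of \autoref{thm:recolorate-one-equiv} there.

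First I would prove part \ref{correctness:lang}, $\sema{\aut{P}^\vigor_L}=L$. Since the $\sim_L$-component tracks the syntactic right congruence, the final MSCC determines $c_i=[x']_{\sim_L}$ where $x'$ is the word read up to the start of the (stabilized) period $v'$. By \autoref{thm:recolorate-one-equiv}\ref{correctness:lang}, $\sema{\aut{P}^\vigor_{u_i}}=L_{u_i}=\{v^\omega \mid u_iv\sim_L u_i,\ u_iv^\omega\in L\}$, and since $u_i\sim_L x'$, membership $w=x'(v')^\omega\in L$ coincides with $(v')^\omega\in L_{u_i}$; a short argument that different representations of the same lasso word and the irrelevance of the spoke for acceptance (only the period's behaviour in the final MSCC matters) gives the equivalence. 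The rank on the accepting/rejecting run is taken from the $i$-th factor, so acceptance of $\aut{P}^\vigor_L$ on $w$ matches acceptance of $\aut{P}^\vigor_{u_i}$ on $(v')^\omega$, hence matches $w\in L$.

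Next, for part \ref{correctness:letters}, fix a lasso word $w$ and an index $k$. I would split into the case $k$ after the run has entered the final MSCC and the general case. For $k$ in the stabilized regime, let $c_i$ be the settled $\sim_L$-class; the rank $\kappa(\aut{P}^\vigor_L(w[..k]))$ equals $\kappa_i(\aut{P}^\vigor_{u_i}(\cdot))$, and I want this to equal $\letterrank_L(w,k)=\infixrank_L(w,\reset_w(k),k)$. Here the key is that $\infixrank_L$ and the reset-point/dominant-suffix machinery are all \emph{relative to the equivalence class of the current spoke}: by \autoref{rmrk:infix-no-increse}, if $u\sim_L xy$ then $\infixrank_L(x,yv)\le\infixrank_L(u,v)$, and in fact for the stabilized spoke we have genuine equality of the relevant ranks across representations with $\sim_L$-equivalent prefixes, so $\letterrank_L(w,k)$ computed in $L$ coincides with $\letterrank_{L_{u_i}}$ computed on the period word $v'^\omega$, which by \autoref{thm:recolorate-one-equiv}\ref{correctness:letters} equals $\kappa_i(\aut{P}^\vigor_{u_i}(v'^\omega[..k']))$. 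For $k$ before stabilization one argues that only finitely many indices are affected and that the product's $\sim_L$-component, combined with the fact that $\domsuf$ composes (\autoref{cor:domsuf}) and that $\reset_w$ is defined inductively via $\domind$, still tracks the correct dominant suffix; the definition of $\equiv^\vigor_u$ was precisely engineered so that states are refined enough to record the rank of the dominant suffix.

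The main obstacle I anticipate is the bookkeeping needed to show that the letter-rank notion is genuinely \emph{local to the current $\sim_L$-class}, i.e.\ that switching from the global language $L$ (with spoke context) to the per-class language $L_{u_i}$ (with period-only context) leaves $\letterrank$ unchanged for the indices that matter. This requires carefully re-examining \autoref{def:dom-ind}, \autoref{def:ref-indices} and \autoref{def:letter-rank} to confirm that the dominant suffix of $w[..k]$ never "reaches back" across the point where the $\sim_L$-class last changed in a way that would make its $\infixrank$ differ from the one computed inside $L_{u_i}$ — this is plausible because once we are in the final MSCC the spoke is fixed up to $\sim_L$, and \autoref{rmrk:infix-no-increse} together with \autoref{cor:domind} controls how far back $\reset_w(k)$ can be. Handling the transient prefix (before the final MSCC is entered) cleanly, so that the $\sim_L$-factor of the product correctly "hands off" to the right $\aut{P}^\vigor_{u_i}$ factor, is the other place where care is needed; I expect this to follow from the inductive structure of $\reset_w$ and \autoref{cor:domsuf}, but it is where a naive argument could slip.
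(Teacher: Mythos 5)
Your overall route---reducing the statement to \autoref{thm:recolorate-one-equiv}---is the same as the paper's, but your central handoff step has a genuine gap. You assume that once the run of $\aut{P}^\vigor_L$ enters its final MSCC, a single class $c_i$ of $\sim_L$ is ``settled'' and that from then on the product's rank is read from the one factor $\aut{P}^\vigor_{u_i}$, so that \autoref{thm:recolorate-one-equiv} can be invoked for that factor alone. By \autoref{def:rbst-dpa}, however, the rank of a product state $(c,q_1,\dots,q_m)$ is $\kappa_c(q_c)$, i.e.\ it is read from the factor indexed by the $\sim_L$-class of the \emph{prefix read so far}, and this class need not stabilize: inside the final loop it can keep cycling through all classes visited infinitely often. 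For $L_{\text{mod}2}$ of \autoref{rem:L-mod-2} (see \autoref{fig:P-rbst}) any period containing $a$'s makes the class component alternate between the two classes, so there is no settled $c_i$, and the ranks along the loop are taken alternately from the $\aut{P}^\vigor_{\varepsilon}$ and $\aut{P}^\vigor_{a}$ factors. Your justification (``once we are in the final MSCC the spoke is fixed up to $\sim_L$'') is exactly where this breaks: the prefix is $\sim_L$-invariant only at period boundaries, not at the intermediate indices $k$ for which item~2 must hold, so the asserted equality $\kappa(\aut{P}^\vigor_L(w[..k]))=\kappa_i(\aut{P}^\vigor_{u_i}(\cdot))$ for one fixed $i$ is unjustified. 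The same conflation undermines your argument for item~1, which again reads acceptance off a single factor.

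The paper confronts this multiplicity directly. It fixes the dominant suffix $y$ of a suitable unrolling of the period, shows $\infixrank_L(uv'x,y)=d=\wordrank_L(w)$, and then, for \emph{every} class $u_j$ of $\sim_L$ that occurs infinitely often along $w$, chooses a split of $y$ landing in $u_j$ and applies \autoref{thm:recolorate-one-equiv} to that factor, concluding that $\kappa_j(p_{n,j})=d$ at all positions $n$ where $y$ is a suffix of $w[..n]$, together with the eventual lower bound $\kappa_j(p_{n,j})\geq d$. Only then does it follow that whichever factor the current class selects, the traversed rank is correct, and item~1 is obtained from item~2 via \autoref{thm:natrank-of-letters-and-words} rather than by a separate acceptance argument. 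To repair your proposal you would need an analogous agreement argument across all infinitely recurring classes (your remark that the relevant ranks agree ``across representations with $\sim_L$-equivalent prefixes'' is precisely what has to be proved, via the dominant-suffix machinery), and a concrete treatment of indices before the final loop, since ``only finitely many indices are affected'' does not discharge the for-all-$k$ claim of item~2.
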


\begin{restatable}[$\equiv_L^\vigor$]{remark}{rmrkrbstquivclasses}\label{rmrk:rbst-equiv-classes}
$\aut{P}_L^\vigor$'s aut structure corresponds to the equivalence classes $\equiv_L^\vigor$ defined as follows: $x \equiv_L^\vigor y$ if and only if \begin{inparaenum}
    \item
 $x\sim_L y$ and 
 \item 
 for every equivalence class $c$ of $\sim_L$: $x \equiv_c^\vigor y$. 
\end{inparaenum}
\end{restatable}

\subsubsection*{From an arbitrary DPA to the vigor DPA}\label{sec:from-arb-dpa-to-vigor-dpa}

We turn to ask, given an arbitrary DPA $\aut{P}$ for $L$, how can we construct the vigor DPA of $L$, $\aut{P}_L^\vigor$. Following \autoref{rmrk:rbst-equiv-classes} this amounts to finding the equivalence classes of
$\equiv_L^\vigor$.

To do so, we note first that we can find the equivalence classes of $\sim_L$ by checking for every pair of states $q_1,q_2$ of $\aut{P}$ whether $\sema{\aut{P}_{q_1}}=\sema{\aut{P}_{q_2}}$ where $\aut{P}_q$ is the DPA obtained from $\aut{P}$ by making $q$ the initial state. DPA equivalence can be checked in polynomial time~\cite{Schewe10,AngluinF24}.

Second, we note that there exists polynomial time procedures that lower the rank of each state of a given DPA to the minimal rank possible (that does not change the language)~\cite{NiwinskiW98,CartonM99}. Next we show that from such a DPA  $\aut{P}$ we can compute $\infixrank_L(u,v)$ and the $\predomsuf_{L_u}(v)$.

\begin{restatable}[]{lemma}{lempotentiallineartime}\label{lem:potential-linear-time}
    Given a DPA $\mathcal{P}$ and words $u\in\Sigma^*$, $v\in\Sigma^+$, we can compute $\infixrank_L(u,v)$ and 
    $\predomsuf_{L_u}(v)$ in  time polynomial in $|uv|+|\aut{P}|$.
\end{restatable}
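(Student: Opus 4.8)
\textbf{Proof plan for Lemma~\ref{lem:potential-linear-time}.}
The plan is to reduce everything to computations on the product automaton $\aut{P}_{u.v}$ obtained by running $\aut{P}$ on the word $u$ and then, from the state $q=\aut{P}(u)$, reading $v$ letter by letter while tracking the set of states of $\aut{P}$ reachable from $q$ on suffix-invariant words together with the minimal rank visited on such loops. First I would compute $q=\aut{P}(u)$ in time $O(|u|\cdot|\aut{P}|)$ and then $q'=\delta^*(q,v)$; the equivalence class $[u]_{\sim_L}$ is represented by $q$ up to DPA-equivalence, and since the ranks of $\aut{P}$ are already lowered to the minimum by the preprocessing step, the state-rank $\kappa(q)$ equals $\wordrank_L$ of any word whose run stabilizes in $q$'s MSCC at rank $\kappa(q)$. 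The key observation is that $\infixrank_L(u,v)=\max\{\wordrank_L(w)\mid w\in L_{u.v}\}$ can be read off from $\aut{P}$ directly: once the minimal-rank DPA is fixed, $\infixrank_L(u,v)$ is the largest state-rank $d$ such that there is a word $vz$ with $uvz\sim_L u$ whose run from $q$ back to $q$ (closing a loop consistent with the right congruence) achieves minimal rank exactly $d$; this can be decided by a fixpoint/reachability computation over the MSCC of $q$ in $\aut{P}$ restricted to states equivalent to $q$, which is polynomial. The two trivial cases ($\infixrank=-2$ or $-1$) correspond to $q'$ being an accepting resp.\ rejecting sink-MSCC of $\aut{P}$ (i.e.\ $\sema{\aut{P}_{q'}}=\Sigma^\omega$ or $\emptyset$), again checkable in polynomial time by DPA (non)emptiness/universality on $\aut{P}_{q'}$, which itself is polynomial for DPAs.

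Next I would compute $\predomsuf_{L_u}(v)$ by walking along the prefixes $v[..1],v[..2],\ldots,v[..|v|]$ and maintaining, after each letter, the current reset point according to \autoref{def:ref-indices} and \autoref{def:dom-ind}. The crucial point is that at each step the only data needed to decide the dominant index of $k$ is $\infixrank_L(w,j,k)$ for the $O(|v|)$ candidate starting positions $j$, and whether $k$ is influential wrt the current reset point; by \autoref{lem:ds} (Claim~\ref{cor:domind}) it suffices to track a single reset point $\reset_v(k-1)$ rather than all earlier positions, so the walk is a single left-to-right pass with $O(|v|)$ work per letter, each unit of work being one $\infixrank$ query on $\aut{P}$ (polynomial by the first part) plus, for the non-influential branch, the extra test $\infixrank_L(w[..j),w[j..k]\cdot x)=\infixrank_L(w[..l),w[l..k]\cdot x)$ for all $x\in\Sigma^*$, which is a language-equivalence-style check between two states of a suitably augmented product automaton and is therefore decidable in polynomial time (a DPA-equivalence query). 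Accumulating, the whole procedure is polynomial in $|uv|+|\aut{P}|$. (The lemma only claims polynomial time, so I would not optimize the exponents.)

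The main obstacle I anticipate is making the first part — extracting $\infixrank_L(u,v)$ from $\aut{P}$ — genuinely rest only on semantic quantities and on $\aut{P}$, rather than on the (unavailable) minimal parity automaton of $\cite{NiwinskiW98}$. The subtlety is that $\infixrank$ is defined via $\wordrank$, which is itself defined by the injection-closure operator $\inj_I$; so I would need to argue that after the polynomial-time rank-lowering of \cite{NiwinskiW98,CartonM99} the state-ranks of $\aut{P}$ coincide with natural ranks on the relevant words (this is essentially the content of the correlation between minimal rank and inclusion measure cited there), and that the ``flower'' witnessing a natural rank $d$ inside $L_{u.v}$ can be found as an alternating-rank loop structure in the MSCC of $q$ in $\aut{P}$. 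This is where I expect the bulk of the careful argument to go; once it is in place, the rest of the computation — reachability, loop-minimal-rank fixpoints, DPA equivalence — is standard and polynomial, and the outer pass over the prefixes of $v$ is a routine bookkeeping exercise justified by \autoref{lem:ds} and \autoref{cor:domsuf}.
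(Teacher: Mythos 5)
Your treatment of $\infixrank_L(u,v)$ is essentially the paper's argument. Both rely on first normalizing the state ranks of $\aut{P}$ to their minimal values, both dispose of the degenerate values by inspecting where $\aut{P}$ lands after $uv$ (the paper tests transience and terminal accepting/rejecting MSCCs to return $\infty$, $-2$, $-1$; your universality/emptiness test on $\aut{P}_{q'}$ is the same check in semantic clothing), and both then compute the largest minimal rank achievable on a loop that reads $v$ and closes the congruence class of $u$ --- the paper implements this concretely by starting from $d$ equal to the minimal rank seen while reading $v$ from $q_u$, deleting all states of rank below $d$, BFS-testing reachability from $q_{uv}$ back to $q_u$, and decrementing $d$ on failure. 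You correctly flag that the whole correctness rests on the correlation between the normalized state ranks and the natural ranks, which the paper cites rather than re-proves; up to this point your plan is fine.

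The gap is in the second half. The lemma asks for $\predomsuf_{L_u}(v)$, not $\domsuf_{L_u}(v)$, and the two differ exactly in the ``for all $x\in\Sigma^*$'' clause of \autoref{def:dom-ind}(2). The paper's proof never iterates over reset points and never quantifies over $x$: it computes $d_i=\infixrank_L(u,v[i..])$ and $d'_i=\infixrank_L(u,v[i..))$ for all $i$, decides from these whether the last index is influential, and then outputs the largest $j$ with $d_j=d_1$ (influential case) or the smallest $j$ with $d_j=d_{|v|}$ (non-influential case), i.e., it applies \autoref{def:dom-ind} with $i=0$ directly to an array of $O(|v|)$ infix-rank queries. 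Your left-to-right walk maintaining reset points and performing the for-all-$x$ test computes \emph{dominant} indices instead, and this creates two concrete problems. First, you never say how the predominant index wrt position $0$ is read off the walk; \autoref{lem:ds} licenses tracking only $\reset_v(k-1)$ for \emph{dominant} indices, and the paper's own proof of \autoref{lem:ds} observes that in the non-influential case the predominant index of $k$ may lie at or before $\reset_w(k-1)$, so taking the longest suitable suffix inside the window starting at the tracked reset point can return a strictly shorter (wrong) predominant suffix. Second, the claim that the for-all-$x$ equality of infix ranks is ``a DPA-equivalence query'' is asserted rather than argued; it is plausibly polynomial via a product automaton that tracks both current states together with both running minima of ranks, but that construction must be spelled out, and the paper deliberately avoids needing it here (dominant-suffix information is only recovered later, in the learning-based construction of $\aut{P}^\vigor_L$, via equivalence queries). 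If you drop the reset-point walk and the dominance test and simply apply the definition of the predominant suffix with $i=0$ to the suffix ranks you already compute, your argument collapses to the paper's and is correct.
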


The proof shows how to compute $\infixrank_L(u,v)$, from which we can compute whether index $k$ of $w$ is influential, and in turn the predominant suffix of a given finite word. 

\begin{restatable}[]{lemma}{lemcomputerbstpfromp}\label{lem:compute-rbst-p-from-p}
Given a DPA $\aut{P}$ for $L$, we can compute $\aut{P}^\vigor_L$ in time polynomial in $|\aut{P}|+|\aut{P}^\vigor_L|$.
\end{restatable}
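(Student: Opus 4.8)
The plan is to assemble $\aut{P}^\vigor_L$ by realizing its state space as the product structure described in \autoref{def:rbst-dpa}, i.e.\ $\aut{A}[\sim_L]\times\aut{A}[\equiv^\vigor_{u_1}]\times\cdots\times\aut{A}[\equiv^\vigor_{u_m}]$, and to compute each factor and the ranking function effectively from the given DPA $\aut{P}$. By \autoref{rmrk:rbst-equiv-classes} it suffices to (a) compute the $\sim_L$-classes together with representatives $u_1,\dots,u_m$, and (b) for each representative $u=u_i$ compute the automaton structure $\aut{A}[\equiv^\vigor_u]$ together with the ranking of \autoref{def:Pu}. Step~(a) is the first paragraph of the ``From an arbitrary DPA'' discussion: minimize the ranks of $\aut{P}$ in polynomial time via~\cite{NiwinskiW98,CartonM99}, then merge language-equivalent states using the polynomial-time DPA-equivalence test~\cite{Schewe10,AngluinF24}; the resulting quotient is $\aut{A}[\sim_L]$ and any state in each class gives a representative word $u_i$ of length at most $|\aut{P}|$.

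For step~(b), the key subroutine is \autoref{lem:potential-linear-time}: given any $x\in\Sigma^*$ it lets us compute, in time polynomial in $|ux|+|\aut{P}|$, both $\infixrank_L(u,x)$ and $\predomsuf_{L_u}(x)$, and from the latter (together with the influential-index test that the proof of \autoref{lem:potential-linear-time} also provides) the full dominant suffix $\domsuf_{L_u}(x)$. I would then build $\aut{A}[\equiv^\vigor_u]$ by an on-the-fly subset-style exploration: maintain a worklist of discovered states, each state being the shortlex-least word $x$ reaching it; for a discovered state $x$ and letter $\sigma$, compute the canonical ``signature'' of $x\sigma$ and check whether it has been seen before. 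The right signature to use is the one implicit in \autoref{def:robustness}: two words are $\equiv^\vigor_u$-equivalent iff reading any $z$ afterwards yields dominant suffixes of equal $\infixrank_L(u,\cdot)$ value. Crucially, by \autoref{cor:domsuf} the map $x\mapsto \domsuf_{L_u}(x)$ factors through a finite automaton whose successor is $x\mapsto \domsuf_{L_u}(\domsuf_{L_u}(x)\cdot\sigma)$, so the pair $\big(\domsuf_{L_u}(x),\; \aut{P}(ux)\big)$ — the current dominant suffix plus the state of (a rank-minimized copy of) $\aut{P}$ reached by $ux$ — is a right-congruence-respecting finite memory from which $\infixrank_L(u,\domsuf_{L_u}(xz))$ is determined for every $z$ (using \autoref{lem:potential-linear-time} relative to that memory). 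Hence that pair is a valid, finitely-many-valued signature for the $\equiv^\vigor_u$-classes, successor states are computed by a single call to the subroutine per outgoing edge, and the exploration halts after discovering at most $|\aut{A}[\equiv^\vigor_u]|$ states, each reached by a shortlex-least word of length bounded by that number. Finally, rank each discovered state $q_x$ by $\infixrank_L(u,x)$ for its shortlex-least representative $x$, again via \autoref{lem:potential-linear-time}; taking the product over $c_1,\dots,c_m$ and the ranking $\kappa(c_i,q_1,\dots,q_m)=\kappa_i(q_i)$ yields $\aut{P}^\vigor_L$, and correctness is exactly \autoref{thm:recolorate-many-equiv}. Every representative word encountered has length polynomial in $|\aut{P}^\vigor_L|$, every call to \autoref{lem:potential-linear-time} is polynomial in $|\aut{P}|$ plus that length, and the number of calls is polynomial in $|\aut{P}^\vigor_L|$, giving the claimed bound.

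I expect the main obstacle to be justifying that the finite signature I use for the $\equiv^\vigor_u$-classes is both \emph{sound} (equal signatures imply $\equiv^\vigor_u$) and \emph{complete} (distinct $\equiv^\vigor_u$-classes get distinct signatures), and that it is genuinely \emph{computable} state-by-state without ever unfolding an infinite set of test suffixes $z$. Soundness and the Myhill--Nerode-style finiteness follow from \autoref{cor:domsuf} (the dominant-suffix map is automatic) combined with \autoref{rmrk:shortlex-no-increse} and \autoref{rmrk:infix-no-increse} (ranks along a period are non-increasing, so the relevant state space of ``current dominant suffix + minimized-$\aut{P}$ state'' is finite); completeness needs a short argument that if $x\not\equiv^\vigor_u y$ then some distinguishing $z$ is already visible in the signature, i.e.\ the signature carries all information about future dominant-suffix ranks — this is where I would lean hardest on the analysis already carried out in the proof of \autoref{lem:potential-linear-time}. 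The remaining bookkeeping — bounding representative lengths, counting subroutine calls, and folding in the polynomial-time rank minimization and DPA-equivalence primitives — is routine.
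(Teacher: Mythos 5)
Your overall frame (recover $\sim_L$ from $\aut{P}$ via rank minimization and DPA-equivalence tests, build each $\aut{A}[\equiv^\vigor_{u_i}]$ using the subroutine of \autoref{lem:potential-linear-time}, then assemble the product of \autoref{def:rbst-dpa} and rank it, with correctness delegated to \autoref{thm:recolorate-many-equiv}) matches the paper's proof. But the core step --- how to actually obtain $\aut{A}[\equiv^\vigor_u]$ --- is done differently there and is where your proposal has a genuine gap. The paper does not do an on-the-fly exploration with explicit state signatures at all: it runs an active-learning ($L^\star$-style) algorithm in which the observation-table entry for a pair $(x,y)$ is the value $\infixrank_L(u,\predomsuf_{L_u}(xy))$, answerable in polynomial time by \autoref{lem:potential-linear-time}, and equivalence queries are answered by the polynomial-time DPA-equivalence algorithm applied to $\aut{P}_{q_u}$. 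The learner converges to the canonical automaton for $\equiv^\vigor_u$ with a number of queries polynomial in the size of the target, which is exactly what makes the bound ``polynomial in $|\aut{P}|+|\aut{P}^\vigor_L|$'' come out for free.

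The gap in your route is the signature $(\domsuf_{L_u}(x),\aut{P}(ux))$. By \autoref{cor:domsuf} equal signatures do imply $x\equiv^\vigor_u y$, so merging is sound; but note that what you call completeness (``distinct $\equiv^\vigor_u$-classes get distinct signatures'') is just the contrapositive of soundness, not an additional property. What you actually need, and never establish, is an upper bound on the number of \emph{reachable signatures}: the dominant-suffix map induces a partition that refines $\equiv^\vigor_u$, and a refinement of a finite-index congruence may have far more classes --- a priori not polynomially many in $|\aut{P}^\vigor_L|$, and its finiteness is itself unproven here. \autoref{rmrk:infix-no-increse} and \autoref{rmrk:shortlex-no-increse} bound the number of distinct \emph{rank values}, not the number of distinct dominant suffixes, so they do not give finiteness or polynomiality of your state space; without that, neither termination within $|\aut{A}[\equiv^\vigor_u]|$-many discovered states nor the claimed running time follows. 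A secondary problem: \autoref{lem:potential-linear-time} computes $\predomsuf_{L_u}$, not $\domsuf_{L_u}$; extracting the dominant suffix in the non-influential case requires deciding the universally quantified condition $\infixrank_L(w[..j),w[j..k]\cdot v)=\infixrank_L(w[..l),w[l..k]\cdot v)$ for all $v\in\Sigma^*$, which that lemma does not provide and which would need its own (plausible, but missing) polynomial-time procedure. The paper's learning-based argument sidesteps both issues by never materializing dominant suffixes as states: it only queries predominant-suffix ranks and lets the Myhill--Nerode-style convergence of $L^\star$ bound the work by the size of $\aut{P}^\vigor_u$ itself.
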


The proof uses ideas of active learning~\cite{Angluin87} to compute the equivalence classes of $\aut{P}_L^\vigor$. In a sense, the dominant suffix equivalence classes are  computed from queries on the predominant-suffixes.

\begin{remark}
  We note that while this procedure is polynomial in the size of the given input DPA $\aut{P}$ and the vigor DPA $\aut{P}^\vigor_L$, the size of
 $\aut{P}^\vigor_L$ can be exponential in the size of $\aut{P}$ as is the case for the precise DPA for a language~\cite{BohnL24}.  A family of languages exhibiting this is $L_n$ over $\Sigma_n=\{a_1,...,a_n\}$ requiring each letter of $\Sigma_n$ to occur infinitely often. 
 There is a DPA for $L_n$  with $n$ states, yet every {subset} of $\Sigma_n$ resides in its own equivalence class in $\equiv_{L_n}^\vigor$. 
 \footnote{A DPA for $L_n$ that needs not give the correct color for every read prefix, can decide on an arbitrary order of elements in $\Sigma_n$ and after reading the $i$-th letter wait for the $(i{+}1)$-th letter (modulo $n$). Whereas a DPA that has to provide the correct color on every read prefix, has to signal that all letters of $\Sigma_n$ have arrived as soon as this is the case (and thus is required to record all subsets of $\Sigma_n$).}
\end{remark}

\subsection{Getting the colors correct: the robustness DPA $\aut{P}^\truerobustness_L$}\label{sec:truerobusness-dpa}
The vigor DPA $\aut{P}^\vigor_L$ from~\autoref{def:rbst-dpa} provides the desired tight correlation between the ranks of the traversed edges and the ranks given to letters by $\letterrank_L$.
We can color the edges of the resulting DPA using $\theta(d,d')$ from \autoref{def:letter-color} where $d$ and $d'$ are the ranks of the edges' source and target. This, however, will induce a tight correlation to the color of the letters as per \autoref{def:letter-color} only 
in case there are no black or white letters. 
If such exist, then traversing a black (resp. white) edge will result in reaching a rejecting (resp. accepting) terminal MSCC of the DPA, and thus all subsequent transitions will be black (resp. white). Since the color of letters (\autoref{def:letter-color}) builds on the forgetful version of the read prefix, there could be non-black (resp. non-white) letters after a black (resp. white) letter.  \autoref{ex:black-edges-issue} exemplifies this issue. To remedy this we introduce a new type of acceptance for $\omega$-automata: \emph{forgetful parity}.

\begin{definition}[Forgetful parity automaton]\label{def:forgetful-parity}
    A \term{forgetful parity} automaton is a tuple $(\Sigma,Q,q_0,\delta,\kappa)$ like a parity automaton, with three differences. First, the minimal rank is $-2$.
Second,
there are $\varepsilon$-transitions. We require that if $\delta(q,\varepsilon)$ is defined then $\delta(q,\sigma)$ is undefined for every $\sigma\in\Sigma$. Thus, the automaton is deterministic. 
Third, a run $\rho=q_0q_1q_2...$ of a forgetful parity automaton is \emph{accepting} if either \textcolor{lipicsBulletGray}{\textbf{(*)}} there exists  $i\in\mathbb{N}$ such that $\kappa(q_i)=-2$ and $\kappa(q_j) \geq 0$ for all $j<i$ or \textcolor{lipicsBulletGray}{\textbf{(**)}} $\kappa(q_i)\geq 0$ for all $i\in\mathbb{N}$ and 
$\min(\inf(\{\kappa(q_i)~|~i\in\mathbb{N}\})$ is even.
\end{definition}

Note that (**) is the usual parity acceptance condition, and (*) reflects that the word reached a terminal accepting MSCC.

Given a forgetful parity automaton, we color its edges as follow: edge $(q,\sigma,q')$ is colored  $\theta(\kappa(q),\kappa(q'))$ using $\theta$ of \autoref{def:letter-color}. 
An $\varepsilon$-edge is not colored. See \autoref{fig:aut-for-examples-parity-colored}.
Let  $w=\sigma_1\sigma_2... \in\Sigma^\omega$ and let $\rho_w=q_0q_1q_2...$ be the run of a forgetful parity automaton $\aut{P}$ on $w$.
We use $\mathterm{\ecolor^\aut{P}_{w}(i)}$ for the color of the edge $(q_{i-1},\sigma_i,q_i)$. 

\begin{restatable}[The robustness DPA]{proposition}{propedgecolorcorrectness}\label{prop:edge-color-correctness} Let $\aut{P}_L^\vigor$ be the vigor DPA for $L$. There exists a polynomial time procedure returning a forgetful DPA $\mathterm{\aut{P}^\truerobustness_L}$ such that $\sema{\aut{P}_L^\vigor}=\sema{\aut{P}^\truerobustness_L}$ and 
$\ecolor^{\aut{P}^\truerobustness_L}_w(i)=\lcolor_w({i})$
for all $w{\in}\Sigma^\omega$ and $i{\in}\mathbb{N}$.
We term $\mathterm{\aut{P}^\truerobustness_L}$ the \term{robustness DPA} for $L$. 
\end{restatable}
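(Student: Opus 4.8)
The plan is to build $\aut{P}^\truerobustness_L$ as a small local modification of the vigor DPA $\aut{P}^\vigor_L$ that turns every negative-rank state from a trap into a ``forget-and-resume'' gadget. Write $V^{\ge 0}$ for the set of states of $\aut{P}^\vigor_L$ of non-negative rank. The states of $\aut{P}^\truerobustness_L$ will be $V^{\ge 0}$ together with all pairs $(q,c)$ with $q\in V^{\ge 0}$ and $c\in\{-1,-2\}$; put $\kappa(q)=\kappa^\vigor(q)$, $\kappa(q,c)=c$, and keep the initial state (the degenerate case in which the initial rank is negative means $L\in\{\emptyset,\Sigma^\omega\}$ and is handled directly). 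From a state $s$ with $V^{\ge 0}$-component $q$, on letter $\sigma$ we set $q'=\delta^\vigor(q,\sigma)$ and move to $q'$ if $\kappa^\vigor(q')\ge 0$ and to $(q,\kappa^\vigor(q'))$ otherwise, coloring the edge $\theta(\kappa(s),\kappa(q'))$ with the $\theta$ of \autoref{def:letter-color}. (Equivalently one can keep the negative-rank vigor states and add $\varepsilon$-edges carrying the memory $c$; \autoref{def:forgetful-parity} permits either presentation.) This is a forgetful DPA of size at most $3\,|\aut{P}^\vigor_L|$, produced in polynomial time, so what is left is to verify the two identities.

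The heart of the argument is an invariant relating the run of $\aut{P}^\truerobustness_L$ on $w=\sigma_1\sigma_2\cdots$ to the forgetful version $\fg_L(w)$. Setting $d_j=\letterrank_L(\fg_L(w[..j))\cdot\sigma_j)$ for $j\ge 1$, I would prove by induction on $i$ that after reading $w[..i]$ the automaton is at a state $q_i$ whose $V^{\ge 0}$-component is $\aut{P}^\vigor_L(\fg_L(w[..i]))$ and with $\kappa(q_i)=d_i$. The induction uses the definition of $\fg_L$ together with the letter-rank identity of \autoref{thm:recolorate-many-equiv}, which — since $\letterrank_L(w,k)$ depends only on $w[..k]$, hence equals $\letterrank_L(w[..k])$ — reads $\letterrank_L(v)=\kappa^\vigor(\aut{P}^\vigor_L(v))$ for every finite word $v$ and thus gives $d_i=\kappa^\vigor(\delta^\vigor(\aut{P}^\vigor_L(\fg_L(w[..i))),\sigma_i))$: if this is $\ge 0$, $\fg_L$ keeps $\sigma_i$ and we land in $\delta^\vigor(\cdot,\sigma_i)$; otherwise $\sigma_i$ is dropped, the $V^{\ge 0}$-component is unchanged, and the memory stores $d_i$. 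Given the invariant, the $i$-th edge runs from the rank-$d_{i-1}$ state $q_{i-1}$ to the rank-$d_i$ state $q_i$ — precisely the two values $d,d'$ fed into $\theta$ in \autoref{def:letter-color} for the $i$-th letter — so $\ecolor^{\aut{P}^\truerobustness_L}_w(i)=\theta(d_{i-1},d_i)=\lcolor_w(i)$ for all $i$, the first-letter case being routine under the convention for $\fg_L$ of the empty prefix.

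For $\sema{\aut{P}^\truerobustness_L}=L$ I would read the ranks $d_0,d_1,d_2,\dots$ off the run and invoke the forgetful acceptance condition of \autoref{def:forgetful-parity}, everything turning on the \emph{first negative $d_i$}. Before that index $\fg_L$ has dropped nothing, so there $d_j=\letterrank_L(w,j)$, and a value $-1$ (resp.\ $-2$) at index $j$ says exactly that $w[..j]$ has all extensions outside (resp.\ inside) $L$, an extension-closed property. This yields three exhaustive cases: (i) no $d_i$ is negative, so $\fg_L(w)=w$, the run is a plain parity run over non-negative ranks, and by \autoref{thm:natrank-of-letters-and-words} its $\min\inf$ is $\wordrank_L(w)$, even iff $w\in L$, matching clause (**); (ii) the first negative $d_i$ is $-1$, so $w\notin L$ and the run exhibits a $-1$ with nothing negative before it, so both (*) and (**) fail; (iii) the first negative $d_i$ is $-2$, so $w\in L$ and (*) fires at that index with all earlier ranks $\ge 0$, so the run accepts — regardless of what $\fg_L$ does afterwards, in particular even though $\fg_L$ may immediately drop that very letter. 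In each case acceptance agrees with membership in $L=\sema{\aut{P}^\vigor_L}$, which finishes the proof.

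I expect case (iii) and the accounting it forces to be the real obstacle, precisely because $\fg_L$ does \emph{not} preserve membership in $L$: dropping a $-1$-letter can push the forgetful version of a rejected word into $L$, so the automaton must never be allowed to ``re-evaluate'' after meeting a negative rank, and it is exactly the asymmetry engineered into clause (*) of \autoref{def:forgetful-parity} — ``the first negative rank is $-2$ and every rank before it is $\ge 0$'' — that makes the construction sound. Checking that this asymmetry interacts correctly with the memory/$\varepsilon$-steps, and that the edge-color equality of the second paragraph survives them, is the part that needs care; the rest is routine bookkeeping with the invariant.
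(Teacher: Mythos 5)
Your overall plan coincides with the paper's: make the negative-rank states of $\aut{P}^\vigor_L$ non-trapping so that the automaton tracks the forgetful version $\fg_L$ of the input, establish the state/rank invariant, and decide membership by a case analysis on the first negative rank encountered. The paper realizes this by adding, for every state $q$ with an edge into a $-1$- or $-2$-terminal MSCC, a dedicated state $q_\bot$/$q_\top$ (ranked $-1$/$-2$) together with an uncolored $\varepsilon$-edge back to $q$; your memory pairs $(q,c)$ are essentially the same gadget, your invariant and acceptance analysis are sound (and more explicit than the paper's one-line justification), and the size and time bounds are fine.

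The divergence is in the letter read immediately after a dropped (black or white) letter, which you yourself flagged as the delicate spot. In the paper's construction the uncolored $\varepsilon$-step returns to the source $q$ before the next letter is consumed, so that next edge is $(q,\sigma',\cdot)$ and is colored from $\kappa(q)\geq 0$; in your automaton the next letter is read from the memory state of rank $c\in\{-1,-2\}$, so its color is $\theta(c,d')$, which for $d'\geq 0$ is forced to be \col{yellow} and can never be \col{green} or \col{red}. Concretely, for $L_{\ltlG a}$ and $w=ba^\omega$ your automaton colors the letters black, yellow, green, green, \dots, whereas the paper's construction, its automaton $\aut{P}^\truerobustness_{L_{\ltlG a}}$ in \autoref{fig:aut-for-examples-parity-colored}, \autoref{ex:black-edges-issue} (which demands g\,g\,g\,b\,g\,g\,g for $aaabaaa\cdots$), and the discussion right after the forgetful-version definition all insist on black, green, green, \dots. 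To be fair, the literal formula in \autoref{def:letter-color} (where $d=\letterrank(\fg(w[..i))\cdot\sigma_i)$ is the negative value when $\sigma_i$ is dropped) does support your computation, so your proof is internally consistent with that reading; but it is not what the paper's own proof constructs, and under the coloring the paper evidently intends (and which its figures and examples exhibit), the claimed equality $\ecolor^{\aut{P}^\truerobustness_L}_w(i)=\lcolor_w(i)$ fails at every position immediately following a black or white letter. The repair is exactly the paper's device: interpose the uncolored $\varepsilon$-return to $q$ (equivalently, let $(q,c)$ transition as $q$ does but color its outgoing edges using $\kappa(q)$ rather than $c$); with that change the rest of your argument, including the language-equality cases, goes through unchanged.
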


The proof uses a simple procedure that replaces $(q,\sigma,q')$ transitions to state $q'$ in a terminal accepting MSCCs by $(q,\sigma,q_{\top})$-transition to a dedicated state $q_\top$, and a respective $\varepsilon$-transition $(q_\top,\varepsilon,q)$, and similarly for transitions to terminal rejecting MSCCs.
See \autoref{fig:no-redundant-grants} and \autoref{fig:danger} for the vigor and robustness DPAs for the languages in \autoref{ex:no-redundant-grants} and \autoref{ex:danger}.

\subsection{Computing robustness values}

The robustness value of a given word can be computed in polynomial time from the robustness automaton, as we show next. Following this, given two words we can compute which one is preferred according to the robustness preference relation.

\begin{restatable}[]{proposition}{propcomputevalue}\label{prop:compute-value}
    Given the robustness automaton $\aut{P}_L^\truerobustness$ for $L$ and a lasso representation $(u,v)$, we can compute $\valrbst_{L}(uv^\omega)$ in polynomial time.
\end{restatable}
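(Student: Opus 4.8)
To compute $\valrbst_L(w)$ for $w:=uv^\omega$ we determine its three coordinates $(a_w,\ \tau_{v'},\ k(\tau_{u'}{-}\tau_{v'}))$ as prescribed by \autoref{def:rbst-scr}, where $(u',v')=\rbstdecomp_L(w)$, $k=|u'|$, $l=|v'|$, $\tau_{u'}=\avgscore_w(1,k)$ and $\tau_{v'}=\avgscore_w(k{+}1,k{+}l)$. The computation has three phases: (1) a single simulation of $w$ on $\aut{P}^\truerobustness_L$ yielding $a_w$, the natural rank $\wordrank_L(w)$, and every letter color $\lcolor_w(\cdot)$; (2) a polynomially bounded search recovering $(u',v')=\rbstdecomp_L(w)$; (3) rational arithmetic assembling the tuple.

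\emph{Phase 1.} As $w$ is a lasso word, simulate it on $\aut{P}^\truerobustness_L$ by reading $u$ and then reading $v$ over and over; since each pass over $v$ acts deterministically on the finitely many states of $\aut{P}^\truerobustness_L$ (resolving $\varepsilon$-transitions, which never leave a terminal MSCC once entered), the state at a period boundary recurs after polynomially many passes, so in polynomial time we obtain a description of the run $\rho_w$ (a polynomial-length stem followed by a polynomial-length cycle) together with the sequences of state ranks and edge colors along it. Evaluating the forgetful-parity condition of \autoref{def:forgetful-parity} on this description decides $w\in L$ and hence fixes $a_w\in\{1,-1\}$. The state rank at position $k$ of $\rho_w$ equals $\letterrank_L(w,k)$ by \autoref{thm:recolorate-many-equiv} (the $\varepsilon$-splicing of \autoref{prop:edge-color-correctness} leaves the ranks of the states of $\aut{P}^\vigor_L$ unchanged), so $d:=\wordrank_L(w)$ is the least rank recurring infinitely often along $\rho_w$ by \autoref{thm:natrank-of-letters-and-words} (and \autoref{def:inf-nat-rank-minuses} for the case $d<0$, in which $\rho_w$ has entered a terminal accepting resp.\ rejecting MSCC). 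Finally $\lcolor_w(i)=\ecolor^{\aut{P}^\truerobustness_L}_w(i)$ for all $i$ by \autoref{prop:edge-color-correctness}, so all needed colors are read directly off $\rho_w$.

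\emph{Phase 2, which is where the argument is most delicate.} We must find the shortest $(x,y)$ with $w=xy^\omega$ and $\infixrank_L(x,y)=d$. Two facts reduce this to a polynomial search. First, since $\sema{\aut{P}^\truerobustness_L}=L$, undoing the construction of \autoref{prop:edge-color-correctness} (contract the $\varepsilon$-transitions, read $-2$ and $-1$ as an even resp.\ odd rank) produces an ordinary DPA for $L$, so by \autoref{lem:potential-linear-time} we can test $\infixrank_L(x,y)=d$ for any candidate in time polynomial in $|xy|+|\aut{P}^\truerobustness_L|$. Second, $\rbstdecomp_L(w)$ has polynomially bounded length: taking $p$ to be a position on the cycle of the run of $\aut{P}^\vigor_L$ on $w$ at which a state of least cycle-rank recurs, and letting $v'_0=w[p{+}1..p{+}\ell]$ with $\ell$ the corresponding cycle length, the pair $(w[..p],v'_0)$ is a decomposition of $w$ of polynomial length with $\infixrank_L(w[..p],v'_0)=d$ — it is $\geq d$ by taking $z=\varepsilon$ in \autoref{def:infix-rank}, and $\le d$ follows from \autoref{rmrk:infix-no-increse} together with the fact that, by construction of $\equiv_L^\vigor$, closing this loop in $\aut{P}^\vigor_L$ realizes the least rank reachable while remaining $\sim_L$-equivalent to $w[..p]$ — and $\rbstdecomp_L(w)$, being the shortest decomposition with $\infixrank_L=d$, is no longer than this one. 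It therefore suffices to enumerate, in shortlex order, the polynomially many decompositions $(x,y)$ of $w$ whose length is at most this bound, test each via \autoref{lem:potential-linear-time}, and return the first with $\infixrank_L(x,y)=d$; this is $(u',v')=\rbstdecomp_L(w)$.

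\emph{Phase 3.} With $(u',v')$, $k=|u'|$ and $l=|v'|$ known, use the colors $\lcolor_w(1),\dots,\lcolor_w(k{+}l)$ from Phase~1 to tally $\icolors_w(1,k)$ and $\icolors_w(k{+}1,k{+}l)$, form the averaged scores $\tau_{u'}=\avgscore_w(1,k)$ and $\tau_{v'}=\avgscore_w(k{+}1,k{+}l)$ over $\mathbb{Q}$, and output $(a_w,\ \tau_{v'},\ k(\tau_{u'}{-}\tau_{v'}))$, which equals $\valrbst_L(w)$ by \autoref{def:rbst-scr}. Every phase is polynomial in $|uv|+|\aut{P}^\truerobustness_L|$. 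The part requiring the most care is the length bound on $\rbstdecomp_L(w)$ in Phase~2 — namely that a minimal-rank recurrence of the vigor run already yields a decomposition whose infix-rank is exactly $d$ — after which the bounded search, and hence the whole computation, is routine.
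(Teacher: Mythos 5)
Your Phases 1 and 3 are sound and essentially match the paper (simulate the lasso on $\aut{P}^\truerobustness_L$, read off acceptance, ranks and colors via \autoref{prop:edge-color-correctness}, then tally the scores). The genuine gap is in Phase 2, at exactly the step you flag as delicate. The inequality $\infixrank_L(w[..p],v'_0)\le d$ does not follow from \autoref{rmrk:infix-no-increse} plus the appeal to the construction of $\equiv_L^\vigor$, and for the witness you chose it is in fact false. Take $L$ to be ``infinitely many occurrences of the infix $aa$'' over $\{a,b\}$ and the input lasso $(\varepsilon,a)$, so $w=a^\omega$ and $d=\wordrank_L(w)=0$. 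After reading $aa$ the run stabilizes in a rank-$0$ state with an $a$-self-loop (for every $z$ the prefixes $aaz$ and $aaaz$ have dominant suffixes of equal rank, and the dominant suffix of $a^k$ for $k\ge 2$ is $aa$ of rank $0$), so your recipe gives $p=2$, $\ell=1$ and the candidate decomposition $(aa,a)$. But $\infixrank_L(aa,a)\neq 0$: taking $z=b$ in \autoref{def:infix-rank} yields $aa(ab)^\omega\notin L$, whose rank is odd, so $\infixrank_L(aa,a)\ge 1$. The rank-$d$ dip on the loop is realized only thanks to letters of the spoke (the dominant suffix $aa$ straddles the split point), and a different return word $z$ destroys it; \autoref{rmrk:infix-no-increse} only compares periods extended on the right or shifted left together with the spoke, so it cannot exclude this. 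Hence your exhibited decomposition need not have infix rank $d$, the polynomial bound on $|\rbstdecomp_L(w)|$ is not established, and the enumeration of Phase 2 is unjustified as written (the enumeration itself, given a bound and using \autoref{lem:potential-linear-time} on the DPA recovered by contracting the $\varepsilon$-edges, would be fine).

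The fact you are missing is precisely what the paper isolates as Claim~\ref{clm:rbst-dcmp-to-aut-dcmp}: the robustness decomposition can be read off the run of $\aut{P}^\truerobustness_L$ as the prefix up to the first state of the final loop together with the loop word, and the paper's proof of the proposition simply invokes that claim, needing neither a length bound nor an enumeration. To repair your route you should either cite that claim (which makes Phase 2 a one-liner), or prove a correct polynomial-length witness yourself, e.g. one whose period starts at the reset point of a minimal-rank position so that it contains the entire dominant suffix -- but that is essentially reproving the paper's claim. As it stands, the load-bearing step is asserted rather than proved, and the assertion fails on the example above.
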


The proof relies on the following claim correlating $\rbstdecomp_L()$ to the robustness DPA.

\begin{restatable}[]{myclaim}{clmrbstdcmptoautdcmp}\label{clm:rbst-dcmp-to-aut-dcmp}
Let $\aut{P}^\truerobustness_L$ be the robustness DPA for $L$ and let $w$ be a lasso word. Let $q$  be the first state reached in the final loop of the run of $\aut{P}^\truerobustness_L$ on $w$. Let $u$ be the prefix of $w$ read from the initial state to $q$ and $v$ the infix of $w$ read from $q$ to $q$. Then $\rbstdecomp_L(w)=(u,v)$.
\end{restatable}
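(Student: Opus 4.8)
The plan is to show that the decomposition $(u,v)$ extracted from the run satisfies the three defining clauses of $\rbstdecomp_L(w)$ in \autoref{def:rbst-dcmp}: \textbf{(i)} $w=uv^\omega$, \textbf{(ii)} $\infixrank_L(u,v)=\wordrank_L(w)$, and \textbf{(iii)} $(u,v)$ is the shortest decomposition satisfying \textbf{(i)} and \textbf{(ii)}. First I would fix what ``the final loop'' means: the run $\rho_w=q_0q_1q_2\dots$ of $\aut{P}^\truerobustness_L$ on the lasso word $w$ is ultimately periodic, $q$ is the first state from which $\rho_w$ is periodic, $u$ is the prefix of $w$ consumed up to that point and $v$ the infix consumed while traversing the cycle once; then \textbf{(i)} is immediate. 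Two preliminary observations: since $\aut{A}[\sim_L]$ is a factor of the automaton structure of $\aut{P}^\vigor_L$, hence of $\aut{P}^\truerobustness_L$ (\autoref{def:rbst-dpa}, \autoref{rmrk:rbst-equiv-classes}), the $v$-loop at $q$ projects to a $v$-loop at $[u]_{\sim_L}$, so $uv\sim_L u$ and thus $uv^n\sim_L u$ for all $n$; and by the rank part of \autoref{thm:recolorate-many-equiv} together with \autoref{thm:natrank-of-letters-and-words}, the ranks of the states on the final loop are exactly the letter ranks $\letterrank_L(w,k)$ for the periodic indices $k$, so the minimum rank appearing on the final loop equals $\wordrank_L(w)$.

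For \textbf{(ii)} I would argue the two inequalities separately. The inequality $\infixrank_L(u,v)\ge\wordrank_L(w)$ is easy: in the generic case $\wordrank_L(w)\ge 0$ we have $w=u(v\varepsilon)^\omega\in L_{u.v}$ (using $uv\sim_L u$), hence $\infixrank_L(u,v)=\max\{\wordrank_L(w')\mid w'\in L_{u.v}\}\ge\wordrank_L(w)$; the cases $\wordrank_L(w)\in\{-2,-1\}$, which occur exactly when $\rho_w$ enters a terminal accepting resp.\ rejecting MSCC, give the matching clause of \autoref{def:infix-rank} directly from the matching clause of \autoref{def:inf-nat-rank-minuses}. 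The reverse inequality $\infixrank_L(u,v)\le\wordrank_L(w)$ is the heart of the proof: given any $z$ with $vz\in\Sigma^+$ and $uvz\sim_L u$, I must show $\wordrank_L(u(vz)^\omega)\le\wordrank_L(w)$. The key point is that $u(vz)^\omega$ is obtained from $w=uv^\omega$ by an infinite series of legal injections of the word $z$ at the positions $|u|+n|v|$ for $n\ge 1$: each such prefix of $w$ is $uv^n\sim_L u$, and $uv^nz\sim_L uv^n$ by right congruence since $uvz\sim_L u\sim_L uv^n$. One then concludes $\wordrank_L(u(vz)^\omega)\le\wordrank_L(w)$ from monotonicity of the natural rank under injection: a witnessing injection set $I$ for $\wordrank_L(w)$ (\autoref{def:inf-nat-rank-minuses}, with the ``$\exists I$'' understood as discussed around \autoref{remark-exists-I}) can be enlarged to also contain the positions $|u|+n|v|$, after which every $w''\in\inj_I(u(vz)^\omega)$ is also in $\inj_I(w)$ and hence has rank below or equal to that of $w$. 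Taking the maximum over $z$ yields $\infixrank_L(u,v)\le\wordrank_L(w)$, so \textbf{(ii)} holds; this is consistent with \autoref{rmrk:infix-no-increse}, where reading the whole period already drives the infix rank down, while no proper prefix of the period need do so.

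For \textbf{(iii)} suppose $(x,y)$ is any decomposition of $w$ with $\infixrank_L(x,y)=\wordrank_L(w)$ and $|xy|<|uv|$. Since $\rho_w$ depends only on $w$, the states reached after $x,xy,xy^2,\dots$ are the sub-sequence $q_{|x|},q_{|x|+|y|},\dots$ of $\rho_w$. Using the state-rank/letter-rank correspondence of \autoref{thm:recolorate-many-equiv}, if $|x|<|u|$ then $q_{|x|}$ is not yet on the final loop (the periodic part of $\rho_w$ has not started), and if $|y|$ is shorter than the minimal run-period $|v|$ then reading $y$ from $q_{|x|}$ does not close a loop on which the minimum rank $\wordrank_L(w)$ is realized; in either case one shows $\infixrank_L(x,y)>\wordrank_L(w)$ --- this is exactly the phenomenon that the period chosen for $a^\omega$ with respect to $L_{a\text{-seq}}$ must be long enough to contain $aaa$, not $a$ or $aa$ --- contradicting validity of $(x,y)$. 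Hence $|xy|\ge|uv|$, and since $(u,v)$ is itself valid by \textbf{(i)} and \textbf{(ii)}, we get $(u,v)=\rbstdecomp_L(w)$.

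I expect the main obstacle to be the $\le$ direction of \textbf{(ii)} together with the ``period long enough'' half of \textbf{(iii)}; both reduce to the statement that $\aut{P}^\truerobustness_L$ realizes $\infixrank_L$ \emph{loop by loop}, not merely \emph{letter by letter}. The cleanest route I see is to first prove the auxiliary lemma: for a finite word $x$ reaching a state $r$ of $\aut{P}^\truerobustness_L$ and any $y$ such that reading $y$ from $r$ returns to $r$, one has $\infixrank_L(x,y)=\min\{\kappa(r')\mid r'\text{ is visited while reading }y\text{ from }r\}$; both hard parts then fall out of it. Proving this lemma is where the real work lies: it amounts to showing that the dominant-suffix equivalence $\equiv^\vigor$ underlying $\aut{P}^\truerobustness_L$ records exactly the information a prefix needs in order to pin down the $\infixrank_L$ of every loop it closes, which I would establish by unfolding \autoref{def:robustness} and \autoref{thm:recolorate-one-equiv} and tracking how resets (\autoref{cor:domsuf}) interact with the periodic part of the run.
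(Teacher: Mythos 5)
The central gap is in your $\le$ direction of \textbf{(ii)}: the ``monotonicity of the natural rank under injection'' you invoke is not a valid principle, and the enlargement step does not work. Two separate problems. First, the witnessing set $I$ in \autoref{def:inf-nat-rank-minuses} is existentially quantified precisely because only carefully chosen positions work; enlarging a witness to contain the dense set of positions $|u|+n|v|$ can destroy the witnessing property. This is exactly the point of \autoref{remark-exists-I}: for $L_{a\text{-seq}}$ and $w=a^\omega$, the sparse set $I'=5,10,15,\dots$ witnesses $\wordrank_L(w)=0$, while the dense set $I''=1,2,3,\dots$ does not (e.g.\ $(aab)^\omega\in\inj_{I''}(w)$ leaves the language), so ``enlarge $I$ to contain the injection positions'' is not an innocent move. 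Second, even for a genuine witness $I$, membership $w''\in\inj_I(w)$ only gives, by the definition, ``$\wordrank_L(w'')<j$ \emph{or} $w''\in L\iff w\in L$''; the second disjunct puts no bound on the rank of $w''$, and indeed $(ab)^\omega\in\inj_{I''}(a^\omega)$ has rank $2>0$. So the inference ``every $w''\in\inj_I(u(vz)^\omega)$ is also in $\inj_I(w)$ and hence has rank below or equal to that of $w$'' does not follow, and with it the whole $\le$ half of \textbf{(ii)} is unsupported.

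Your fallback auxiliary lemma (that for a loop $y$ at a state $r$ reached by $x$ one has $\infixrank_L(x,y)=\min$ of the state ranks on the loop) is exactly where the hard content sits, and you explicitly leave it unproved; note that it is delicate, because for a witness $u(vz)^\omega$ realizing the maximum in \autoref{def:infix-rank} the run of $\aut{P}^\truerobustness_L$ need \emph{not} return to $q$ after $uvz$ (the robustness DPA strictly refines $\sim_L$), so the loop structure you want to exploit does not automatically transfer; your sketch of \textbf{(iii)} leans on the same unproved lemma. The paper's proof takes a different and shorter route that avoids injections altogether: it first obtains $\wordrank_L(w)=d$, the minimal rank on the final loop, from \autoref{thm:recolorate-many-equiv} together with \autoref{thm:natrank-of-letters-and-words}; it then excludes $\infixrank_L(u,v)=d'>d$ by running $\aut{P}^\truerobustness_L$ on a witness $u(vz)^\omega$ and appealing to the same two theorems, and excludes $d'<d$ by taking $z=\varepsilon$; minimality is argued by observing that a shorter valid decomposition would force the period to be a rotation of $v$ and the final loop to be entered strictly before $q$, contradicting the choice of $q$. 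As it stands, your proposal replaces these automaton-based steps by a false semantic monotonicity principle plus a deferred lemma, so it does not close the proof.
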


\section{Conclusions}\label{sec:discussion}
We have shown that we can distill from a given $\omega$-regular language $L$ a robustness preference relation
that refines Tabuada and Neider's idea into an infinite domain,
 meets our intuitions in all provided examples, and satisfies the desired criteria of duality, value wrt the complement, and satisfaction bar. Towards this aim we provided semantical definitions for the natural rank and  color of a letter of an $\omega$-word wrt $L$. 
 From these we have constructed the \emph{robustness DPA} for $L$, a DPA whose equivalence classes capture the semantical notion of robustness; and has a one-to-one correspondence between the ranks and colors of the edges traversed when reading the word and the ranks and colors of the letters of the word. For future we intend to show
 that these ideas can be used to construct a runtime consultant that provides recommendations for next actions in order to achieve executions as robust as possible given the history and desired spec.

\subparagraph*{Acknowledgments}
We thank Matan Pinkas and Oded Zimerman for comments on an early draft of this paper.

\bibliographystyle{plainurl}
\bibliography{bib.bib}

\appendix

\section{Omitted proofs}

\commentout{
\lemds*
\begin{proof}
     Let $k\in\mathbb{N}$. If $k$ is influential wrt $\reset_w(k{-}1)$ and $j_k$ is the  index witnessing this, then $j_k$ also witnesses that $k$ is influential wrt any  $i<\reset_w(k{-}1)$.
    
    Otherwise, $k$ is not influential \add{wrt $\reset_w(k{-}1)$}. 
    \add{At first we show that if $k$  is not influential wrt $\reset_w(k{-}1)$ then $k$ is not influential wrt any $i\leq \reset_w(k{-}1)$. Assume towards contradiction that there exists $i\leq \reset_w(k{-}1)$ such that $k$ is influential wrt $i$. Then there exists $j\in[i..k]$ such that $\infixrank_L(w,j,k)<\infixrank_L(w,j,k{-}1)$. Since $k$ is not influential wrt $\reset_w(k{-}1)$ then $j<\reset_w(k{-}1)$. 

    Then $\infixrank_L(w[..j),w[j..k{-}1]\cdot w[k])<\infixrank_L(w[..j),w[\reset_w(k{-}1)..k{-}1]\cdot w[k])$. Hence, there exists $w[k]\in\Sigma^*$ that distinguishes between $w[j..k{-}1]$ and $j$, contradicting \autoref{def:dom-ind} in case $k{-}1$ is non-influential. If $k{-}1$ is influential}
    
    If $k{-}1$ is influential then {$\infixrank_L(w,k,k)>\infixrank_L(w,\reset_w(k{-}1), k{-}1)$}
    because there is necessarily a decrease in the infix rank when reading $w[k{-}1]$. 
    That is, the dominant index for $k$ is 
    greater than $\reset_w(k{-}1)$ and it holds wrt to any $i<\reset_w(k{-}1)$. 
    
    Otherwise, both $k$ and $k{-}1$ are not influential wrt $\reset_w(k{-}1)$ and $\reset_w(k{-}2)$ resp. Then, \replace{$w[\reset_w(k{-}1)..k{-}1]$}{$w[\predomind(w,0,k{-}1)..k{-}1]$} is the longest suffix of $w[..k{-}1]$ with the same infix rank as 
    $\infixrank_L(w,k{-}1,k{-}1)$.
    Any suffix starting before \replace{$\reset_w(k{-}1)$}{$\predomind(w,0,k{-}1)$} would decrease the infix rank. Also, 
    $\infixrank_L(w,k,k)\geq\infixrank(w,k{-}1,k{-}1)$
    because $k$ is not influential. 
    Hence, the predominant index \add{of $k$} 
    is at most $\reset_w(k{-}1)$. {By \autoref{def:dom-ind},} the dominant index of $k$ {corresponds to the shortest suffix of \replace{$w[\reset_w(k{-}1)..k]$}{$\predomind(w,0,k{-}1)$} that follows the requirements of \autoref{def:dom-ind} so the dominant index is at least $\reset_w(k{-}1)$. Therefore, the dominant index of $k$} wrt any index $i\leq \reset_w(k{-}1)$ is the same.

\end{proof}
}

\lemds*
\begin{proof}
    Since the definition of the reset points is inductive we prove the claim using induction. For the base case $k{=}0$ the only option for the reset point is $k$ itself, and so the claim holds. For the induction step, let $k\in\mathbb{N}$. If $k$ is influential wrt $\reset_w(k{-}1)$ and $j_k$ is the  index witnessing this, then $j_k$ also witnesses that $k$ is influential wrt any  $i<\reset_w(k{-}1)$.
    
    Otherwise, $k$ is not influential wrt $\reset_w(k{-}1)$. 
    First we show that if $k$ is not influential wrt $\reset_w(k{-}1)$ then $k$ is not influential wrt any $i\leq \reset_w(k{-}1)$. Assume towards contradiction that there exists $i\leq \reset_w(k{-}1)$ such that $k$ is influential wrt $i$. 
    Then there exists $j\in[i..k]$ such that $\infixrank_L(w,j,k)<\infixrank_L(w,j,k{-}1)$. Since $k$ is not influential wrt $\reset_w(k{-}1)$ then $j<\reset_w(k{-}1)$. 
    Thus $\infixrank_L(w,j,k)<\infixrank_L(w,\reset_w(k{-}1),k)$. 
    Hence,  $w[k]$ is a word  distinguishing $w[j..k{-}1]$ from $w[\reset_w(k{-}1)..k{-}1]$, contradicting that $\reset_w(k{-}1)$ is the dominant-index of $w[..k{-}1]$ (as per
    \autoref{def:dom-ind}).
    
    If $k{-}1$ is influential then {$\infixrank_L(w,k,k)>\infixrank_L(w,\reset_w(k{-}1), k{-}1)$}
    because there is necessarily a decrease in the infix rank when reading $w[k{-}1]$. 
    That is, the dominant index for $k$ is 
    greater than $\reset_w(k{-}1)$ and it holds wrt to any $i<\reset_w(k{-}1)$. 
    
    Otherwise, both $k$ and $k{-}1$ are not influential wrt $\reset_w(k{-}1)$ and $\reset_w(k{-}2)$ resp. Then, 
    {$w[\predomind(w,0,k{-}1)..k{-}1]$} is the longest suffix of $w[..k{-}1]$ with the same infix rank as 
    $\infixrank_L(w,k{-}1,k{-}1)$.
    Any suffix starting before 
    {$\predomind(w,0,k{-}1)$} would decrease the infix rank. Also, 
    $\infixrank_L(w,k,k)\geq\infixrank(w,k{-}1,k{-}1)$
    because $k$ is not influential. 
    Hence, the predominant index {of $k$} 
    is at most $\reset_w(k{-}1)$. {By \autoref{def:dom-ind},} the dominant index of $k$ {corresponds to the shortest suffix of 
    {$\predomind(w,0,k{-}1)$} that follows the requirements of \autoref{def:dom-ind} so the dominant index is at least $\reset_w(k{-}1)$. Therefore, the dominant index of $k$} wrt any index $i\leq \reset_w(k{-}1)$ is the same.
\end{proof}

\commentout{
\begin{definition}[Determining index, determining suffix]\label{def:det-suf}
Let $w\in\Sigma^*\cup\Sigma^\omega$.
Let $i \leq k \leq|w|$. Let $w[l..k]$ be the dominant suffix of $w[i..k]$.
Let $w[j..k]$ be the \emph{shortest} suffix of $w[l..k]$ for which $\infixrank_L(w,j,k)=\infixrank_L(w,l,k)$ and $\infixrank_L(w[..j),w[j..k]\cdot v )=\infixrank_L(w[..l),w[l..k]\cdot v)$ for all $v\in\Sigma^*$. Then $w[j..k]$ is the \term{determining suffix} of $w[i..k]$, denoted $\mathterm{\detsuf_L(w,i,k)}$, and $j$ is the \term{determining index} of $w$ wrt $i,k$ and $L$, denoted $\mathterm{\detind_L(w,i,k)}$.
\end{definition}
}

The following two claims are needed for subsequent proofs.

\begin{myclaim}\label{clm:if-only-non-inlf-indices-then-max-rank}
Let $L\subseteq \Sigma^\omega$, $m$ be the maximal rank in $L$,
and $w\in\Sigma^\omega$. If there exists $n_0\in\mathbb{N}$ such that for all $n>n_0$ the $n$-th index of $w$ is non-influential then $\letterrank(w,n)=m$ for all $n>n_0$. 
\end{myclaim}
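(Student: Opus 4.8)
The plan is to prove $\letterrank_L(w,n)=m$ for every $n>n_0$ by an induction on $n$ that proceeds in two parts: first I would pin down the reset point $\reset_w(n)$ using non-influentiality, which collapses $\letterrank_L(w,n)$ to the natural rank of the single last letter; then I would show that this single-letter rank is forced to be the maximal one.

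\emph{Reducing to the last letter.} Fix $n>n_0$. Since index $n$ is non-influential it is non-influential wrt $\reset_w(n-1)$, hence --- by the reasoning already used in the proof of Claim~\ref{lem:ds} --- non-influential wrt every $i\le\reset_w(n-1)$, so $\reset_w(n)=\domind_L(w,\reset_w(n-1),n)$ is computed through case~(2) of Definition~\ref{def:dom-ind}. In that case the predominant suffix $w[l..n]$ has $\infixrank_L(w,l,n)=\infixrank_L(w,n,n)$ and the dominant suffix $w[\reset_w(n)..n]$ inherits that infix rank, so by Definition~\ref{def:letter-rank}
\[
\letterrank_L(w,n)=\infixrank_L(w,\reset_w(n),n)=\infixrank_L(w,n,n)=\infixrank_L(w[..n),\sigma_n).
\]
It therefore remains to show $\infixrank_L(w[..n),\sigma_n)=m$, where $m=\infixrank_L(w[..n),\varepsilon)=\infixrank_L(w,n,n-1)$ is the largest natural rank realizable by a period loopable on the class of $w[..n)$ (for a language with a single $\sim_L$-class, as in all our running examples, this is the global maximal rank).

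\emph{Forcing the maximal rank.} We always have $\infixrank_L(w[..n),\sigma_n)\le\infixrank_L(w[..n),\varepsilon)=m$, since periods starting with $\sigma_n$ are among all loopable periods. Assume the inequality is strict for some $n>n_0$ and derive a contradiction by exhibiting an index $j$ witnessing, via Definition~\ref{def:influential}, that $n$ is influential wrt $\reset_w(n-1)$. Write $d_n=\infixrank_L(w,n,n)$ and let $l$ be the start of the predominant suffix above. If $l=\reset_w(n-1)$ then $\infixrank_L(w,\reset_w(n-1),n)=d_n$ and $j=n$ works: $\infixrank_L(w,n,n)=d_n<m=\infixrank_L(w,n,n-1)$ and $\infixrank_L(w,n,n)=\infixrank_L(w,\reset_w(n-1),n)$, which are exactly the two required conditions. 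If $l>\reset_w(n-1)$ then by maximality of the predominant suffix together with Remark~\ref{rmrk:infix-no-increse} we get $\infixrank_L(w,\reset_w(n-1),n)<d_n\le m$; combining this with the inductive hypothesis $\letterrank_L(w,n-1)=m=\infixrank_L(w,\reset_w(n-1),n-1)$ yields $\infixrank_L(w,\reset_w(n-1),n)<\infixrank_L(w,\reset_w(n-1),n-1)$, so now $j=\reset_w(n-1)$ itself witnesses influentiality (the equality condition is trivial when $j$ equals the reference index). Either way non-influentiality of $n$ is contradicted, so $\infixrank_L(w[..n),\sigma_n)=m$, closing the induction.

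\emph{Main obstacle.} The delicate points are the base index $n=n_0+1$ and the exact matching of case~(2) of Definition~\ref{def:dom-ind} with Definition~\ref{def:influential}. For the base case the inductive hypothesis on $n-1=n_0$ is unavailable, so I expect to first enlarge $n_0$ --- using that $\infixrank_L(w,0,k)$ is non-increasing in $k$, hence eventually constant, by Remark~\ref{rmrk:infix-no-increse} --- and then argue directly that no $\letterrank_L$ value exceeds $m$, which rules out the only remaining sub-case of $l>\reset_w(n-1)$, namely the one where reading $\sigma_n$ does not lower $\infixrank_L(w,\reset_w(n-1),\cdot)$. A secondary point to state carefully, already flagged above, is that ``the maximal rank in $L$'' is literally the maximum over the $\sim_L$-classes that $w$ visits from $n_0$ onward; it coincides with the global maximal rank whenever $\sim_L$ is trivial, which is the setting of the examples and of the uses of this claim.
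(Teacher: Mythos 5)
Your first step (non-influentiality collapses $\letterrank_L(w,n)$ to the single-letter rank $\infixrank_L(w[..n),\sigma_n)$) matches the paper, and your inductive step is internally sound; in fact it is more careful than the paper's own argument. The paper proves the claim without any induction: it passes directly from $\letterrank_L(w,n)=m'<m$ to $\infixrank_L(w[..n),w[n])<\infixrank_L(w[..n),\varepsilon)$ and declares the $n$-th letter influential, i.e.\ it only ever invokes the rank decrease at the witness $j=n$ and never checks the second conjunct of \autoref{def:influential}. Your sub-case $l>\reset_w(n{-}1)$ is exactly the scenario in which that second conjunct fails for $j=n$, so you have isolated a real subtlety that the paper's three-line proof silently skips --- but the way you handle it (via the induction hypothesis $\letterrank_L(w,n{-}1)=m$) is what creates your problem.

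The genuine gap is the base case $n=n_0{+}1$, and your proposed repair does not close it. Enlarging $n_0$ proves nothing about the indices between the original $n_0$ and the enlarged one, although the statement requires the conclusion already at $n_0{+}1$; the eventual constancy of $\infixrank_L(w,0,k)$ concerns ranks measured from position $0$ and does not control the quantities $\infixrank_L(w,\reset_w(n{-}1),\cdot)$ appearing in your sub-case; and the bound ``no letter rank exceeds $m$'' points the wrong way, since to trigger influentiality at $j=\reset_w(n{-}1)$ you need $\letterrank_L(w,n_0)$ to be \emph{large} (that is precisely what the induction hypothesis supplied), not bounded above. Worse, the missing seed can genuinely fail: for $L=\neg\infty b\vee\infty c$ over $\{a,b,c\}$ (maximal rank $m=2$) and $w=cba^\omega$, every index $n\geq 2$ is non-influential, yet $\letterrank_L(w,2)=\infixrank_L(c,b)=1<m$, because $\infixrank_L(\varepsilon,cb)=0$ makes both candidate witnesses $j=1,2$ fail \autoref{def:influential}. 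What your induction does give is the conclusion for all sufficiently large $n$: in your second sub-case the letter rank strictly increases from index $n{-}1$ to $n$, and once it reaches $m$ it cannot drop again at a non-influential index, so only finitely many indices after $n_0$ can violate the conclusion. That weaker statement is all the finite-$J$ case of \autoref{thm:natrank-of-letters-and-words} actually uses, but it is not the claim as stated, and it is obtained by a different (inductive) route than the paper's direct contradiction.
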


\begin{proof}
Assume towards contradiction that $\letterrank(w,n)=m'$ for some $m'<m$. Then, since $n$ is a non-influential index, by \autoref{def:influential}, \autoref{def:dom-ind} and \autoref{def:letter-rank}, $\infixrank_L(w,\reset_w(n),n)=\infixrank_L(w,n,n)=m'$. This means that $\infixrank_L(w[..n),w[n])<\infixrank(w[..n),\varepsilon)$. 
Thus, the $n$-th letter is influential. Contradiction.
\end{proof}

\begin{myclaim}\label{clm:reading-not-from-epsilon-decreasing-rank}
Let $L\subseteq \Sigma^\omega$ and $x\in\Sigma^*$. 
If $\letterrank_L(x,|x|)=d$ then $\letterrank_L(zx,|zx|)\leq d$ for all $z\in \Sigma^*$.
\end{myclaim}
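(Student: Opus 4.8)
The plan is to argue by induction on $|z|$; the case $z=\varepsilon$ is trivial, and once the single‑letter statement ``$\letterrank_L(\sigma y,|\sigma y|)\le\letterrank_L(y,|y|)$ for all $\sigma\in\Sigma$ and $y\in\Sigma^*$'' is in hand, peeling off the first letter of $z$ and chaining gives the general claim. So fix $z=\sigma\in\Sigma$. The case $x=\varepsilon$ is immediate from \autoref{rmrk:infix-no-increse}, since $\letterrank_L(\sigma,1)=\infixrank_L(\varepsilon,\sigma)\le\infixrank_L(\varepsilon,\varepsilon)=\letterrank_L(\varepsilon,0)$. For $x\neq\varepsilon$ I would write $x=y\tau$ with $\tau\in\Sigma$ and proceed by a nested induction on $|x|$.

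First I would unfold both letter‑ranks: by \autoref{def:letter-rank} and \autoref{cor:domind}, $\letterrank_L(x,|x|)=\infixrank_L(x[..r),\domsuf_L(x))$ with $r=\reset_x(|x|)=\domind_L(x,0,|x|)$, and $\letterrank_L(\sigma x,|\sigma x|)=\infixrank_L((\sigma x)[..r'),\domsuf_L(\sigma x))$ with $r'=\reset_{\sigma x}(|\sigma x|)$ (and $r'\ge 1$, as reset points are non‑decreasing in their argument). I would then split on whether the last index is \emph{influential}, following \autoref{def:dom-ind}: in the influential case item~(1) identifies the dominant suffix as the shortest suffix carrying the full prefix infix‑rank, so that (using \autoref{cor:domind}) $\letterrank_L(\sigma x,|\sigma x|)=\infixrank_L(\varepsilon,\sigma x)$; in the non‑influential case item~(2), together with the stability clause separating the predominant and dominant suffixes, yields $\letterrank_L(\sigma x,|\sigma x|)=\infixrank_L(\sigma x,|\sigma x|,|\sigma x|)$, i.e.\ the infix‑rank of the last letter alone following the preceding prefix. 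The same dichotomy applies to $x$. The comparison between the two then uses the two halves of \autoref{rmrk:infix-no-increse} (appending to the period, or prepending to the period while keeping the $\sim_L$‑class of the spoke, only decreases the infix‑rank), \autoref{cor:domsuf} (which controls how $\domsuf_L$ changes when one letter is appended), and the inner induction hypothesis applied to the proper prefix $x[..|x|-1]$.

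The main obstacle is the case in which prepending $\sigma$ changes the influential status of the last index — typically when that index is influential in $x$ but not in $\sigma x$, because prepending $\sigma$ shifts the $\sim_L$‑class of every prefix and can erase the rank‑drop that made the index influential. There $\letterrank_L(\sigma x,|\sigma x|)$ ``bounces back up'' to $\infixrank_L(\sigma x,|\sigma x|,|\sigma x|)$, which is \emph{not} bounded by $\infixrank_L(\varepsilon,\sigma x)$ in general, and must still be shown to stay below $\letterrank_L(x,|x|)$. This reduces to a statement of the form ``prepending $\sigma$ to the spoke of a short period cannot raise the infix‑rank'', which does not follow from \autoref{rmrk:infix-no-increse} alone (that remark only moves work between a shorter spoke with a longer period‑prefix and a longer spoke with a shorter one, never keeping the period fixed while enlarging the spoke). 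I would attack it by combining the inner induction hypothesis for $x[..|x|-1]$ with \autoref{clm:if-only-non-inlf-indices-then-max-rank} and \autoref{rmrk:infix-no-increse} to first bound $\infixrank_L(\sigma x,|\sigma x|,|\sigma x|)$ by $\infixrank_L(x,|x|,|x|)$ and then by $\infixrank_L(x[..r),\domsuf_L(x))$; if the induction does not close in this form, the fallback is to strengthen the inductive hypothesis to a simultaneous statement about $\letterrank_L$, $\domsuf_L$, and the $\sim_L$‑class of the spoke. Reconciling the change of influential status with the change of spoke‑class is where I expect essentially all the work to lie, and it is the step I would write out in full detail.
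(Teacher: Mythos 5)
Your submission is a plan rather than a proof, and the step you leave open is the entire content of the claim. After reducing to prepending a single letter $\sigma$ and unfolding the two letter-ranks via \autoref{def:letter-rank}, \autoref{def:dom-ind} and \autoref{cor:domind}, you isolate the case where the last index is influential in $x$ but not in $\sigma x$ and explicitly defer it, offering only candidate tools and a fallback of strengthening the induction hypothesis. Those tools cannot close it: \autoref{rmrk:infix-no-increse} only lets you extend the period or move spoke letters into the period, never enlarge the spoke while keeping the period fixed, and \autoref{clm:if-only-non-inlf-indices-then-max-rank} concerns infinite words whose indices are eventually all non-influential, so it gives no leverage on one prepended letter. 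Moreover, your suspicion that the missing monotonicity (``prepending $\sigma$ to the spoke cannot raise the infix-rank'') is unavailable is well founded: for $L_{\text{mod}2}$ of \autoref{rem:L-mod-2} one has $\infixrank_L(\varepsilon,c)=1$ while $\infixrank_L(a,c)=2$ (\autoref{ex:infix-rank-mod-2}), and since the last index of $ac$ is non-influential this yields $\letterrank_L(ac,2)=2>1=\letterrank_L(c,1)$ (consistently, the vigor DPA of \autoref{fig:P-rbst} reaches a rank-$1$ state on $c$ and a rank-$2$ state on $ac$). So the single-letter step you reduce to cannot be discharged from the stated hypotheses alone; any completion must exploit that the prepended material preserves the $\sim_L$-classes of the relevant prefixes (as it does where the claim is invoked inside the proof of \autoref{thm:natrank-of-letters-and-words}, the injected words being suffix-invariant), which is exactly the class-shift you flag and then do not resolve.

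Structurally this is also far from the paper's own argument, which performs no induction on $|z|$ or $|x|$ and no case distinction on influential status. The paper argues directly, by contradiction via \autoref{lem:ds}, that $\domsuf_L(x)$ is a suffix of $\domsuf_L(zx)$: if the reset point of $zx$ fell strictly inside the shifted copy of $\domsuf_L(x)$, then computing the dominant index of the last position with respect to the earlier reference $|z|$ would have to agree with it, forcing a different reset point for $x$ itself; the inequality $\letterrank_L(zx,|zx|)\leq\letterrank_L(x,|x|)$ is then read off this containment. So even if your open case were settled, you would be rebuilding by a much longer route what \autoref{lem:ds} is meant to deliver in one step; as it stands, the decisive inequality is announced but not proved, and the proposal does not constitute a proof of the claim.
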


\begin{proof}
    Let $z\in \Sigma^*$ be some finite word. We claim that since $x$ is a suffix of $zx$ then $\domsuf_L(x)$ is a suffix of $\domsuf_L(zx)$. 
    Assume this is not the case and let $|z|=k$ and $|x|=l$, then $\reset_{zx}(k+l)=m>k$ and 
    $\reset_{x}(l)<m-k$.
    By \autoref{cor:domind} $\domind_L(zx,m,k+l)=\domind_L(zx,m',k+l)$ for every $m'\in[0..m]$ and in particular $\domind_L(zx,m,k+l)=\domind_L(zx,k,k+l)$. Then, $\domind_L(zx,m,k+l)=\domind_L(x,0,l)$ and $\reset_x(l)=m-k$, contradicting that $\reset_{x}(l)<m-k$. Hence, $\letterrank_L(zx,k+l)\leq \letterrank_L(x,l)=d$.
\end{proof}

\thmnatrankoflettersandwords*
\begin{proof}
We show that $\min \inf (n_1,n_2,...) = d$ implies $\wordrank_L(w)=d$. Since the range of $d$ is finite this entails the other direction as well. 
The proof is by induction on $d$.
For the base cases we consider $d\in\{-2,-1,0,1\}$.\footnote{Note that $1$ can be obtained either from the induction base or the induction step.}
If $\min \inf (n_1,n_2,...) = -2$ then there exists at least one natural number $n$ such that $\letterrank_L(w,n)=-2$. It follows that $\infixrank_L(w,\reset_w(n),n)=-2$.
Thus, by \autoref{def:infix-rank} for all $w'\in\Sigma^\omega$ we have that $w[..n]w'\in L$. Hence $\wordrank_L(w)=-2$.
The argument for $-1$ is similar. 

Otherwise, no negative rank is obtained by reading $w$. 
If $\min \inf (n_1,n_2,...) = 0$ then there exists $n_0\in\mathbb{N}$ such that $\letterrank_L(w,n)\geq 0$ for all $n\geq n_0$ and in particular there exists an infinite sequence  $I=i_1,i_2,...$ for which $\letterrank_L(w,i_j)=0$ for all $j\in\mathbb{N}$. By \autoref{def:letter-rank}, $\letterrank_L(w,i_j)=\infixrank_L(w,\reset_w(i_j),i_j)=0$. 
Let $r_j=\reset_w(i_j)$, then
by \autoref{def:infix-rank}, it holds that {$\infixrank_L(w[..r_j),w[r_j..i_j])=\max\{\wordrank_L(w[..r_j)(w[r_j..i_j]z)^\omega)~|~w[..i_j]z\sim_L w[..r_j)\}$.}
Since the maximal possible word rank is $0$, and $0$ is the minimal possible non-negative rank as well, then $\wordrank_L(w)=0$. 
The base case for $\min \inf (n_1,n_2,...) = 1$ is similar. 

For the induction step assume $\min \inf (n_1,n_2,...) = d$. Then there exists $n_0\in\mathbb{N}$ such that $\letterrank_L(w,n)\geq d$ for all $n\geq n_0$ and in particular there exists an infinite sequence $I=i_1,i_2,...$ such that $\letterrank_L(w,i_j)=d$ for all $j\in\mathbb{N}$. Since $\letterrank_L(w,i_j)=d$ then $\infixrank_L(w,\reset_w(i_j),i_j)=d$ for all $j\in\mathbb{N}$. 
Let $J=j_1,j_2,...\subseteq I$ be such that $\reset_w(j_{n{+}1})\geq j_n$.
We distinguish two cases: (a) $J$ is infinite (b) $J$ is finite.

Assume  $J$ is infinite.
Let $w'\in \inj_J(w)$, then $w'=u_1z_1u_2z_2\ldots$ 
for some $z_1,z_2,\ldots\in\Sigma^+$ such that $u_1z_1 \sim_L u_1$, $u_1z_1u_2z_2 \sim_L u_1z_1u_2$ and so on, and $\infixrank_L(\varepsilon,u_1)=d$, $\infixrank_L(u_1,u_2)=d$, $\infixrank_L(u_1u_2,u_3)=d$ and so on. 
From \autoref{rmrk:infix-no-increse} it holds that $\infixrank_L(\varepsilon,u_1z_1)=d'\leq d$ and $\infixrank_L(\varepsilon,u_1z_1')\in [d'..d]$ for every $z_1' \preceq z_1$ and similarly $\infixrank_L(u_1z_1,u_2z_2)=d''\leq d$ and $\infixrank_L(u_1z_1,u_2z_2')\in [d''..d]$ for every $z_2' \preceq z_2$ and so on.
Let $i_n=|u_1u_2\ldots u_n|$ and $i'_n=|u_1z_1\ldots u_{n-1}z_{n-1}u_n|$. By \autoref{clm:reading-not-from-epsilon-decreasing-rank}, $\letterrank_L(w',i'_n)\leq\letterrank_L(w,i_n)$ for all $n\in\mathbb{N}$. 
We get that 
{$\min\inf\{ \letterrank_L(w',n)\}_{n\in\mathbb{N}}\in[d^\star..d]$ where $d^\star$ is the minimal among $d',d''...$}.
Hence, for every word $w'$ in $\inj_{I}(w)$ we have that either $\min\inf\{ \letterrank_L(w',n)\}\in[d^*..d)$ or $\min\inf\{ \letterrank_L(w',n)\}=d$.
In the first case, by the induction hypothesis, $\wordrank_L(w')<d$. In the second case 
$w\in L$ iff $w'\in L$ from the parity of $d$. Thus, by \autoref{def:inf-nat-rank-minuses}, $\wordrank_L(w)=d$.

Assume $J$ is finite. 
{This} means all the indices from some point have the same minimal rank and are not influential. According to \autoref{clm:if-only-non-inlf-indices-then-max-rank}, their rank is the maximal rank in $L$. 
In this case for all $w'\in \inj_J(w)$ we have $\wordrank_L(w')\leq d$ (since $d$ is maximal). Hence, by \autoref{def:inf-nat-rank-minuses}, $\wordrank_L(w)=d$.
\end{proof}

\rmrkshortlexnoincrese*
\begin{proof}
    Since $x$ is a prefix of $y$ we have $\infixrank(u,y)\leq\infixrank(u,x)$.
    Also, $\kappa(q_y)=\infixrank(u,y)$ and $\kappa(q_x)=\infixrank(u,x)$, as per \autoref{def:Pu}. Then $\kappa(q_y)=\infixrank(u,y)\leq\infixrank(u,x)=\kappa(q_x)$.
\end{proof}

\thmrecolorateoneequiv*
\begin{proof}
We show the second item.  
The first item  
follows from the second and \autoref{thm:natrank-of-letters-and-words}.

We name the state of $\aut{A}[{\sim}]$ corresponding to class $[u]$ by $q_x$ where $x$ is the least word in the shortlex order that resides in $[u]$. 
Since $\aut{A}^\vigor_u$ is the automaton structure induced from $\equiv_u^\vigor$, by definition for every word $y$ if on reading $y$ the automaton reaches state $q_x$ then $y \equiv_u^\vigor x$.
Assume $w[..k]$ reaches state $q_x$. Thus, $w[..k]\equiv_u^\vigor x$.
It follows from \autoref{def:robustness} that for all $z\in\Sigma^*$, in particular $z=\varepsilon$ we have $\infixrank_{L_u}(u,\domsuf_{L_u}(x))=\infixrank_{L_u}(u,\domsuf_{L_u}(w[..k])$.
Note that the LHS is $\kappa(q_x)=\kappa(\aut{P}^\vigor_u(w[..k])$ and the RHS is $\letterrank_{L_u}(w,k)$. Thus, the claim follows.
\end{proof}

\thmrecoloratemanyequiv*
\begin{proof}
We show the second item.  
The first item  
follows from the second and \autoref{thm:natrank-of-letters-and-words}.

We use $\inf_{\sim_L}(w)$ for the set 
$\{[u]_{\sim_L}~|~ \forall i. \exists j>i.\ w[..j]\in [u]\}.$
Suppose $w=uv^\omega$, $u\sim_L uv$, $\wordrank_L(w)=d$ and  $\inf_{\sim_L}(w)=\{u_1,...,u_l\}$ and $m\geq l$ is the number of equivalence classes of $\sim_L$.
Let $y$ be a shortest infix of some repetition of $v$ that suffices to conclude the rank of $uv^\omega$.
That is,  $v^n=v'v''$ for some $n\in\mathbb{N}$, $v'v''\in\Sigma^*$ and $v''=xy$ for some $x,y\in\Sigma^*$ where $y=\domsuf_L(v'')$.\footnote{E.g., for $L_{a\text{-seq}}$, $u=b$, $v=bbaa$ we have $y=aa$. For $L_{a\text{-seq}}$, $u=b$, $v=a$ we have $y=aaa$.}
Then $\infixrank_L(uv'x,y)=d$. 
Let $u_i$ be such that $uv'xy_i'\sim_L u_i$ for some  $y_i'y_i''=y$. Then $\infixrank_L(uv'xy_i',y_i''y)=d$.
It follows from~\autoref{thm:recolorate-one-equiv} that $\kappa_{i}(\aut{P}_{u_i}(y_i''y))=d$ for every $i\in[1..l]$. 

Let $\rho_w=q_0,q_1,...$ be the run of $\aut{P}^\vigor_L$ on $w$ where $q_i=(c_i,p_{i,1},p_{i,2},...,p_{i,m})$. 
Then for every $i\in\mathbb{N}$ such that $y$ is a suffix of $w[..i]$ we have that $\kappa_{j}(p_{i,j})=d$ for all $j\in[1..l]$ and for infinitely many such $i's$ we have that $c_i\in[1..l]$.
Thus, $\kappa(\aut{P}^\vigor_L(w[..i]))=\kappa_{c_i}(p_{i,c_i})=d$ for infinitely many $i'$s. 
Recall that we have already established in \autoref{thm:recolorate-one-equiv} that for some $n_0\in\mathbb{N}$ it holds that
$\kappa_{j}(p_{n,j})\geq d$
for every $n>n_0$ and $j\in[1..l]$. 
It follows that the minimal rank visited infinitely often by $\aut{P}$ when running on $w$ is $\wordrank(w)$.
\end{proof}

\lempotentiallineartime*
\begin{proof}
We start with the computation of $\infixrank_L(u,v)$.

First we compute the MSCCs of $\aut{P}$.
     A state is termed \emph{transient} if it is not part of any MSCC (recall that a singleton $\{q\}$ is an MSCC only if there is a self-loop on $q$).
    Then we classify the MSCCs into \emph{rejecting}, \emph{accepting} or \emph{mixed}. An MSCC is \emph{rejecting} (resp. \emph{accepting}) if all its ranks are odd (resp. even) and it is \emph{mixed} otherwise. A rejecting (resp. accepting) MSCC is \emph{terminal} if only rejecting (resp. accepting) MSCCs are reachable from it. 
    
    Let $q_u$ and $q_{uv}$ be the states reached on reading $u$ and $uv$, resp. from the initial state.
If $q_{uv}$ is transient return $\infty$, if $q_{uv}$ resides in a rejecting (resp. accepting) terminal MSCC return $-1$ (resp. $-2)$.

    Otherwise, let $d$ be the minimal rank visited when reading $v$ starting from $q_u$.
    Remove all states with rank lower than $d$ and check if there is a path from $q_{uv}$ to $q_u$ (can be done by a BFS). If so, $\infixrank_L(u,v)=d$. Otherwise, decrease $d$ by $1$ and perform the same for the new value. The 
    runtime is $O(m|E|{+}|uv|)$ where $m$ is the number of different ranks in the automaton, and $E$ are the edges of $\aut{P}$.

    For $\predomsuf_{L_u}(v)$, let $l=|v|$. 
    We compute first $d_1,d_2,d_3,... ,d_{l}$ and $d'_1,d'_2,d'_3,... ,d'_{l}$ where 
    $d_i=\infixrank(u,v[i..])$ and $d'_i=\infixrank(u,v[i..))$ for $i\in[1..l]$.
    If there is an $i$ for which $d_i<d'_i$, then $l$ is influential.
    Once we know whether $l$ is influential, we conclude whether $\predomsuf_{L_u}(v)$ should be the longest or the shortest suffix available, as defined in \autoref{def:dom-ind}. If index $l$ is not influential then the predominant index is the smallest $j$ such that $d_j=d_l$. 
    Otherwise, the predominant index is the biggest $j$ such that $d_j=d_1$.
    Given $j$ is the predominant index, the predominant suffix $\predomsuf_{L_u}(v)$ is $v[j..]$.
\end{proof}

\lemcomputerbstpfromp*
\begin{proof}
First we find the equivalence classes of $\sema{\aut{P}}$ as discussed above. Then for each equivalence class $u$, we can compute $\equiv_u^\vigor$, using an active learning algorithm such as $L^\star$~\cite{Angluin87} where the observation table entry $(x,y)$ is filled with the results of {$\infixrank_L(u,\predomsuf_{L_u}(xy))$} 
that by \autoref{lem:potential-linear-time} can be answered in polynomial time. Equivalence queries are answered via the polynomial time equivalence algorithm~\cite{AngluinF24} for $\aut{P}_{q_u}$ where $q_u$ is the state of $\aut{P}$ reached by reading $u$. This will result in $\aut{P}_u^\vigor$. From $\aut{P}_{u_i}^\vigor$ of all equivalence classes $u_i$ we can construct $\aut{P}^\vigor$ following \autoref{def:rbst-dpa}. 
\end{proof}

\propedgecolorcorrectness*
\begin{proof}
We modify $\aut{P}^\vigor$ by removing $-2$-terminal and $-1$-terminal MSCCs, and introducing states $q_\top$ ranked $-2$ (resp. $q_\bot$ ranked $-1$) for every state $q$ with a transition to $-2$-terminal (resp. $-1$-terminal) MSCC. 
Then, remove every edge $(q,\sigma,q')$ where $q'$ is in a $-2$-terminal (resp. $-1$-terminal) MSCC and replace it by an edge $(q,\sigma,q_\top)$ (resp. $(q,\sigma,q_\bot)$) and an edge $(q_\top,\varepsilon,q)$ (resp. $(q_\bot,\varepsilon,q))$. 
That is, for each such state $q$ its own dedicated state $q_\top$ (resp. $q_\bot$) is added.
 According to \autoref{def:letter-color}, reading the letter $\sigma$ by going through edge $(q,\sigma,q_{\top})$ (resp. $(q,\sigma,q_\bot
)$) gets the white (resp. black) letter color, because the $q_\top$ (resp. $q_\bot$) represents the accepting (resp. rejecting) terminal MSCCs. Therefore, the color of the corresponding edge is white (resp. black). Let $\aut{P}^\truerobustness$ be the resulting forgetful parity automaton.

For the equality $\sema{\aut{P}^\vigor}=\sema{\aut{P}^\truerobustness}$, let  $w\in\Sigma^\omega$, and let
$\rho_w=q_0q_1q_2...$ be the run of $\aut{P^\vigor}$ on $w$ and $\rho'_w=q'_0q'_1q'_2...$ be the run of $\aut{P}^\truerobustness$ on $w$.
If a negative rank is never traversed by $\aut{P}^\vigor$ then
$\kappa(q_i)=\kappa(q'_i)$ for every $i$ and so $w$ is accepted by both $\aut{P}^\vigor$ and $\aut{P}^\truerobustness$ if the minimum rank visited infinitely often is even.
If a negative rank is traversed by $\aut{P}^\vigor$, then if the first such rank is $-2$, then for some $j$ we have $\kappa(q_j)=-2$ and $\kappa(q_i)\geq 0$ for all $i<j$.
In this case $\kappa(q_k)=-2$ for all $k>j$ and so $w\in\sema{\aut{P}^\vigor}$.
Since in $\aut{P}^\truerobustness$ we also have $\kappa'(q'_j)=-2$ and $\kappa'(q'_i)\geq 0$ for all $i<j$,
by the definition of a forgetful DPA, $w\in\sema{\aut{P}^\truerobustness}$.
Similarly, if the first negative rank is $-1$ then it is $-1$ all along in $\aut{P}^\vigor$ and so $w\notin\sema{\aut{P}^\vigor}$ and it is not in $\sema{\aut{P}^\truerobustness}$ by the definition of a forgetful parity automaton.

Since  $\aut{P}^\vigor$ has the correct ranks, it has the correct colors of transitions when black/white edges are not involved.
In $\aut{P}^\truerobustness$ also runs with black/white edges are colored correctly since the transition to the dedicated state $q_\top$ (resp. $q_\bot$) colors the edge correctly with white (resp. black) and then the $\varepsilon$-transition mimics the forgetfulness by going back to the same state and allowing all colors to be read from there.
\end{proof}

\propcomputevalue*

\begin{proof} 
    Following \autoref{clm:rbst-dcmp-to-aut-dcmp} we can compute the robustness decomposition $(x,y)$ of $w=uv^\omega$ in polynomial time, by finding the first state $q$ the word loops at. 
    By \autoref{prop:edge-color-correctness} there is a one-to-one correlation between the colors of the infixes $x$ and $y$ and the colors of the edges traversed when reading them in $\aut{P}_L^\truerobustness$. Applying \autoref{def:infix-scr}
    we can compute the corresponding average scores $\tau_x$ and $\tau_y$.   
    We note that $w\in L$ if either a white edge is traversed in reading $xy$ (and no black edge is traversed before) or the minimal rank  visited when reading $y$ from $q$ is even. We set $a_w$ accordingly. 
    By \autoref{def:rbst-scr} the robustness value of $w$ is $(a_w,\tau_y,|x|(\tau_x-\tau_y))$. All the procedures described above can be done in time polynomial in {$|uv|{+}|\aut{P}_L^\truerobustness|$}.
\end{proof}

\clmrbstdcmptoautdcmp*
\begin{proof}
    We have to show that  $(u,v)$ is the shortest decomposition satisfying $w=uv^\omega$, $\infixrank_L(u,v)=\wordrank_L(w)$.
    Let $d$ be the minimal rank on the $v$-loop from $q$.
    By \autoref{rmrk:rbst-equiv-classes} we have that $uv\sim_L u$.
    By \autoref{thm:recolorate-many-equiv} if $\rho_w=q_0q_1q_2...$ is the run of 
    $\aut{P}^\truerobustness_L$  on $w$, then $\kappa(q_i)=\letterrank_L(w,i)$ for every $i\in\mathbb{N}$.
    It follows that $\min (\inf \{\letterrank_L(w,i)~|~i\in\mathbb{N}\})=d$. By \autoref{thm:natrank-of-letters-and-words} $\wordrank_L(w)=d$.
    
    We show that $\infixrank_L(u,v)=d$.
    Assume this is not the case, then either (i) $\infixrank_L(u,v)=d'>d$ or (ii) $\infixrank_L(u,v)=d'<d$.
    In the first case, by \autoref{def:infix-rank} there exists $z\in\Sigma^*$ such that $\wordrank(u(vz)^\omega)=d'$ and $uvz\sim_L u$.
    Note that the run of $\aut{P}^\truerobustness_L$ on $u(vz)^\omega$ visits state $q$ with rank $d<d'$. This contradicts \autoref{thm:natrank-of-letters-and-words}.  In the second case, by \autoref{def:infix-rank} for all words $z\in\Sigma^*$ such that $uvz\sim_L u$ we have that $\wordrank_L(u(vz)^\omega)\leq d'$.
    But for $z=\epsilon$ we know $\wordrank_L(u(v\varepsilon)^\omega)=d>d'$. Contradiction.

    To see that it is the shortest such decomposition, let
     $|u|=l$ and $q_1,q_2,...,q_l$ be the sequence of states obtained by reading $u$ from $q_\epsilon$. Note that $q_l=q$.
    Assume towards contradiction that there exists a shorter decomposition $(u',v')$ of $w$. Since $w$ is a lasso word and there is one final loop of $w$ on $\aut{P}_L^\truerobustness$, it holds that the period $v'$ is some rotation of $v$. Then $u'\prec u$ and there exists $q_i$ for some $i\in[1..l)$ that reaches the final loop before $q$ while reading $u$. This contradicts that $q$ is the first state of the final loop that is reached.
\end{proof}

\section{Additional examples and figures}

\subsection{Additional motivating examples}

\begin{example}[Response time, Starting time]\label{ex:responset_time}
    A common LTL property is $L=\allowbreak\ltlG(r \to \ltlF g)$ stipulating that every request (signal $r$) should eventually be followed by a grant (signal $g$). 
    Assume the alphabet $\Sigma=2^{\{r,g\}}=\{rg,r\overline{g},\overline{r}g,\overline{rg}\}$.
    Wrt $L$, many papers~\cite{BloemCHJ09,BloemGHJ09,BloemCGHHJKK14,AlmagorBK16,BartocciBNR18,AlmagorK20,HenzingerS21,HenzingerMS22} 
    view $(r\overline{g}\cdot (\overline{rg})^i \cdot \overline{r}g)^\omega$ as more robust than $(r\overline{g} \cdot (\overline{rg})^j \cdot \overline{r}g)^\omega$ if $i<j$. That is, we view a trace where the response time is shorter as more robust compared to one where it is longer. 

    Wrt $L'=\signal{reset}^*\cdot \signal{start}\cdot (\signal{busy}^*\cdot \signal{done})^\omega$ for $i<j$ we require (using first letter abbreviations of the signals) 
     $(r)^i  s (b^k  d)^\omega \gtrbst_{L'}
    (r)^j s  (b^k  d)^\omega$
    where $\gtrbst_L$ stands for \emph{more robust wrt $L$}. That is, we view a trace where the \signal{start} signal arrives earlier as more robust.
\end{example}

\begin{example}[No redundant grants]\label{ex:no-redundant-grants}
    In the context of synthesis, it is observed that the above property $\ltlG(r \to \ltlF g)$ is not strong enough as it puts no requirements on issuing grants~\cite{BloemCHJ09,BloemCGHHJKK14}. In real implementations issuing grants is costly and often one strengthens this property by requiring no redundant grants. The resulting property is $L''=\ltlG(r \to \ltlF g) \wedge \ltlG (g \to \ltlX [\neg g \ltlW r]) \wedge [\neg g \ltlW r]$. 
    Here 
    we would like $(r\overline{g}\cdot \overline{rg}\cdot \overline{r}g)^\omega \gtrbst_{L''} (r\overline{g}\cdot \overline{rg}\cdot  (\overline{r}g)^2)^\omega \gtrbst_{L''} (r\overline{g}\cdot \overline{rg}\cdot (\overline{r}g)^i)^\omega$ for $i>2$. That is, we prefer traces with fewer redundant grants.
\end{example}

\begin{example}[Avoid debts]\label{ex:avoid-debts}
Consider the language  $L=a\Sigma^\omega \vee bc^\omega$. 
If the first letter is $a$ then there are no further requirements. If the first letter is $b$ then there is a constant demand that $c$ will hold in every future tick. Thus any word starting with $a$ should be preferred over any word starting with $b$, which in a sense constantly creates a debt that should be satisfied in the future. This example together with $L_{a\text{-seq}}$ from \autoref{ex:diff-priorities-for-red-yellow-green-cycles} illustrate the requirement that words that create less debts should be preferred over words that create more debts.
\end{example}

\subsection{Additions to aid understanding the technical details}

\begin{remark}[Injection Sequences]\label{remark-exists-I}
   Note that \autoref{def:inf-nat-rank-minuses} uses 
    the \emph{existence} of a sequence $I$ of injection points.    
    For a given language $L$ and given word $w$ there could be sequences $I'$ and $I''$ such that $I'$ can be used in the definition to prove that the natural rank of $w$ wrt $L$ is what it is, while $I''$ cannot. Consider for instance $L_{a\text{-seq}}$ and $w=a^\omega$. Recall that 
$L_{a\text{-seq}}=\infty a \wedge (\infty aa \to \infty aaa)$.
We have that $\wordrank_{L_{a\text{-seq}}}(w)=0$. The sequence $I''=1,2,3,4,...$ cannot be used to show this, since, e.g. $(ab)^\omega\in\inj_{I''}(w)$ yet $\wordrank_{L_{a\text{-seq}}}((ab)^\omega)=2>\wordrank_{L_{a\text{-seq}}}(w)$. 
The sequence $I'=5,10,15,20,...$ however, can be used, as for any injection in these points, the rank of the resulting word will be $0$. Put otherwise, the definition of natural rank~\cite{EhlersS22} is not specific about which sequence of points can be used, it merely requires that such a sequence exists.
\end{remark}
   
\begin{example}[$\infixrank$ for $L_{\text{mod}2}$]\label{ex:infix-rank-mod-2}
    Consider $L_{\text{mod}2}=\infty a~\bigvee~(|w|_{a}$ is even $\wedge~\neg \infty c)~\bigvee~(|w|_{a}$ is odd $\wedge~\neg \infty b)$ over $\Sigma=\{a,b,c\}$
    where $|w|_a$ is the number of occurrences of the letter $a$ in $w$. 
    It has two equivalence classes $e_0$ and $e_1$ where
    for $i\in\{0,1\}$ the equivalence class $e_i$ consists of all word $w$ for which $(|w|_a \mod 2)=i$.
    We have that 
    $$\infixrank_{L_{\text{mod}2}}(u,v)=\left\{ 
\begin{array}{l@{\quad}l}
   2  &  \text{if } u \in e_0 \text{ and } v=b^* \\
   2  &  \text{if } u\in e_1 \text{ and } v=c^* \\
   1  &  \text{if } u\in e_0 \text{ and } v=(b\cup c)^*c(b\cup c)^* \\
   1  &  \text{if } u\in e_1 \text{ and } v=(b\cup c)^*b(b\cup c)^* \\
   0 & \text{if } u\in e_0  \text{ and } v=\Sigma^*a\Sigma^* \\
   0 & \text{if } u\in e_1  \text{ and } v=\Sigma^*a\Sigma^*
\end{array}
    \right.$$
\end{example}

\begin{example}[Computing whether an index is influential]\label{ex:influential}
    Consider the language $L_{a\text{-seq}}$ from \autoref{ex:diff-priorities-for-red-yellow-green-cycles}.
    Let $w=abbaaaaabbba^\omega$. We show how to calculate whether $k$ is influential wrt  $i=1$ for $k=3,5$.
    
    For $k=3$, following \autoref{def:influential}, we check whether there exists $j\in[1..3]$ for which there is a decrease in the natural rank of $w[j..k)$ when the last index $k$ is taken into account. We thus compare the natural rank of $\epsilon$ to the rank of $b$ (for $j{=}3$), of $b$ compared to $bb$ (for $j{=}2)$, and of $ab$ compared to $abb$ (for $j{=}1$). We get that $\infixrank_{L_{a\text{-seq}}}(ab,\epsilon)=3$ and $\infixrank_{L_{a\text{-seq}}}(ab,b)=3$. There is no decrease in the natural rank. Similarly, $\infixrank_{L_{a\text{-seq}}}(a,b)=3$ and $\infixrank_{L_{a\text{-seq}}}(a,bb)=3$ and $\infixrank_{L_{a\text{-seq}}}(\epsilon,ab)=2$ and $\infixrank_{L_{a\text{-seq}}}(\epsilon,abb)=2$.
    That is, there is no decrease in natural rank of $w[j..k)$ for any $j$. Therefore, $k=3$ is not influential wrt $i=1$.

    For $k=5$, for searching for the index $j$ that follows the requirements of \autoref{def:influential} we compare the natural rank of $\epsilon$ and $a$ (for $j{=}5$), of $a$ and $aa$ (for $j{=}4)$, of $ba$ and $baa$ (for $j{=}3)$, of $bba$ and $bbaa$ (for $j{=}2)$, and last of $abba$ and $abbaa$ (for $j{=}1$). We have $\infixrank_{L_{a\text{-seq}}}(abba,\epsilon)=3$ and $\infixrank_{L_{a\text{-seq}}}(abba,a)=2$, showing there is a decrease in the natural rank. 
    Thus, there is at least one index $j$ for which there is a decrease in the natural rank of $w[j..k)$, then $k{=}5$ is influential wrt $i{=}1$.
\end{example}

\begin{example}[Dominant suffixes and natural ranks of letters]\label{ex:nat-rank-calcs}
    Consider $L_{a\text{-seq}}$. Recall that $L_{a\text{-seq}}$ is $\infty a \wedge (\infty aa \to \infty aaa)$.
    \autoref{ex:seq-of-natrank-domnid-inf-a-if-aa-then-aaa} shows the calculation of the natural rank of the first $9$ letters of the word $w{=}abbaaaaab{...}$, where the row $i_w(k)$ signifies whether the $k$-th index is influential. The role of the other rows can be inferred from the row title.
    \autoref{ex:seq-of-natrank-domnid-inf-ab} does the same for the word $w{=}bbaababbb{...}$ and $L_{\infty ab}$.
\end{example}
\noindent
\begin{minipage}{0.5\textwidth}
\setlength{\tabcolsep}{1.2pt}
\begin{tabular}{r@{\ }|@{\ }lllllllllll}
$w$ &   & $a$ & $b$ & $b$ & $a$ & $a$ & $a$ & $a$ & $a$ & $b$ & {...} \\
  $k$   &  \grayit{0} & 1 & 2 & 3 & 4 & 5 & 6 & 7 & 8 & 9 & {...}\\ 
\hline
$i_w(k)$  &   & \true & \false & \false & \true & \true & \true & \true & \true & \false & {...}\\
$\reset_w(k)$ & \grayit{0} & 1 & 2 & 3 & 4 & 4 & 4 & 5 & 6 & 9 & {...}\\
$\predomsuf_{L_{a\text{-seq}}}(w[..k])$  & \rotatebox{90}{$\grayit{\varepsilon}$} & \rotatebox{90}{$a$} & \rotatebox{90}{$b$} & \rotatebox{90}{$bb$} & \rotatebox{90}{$a$} & \rotatebox{90}{$aa$} & \rotatebox{90}{$aaa$} & \rotatebox{90}{$aaa$} & \rotatebox{90}{$aaa$}  & \rotatebox{90}{$b$} &{...}\\
$\domsuf_{L_{a\text{-seq}}}(w[..k])$  & \rotatebox{90}{$\grayit{\varepsilon}$} & \rotatebox{90}{$a$} & \rotatebox{90}{$b$} & \rotatebox{90}{$b$} & \rotatebox{90}{$a$} & \rotatebox{90}{$aa$} & \rotatebox{90}{$aaa$} & \rotatebox{90}{$aaa$} & \rotatebox{90}{$aaa$}  & \rotatebox{90}{$b$} &{...}\\
$\letterrank_{L_{a\text{-seq}}}(w,k)$  & \grayit{3} & 2 & 3 & 3 & 2 & 1 & 0 & 0 & 0 & 3 &{...}\\
\end{tabular}
\vspace{2mm}
\captionof{table}{
For $w=abbaaaaab{...}$ and $L_{a\text{-seq}}$}\label{ex:seq-of-natrank-domnid-inf-a-if-aa-then-aaa}
\end{minipage}
\begin{minipage}{0.5\textwidth}
\setlength{\tabcolsep}{1.2pt}
\begin{tabular}{r@{\ }|@{\ }lllllllllll}
$w$ &   & $b$ & $b$ & $a$ & $a$ & $b$ & $a$ & $b$ & $b$ & $b$ & {...}\\
  $k$   &  \grayit{0} & 1 & 2 & 3 & 4 & 5 & 6 & 7 & 8 & 9 &{...} \\ 
\hline
$i_w(k)$  &   & \false & \false & \true & \false & \true & \true & \true & \false & \false & {...}\\
$\reset_w(k)$  & \grayit{0} & 1 & 2 & 2 & 4 & 4 & 5 & 6 & 8 & 9 &{...}\\
$\predomsuf_{L_{\infty ab}}(w[..k])$  & \rotatebox{90}{$\grayit{\varepsilon}$} & \rotatebox{90}{$b$} & \rotatebox{90}{$bb$} & \rotatebox{90}{$ba$} & \rotatebox{90}{$aa\phantom{a}$} & \rotatebox{90}{$ab$} & \rotatebox{90}{$ba$} & \rotatebox{90}{$ab$} & \rotatebox{90}{$bb$}  & \rotatebox{90}{$bbb$} &{...}\\
$\domsuf_{L_{\infty ab}}(w[..k])$  & \rotatebox{90}{$\grayit{\varepsilon}$} & \rotatebox{90}{$b$} & \rotatebox{90}{$b$} & \rotatebox{90}{$ba$} & \rotatebox{90}{$a\phantom{aa}$} & \rotatebox{90}{$ab$} & \rotatebox{90}{$ba$} & \rotatebox{90}{$ab$} & \rotatebox{90}{$b$}  & \rotatebox{90}{$b$} &{...}\\

$\letterrank_{L_{\infty ab}}(w,k)$  & \grayit{1} & 1 & 1 & 0 & 1 & 0 & 0 & 0 & 1 & 1 & {...}\\
\end{tabular}
\vspace{2mm}
\captionof{table}{
For $w=bbaababbb{...}$ and $L_{\infty ab}$}\label{ex:seq-of-natrank-domnid-inf-ab}
\end{minipage}

\begin{example}[Robustness values]\label{ex:rbstval}
    Recall $L_{\infty a}$ and the words $w_1=bb(ab)^\omega$, $w_2=bbbbb(ab)^\omega$, $w_3=aa(ab)^\omega$ and $w_4=aaaaa(ab)^\omega$. Note that by applying \autoref{def:letter-rank} all $a$'s get rank $0$ and all $b$'s get rank $1$.
    The spoke of $w_1$ thus consists of two \col{red} letters and the period consists of one \col{green} and one \col{yellow} letters. Thus, $\tau_u$ is $(0,-1)$ and $\tau_v$ is $(0,\frac{1}{2})$.
    The robustness value of $w_1$ is $(1,(0,\frac{1}{2}),2\cdot(0,-1\frac{1}{2}))=(1,(0,\frac{1}{2}),(0,-3))$. Similarly, the robustness value of $w_2$ is $(1,(0,\frac{1}{2}),5\cdot(0,-1\frac{1}{2}))=(1,(0,\frac{1}{2}),(0,-7\frac{1}{2}))$. Thus, $w_1$ is more robust than $w_2$. 

    The spoke of $w_3$ consists of two \col{green} letters and the period consists of one \col{green} and one \col{yellow} letters. Thus, $\tau_u$ is $(0,1)$ and $\tau_v$ is $(0,\frac{1}{2})$.
    The robustness value of $w_3$ and $w_4$ are thus $(1,(0,\frac{1}{2}),2\cdot(0,\frac{1}{2}))=(1,(0,\frac{1}{2}),(0,1))$ and $(1,(0,\frac{1}{2}),5\cdot(0,\frac{1}{2}))=(1,(0,\frac{1}{2}),(0,2\frac{1}{2}))$, respectively. Thus, $w_4$ is more robust than $w_3$. 

    Consider $w_5=aa(b)^\omega$ and $w_6=aaaaa(b)^\omega$ which are rejected. The spoke of $w_5$ consists of two \col{green} letters and one \col{yellow} letter while the period consists of only \col{red} letters. Then $\tau_u$ is $(0,1)$ and $\tau_v$ is $(0,-1)$.
    The robustness value of $w_5$ and $w_6$ are thus $(-1,(0,-1),2\cdot(0,2))=(-1,(0,-1),(0,4))$, and $(-1,(0,-1),5\cdot(0,2))=(-1,(0,-1),(0,10))$, respectively. Hence, $w_6$ is more robust than $w_5$. 
\end{example}

\begin{example}[The precise DPA for $L_{a\text{-seq}}$]\label{ex:precise-aaa}
    Consider the language $L_{a\text{-seq}}$. The precise DPA~\cite{BohnL24} for $L_{a\text{-seq}}$ is $\aut{P}_{a\text{-seq}}^{\text{Precise}}$ from \autoref{fig:precise}.
    The precise DPA is transition-based rather than state-based (i.e. the ranks are on the transitions rather than the {states}) but this is not the issue. 
    The issue is that it does not rank $(a)^\omega$ with $210000...$ as does the sequence of $n_1,n_2,...$ where $n_i=\letterrank_{L_{a\text{-seq}}}(a^\omega,i)$.
    Consequently it suggests that $a^\omega$ is not the most robust word for $L_{a\text{-seq}}$, contrary to what we require. Indeed, according to it $(ab)^\omega$ gets a better value (if we had used a reasonable adjustment of the coloring/score from state-based to transition-based).
\end{example}

\commentout{
\begin{example}[Four different DPAs for $L_{a\text{-seq}}$]\label{ex:four-dpa-aaa}
Continuing the example with regard to 
the language $L_{a\text{-seq}}$, 
we note that the DPA $\aut{P}_3$ of \autoref{fig:aut-for-examples-parity} recognizes $L_{a\text{-seq}}$ and provides the desired correlation between the natural ranks of letters and state ranks. Note that by redirecting the self-loop on $a$ from the state ranked $0$ to any other state, we do not change the language. Thus, we get four different state-based DPAs while only one of which is the desired one (has the tight correlation to letter-ranks).
\end{example}
}

\begin{figure} 
\begin{center}
\scalebox{0.65}{
\begin{tikzpicture}[->,>=stealth',shorten >=1pt,auto,node distance=2.0cm,semithick,initial text=, initial above]

\coordinate (P20) at (1.25,0);
\draw[rotate=0,dashed,green,line width=0.5mm] (P20) ellipse (2.6cm and 0.9cm);
\coordinate (P21) at (2.5,0);
\draw[rotate=0,dashed,red,line width=0.5mm] (P21) ellipse (1.1cm and 0.65cm);

\node[state,initial]    (q0)               {$0$};
\node[state]    (q1)  [right of=q0]   {$1$};
\node[label] (qL) [above left of=q0, node distance=1.6cm] {$\aut{P}_1:$};

\path (q0) edge [loop left] 
           node   {$a$}
      (q0); 
\path (q0) edge [bend left] 
           node {$b$} 
      (q1);
\path (q1) edge [bend left] 
           node {$a$} 
      (q0);
\path (q1) edge [loop right] 
           node  {$b$}
      (q1);

\coordinate (P10) at (6,-0.4);
 \draw[rotate=90,dashed,green,line width=0.5mm] (P10) ellipse (1.1cm and 0.65cm);
 \coordinate (P11) at (8,-0.4);
 \draw[rotate=90,dashed,red,line width=0.5mm] (P11) ellipse (1.1cm and 0.65cm);    

\node[state,initial]    (r0)   [right of=q1, node distance=4cm]            {$0$};
\node[state]    (r1)  [right of=r0]   {$-1$};
\node[label] (qL) [above left of=r0, node distance=1.6cm] {$\aut{P}_2:$};

\path (r0) edge [loop below] 
           node   {$a$}
      (r0); 
\path (r0) edge  
           node {$b$} 
      (r1);
\path (r1) edge [loop below] 
           node  {$a,b$}
      (r1);

\coordinate (P30) at (14.75,0);
\draw[rotate=0,dashed,green,line width=0.5mm] (P30) ellipse (4.6cm and 1.5cm);
\coordinate (P31) at (13.5,0);
\draw[rotate=0,dashed,red,line width=0.5mm] (P31) ellipse (3.11cm and 1.15cm);
\coordinate (P32) at (12.58,0);
\draw[rotate=0,dashed,green,line width=0.5mm] (P32) ellipse (2.05cm and 0.9cm);
\coordinate (P33) at (11.75,0);
\draw[rotate=0,dashed,red,line width=0.5mm] (P33) ellipse (1.01cm and 0.65cm); 

\node[state,initial]  [right of=r1, node distance=4cm] (p3)          {$3$};
\node[state]  (p2)  [right of=p3]   {$2$};
\node[state]  (p1)  [right of=p2]   {$1$};
\node[state]  (p0)  [right of=p1]   {$0$};
\node[label] (pL) [above left of=p3, node distance=1.6cm] {$\aut{P}_3:$};

\path (p3) edge   
           node [below] {$a$}
      (p2); 
\path (p2) edge   
           node [below] {$a$}
      (p1);
\path (p1) edge   
           node [below] {$a$}
      (p0);
\path (p0) edge  [loop right] 
           node [right] {$a$}
      (p0);      
      
\path (p3) edge [loop left]  
           node [left] {$b$}
      (p3);   
\path (p2) edge [bend right=25]  
           node [above, near start]  {$b$}
      (p3);   
\path (p1) edge [bend right=35]  
           node [above, near start] {$b$}
      (p3);   
\path (p0) edge [bend right=45]  
           node [above, near start] {$b$}
      (p3);

\end{tikzpicture}}
\end{center}
\vspace{-6mm}
\caption{
Parity automata $\aut{P}_1,\aut{P}_2,\aut{P}_3$ for the languages $L_{\infty a}$, $L_{\ltlG a}$, and $L_{a\text{-seq}}$, respectively, annotated with accepting (dashed green) and rejecting SCCs (dashed red) witnessing their inclusion measure.
Recall that 
$L_{a\text{-seq}}$ is $\infty a \wedge (\infty aa \to \infty aaa)$.
}\label{fig:aut-for-examples-parity}
\end{figure}

\begin{figure}
\begin{center}
\scalebox{0.65}{
\begin{tikzpicture}[->,>=stealth',shorten >=1pt,auto,node distance=2.0cm,semithick,initial text=, initial above]

\node[state]    (c0)               {};
\node[state]    (c1)  [right of=c0]   {};
\node[state]    (c2)  [right of=c1]   {};

\node[label] (qcL) [above left of=c0, node distance=1.25cm] {$\aut{P}_{a\text{-seq}}^{\text{Precise}}:$};

\path (c0) edge [loop left] 
           node   {$b/3$}
      (c0); 
\path (c0) edge 
           node   {$a/2$}
      (c1); 
\path (c1) edge 
           node   {$a/1$}
      (c2); 
\path (c2) edge   [bend right=45]
           node   {$a/0$}
      (c0); 
\path (c2) edge   [bend left=50]
           node   {$b/1$}
      (c0); 
\path (c1) edge   [bend left=30]
           node   {$b/2$}
      (c0); 

\node[state]    (d0)   [right of=c2, node distance=6cm]            {};
\node[state]    (d1)  [right of=d0]   {};
\node[state]    (d2)  [right of=d1]   {};

\node[label] (qdL) [above left of=d0, node distance=1.25cm] {$\aut{P}_{\infty ab}^{\text{Precise}}:$};

\path (d0) edge [loop left] 
           node   {$b/1$}
      (d0); 
\path (d0) edge [bend left] 
           node   {$a/0$}
      (d1); 
\path (d1) edge [bend left] 
           node   {$a/1$}
      (d2); 
\path (d2) edge [bend left] 
           node   {$b/0$}
      (d1); 
\path (d1) edge [bend left] 
           node   {$b/1$}
      (d0); 
\path (d2) edge [loop right] 
           node   {$a/1$}
      (d2); 
      
\end{tikzpicture}}
\end{center}
\vspace{-6mm}
\caption{The precise parity automata $\aut{P}_{a\text{-seq}}^{\text{Precise}}$ and $\aut{P}_{\infty ab}^{\text{Precise}}$ for $L_{a\text{-seq}}$ and $L_{\infty ab}$ resp. over $\Sigma=\{a,b\}$.}\label{fig:precise}
\end{figure}

\begin{example}[The precise DPA for $L_{\infty ab}$] \label{ex:precise-ab}
    Consider again the language $L_{\infty ab}$ from
    \autoref{fig:dpa-infty-ab}. The precise DPA~\cite{BohnL24} for $L_{\infty ab}$ is $\aut{P}_{\infty ab}^{\text{Precise}}$ from \autoref{fig:precise}. We can see that the word $(ab)^\omega$ is ranked with $(10)^\omega$ instead of $10^\omega$ as correlated with the natural rank and as we expect from the robustness DPA (see \autoref{ex:infty-ab}).
\end{example}

\begin{figure}
\begin{center}
\scalebox{0.62}{
\begin{tikzpicture}[->,>=stealth',shorten >=1pt,auto,node distance=2.0cm,semithick,initial text=, initial left]

\node[state,rectangle, rounded corners=0.35cm,initial]    (a0)
{$\begin{array}{c}L_0 \\ \aut{P}_{0\epsilon}\ \ \aut{P}_{1\epsilon} \\ \ \colorbox{lipicsLightGray}{2} \ \ \ \ \ 2\end{array}$};

\node[state,rectangle, rounded corners=0.35cm]    (a1)  [above right of=a0, node distance=3cm]
{$\begin{array}{c}L_0 \\ \aut{P}_{0c}\ \ \aut{P}_{1\epsilon} \\ \ \colorbox{lipicsLightGray}{1} \ \ \ \ \ 2\end{array}$};

\node[state,rectangle, rounded corners=.35cm]    (a2)  [above left of=a0, node distance=3cm]
{$\begin{array}{c}L_0 \\ \aut{P}_{0\epsilon}\ \ \aut{P}_{1b} \\ \ \colorbox{lipicsLightGray}{2} \ \ \ \ \ 1\end{array}$};

\node[state,rectangle, rounded corners=.35cm]    (a3)  [below of=a0, node distance=3cm]
{$\begin{array}{c}L_1 \\ \aut{P}_{0a}\ \ \aut{P}_{1a} \\ \ 0 \ \ \ \ \ \colorbox{lipicsLightGray}{0}\end{array}$};

\node[state,rectangle, rounded corners=.35cm]    (a4)  [below of=a3, node distance=3cm]
{$\begin{array}{c}L_0 \\ \aut{P}_{0a}\ \ \aut{P}_{1a} \\ \ \colorbox{lipicsLightGray}{0} \ \ \ \ \ 0\end{array}$};

\node[state,rectangle, rounded corners=.35cm]    (a5)  [below left of=a4, node distance=3cm]
{$\begin{array}{c}L_1 \\ \aut{P}_{0\epsilon}\ \ \aut{P}_{1b} \\ \ 2 \ \ \ \ \ \colorbox{lipicsLightGray}{1}\end{array}$};

\node[state,rectangle, rounded corners=.35cm]    (a6)  [below right of=a4, node distance=3cm]
{$\begin{array}{c}L_1 \\ \aut{P}_{0c}\ \ \aut{P}_{1\epsilon} \\ \ 1 \ \ \ \ \ \colorbox{lipicsLightGray}{2}\end{array}$};

\node[label] (qaL) [above left of=a2, node distance=1.5cm] {$\aut{P}^{\vigor}_{L_{\text{mod}2}}:$};

\path (a0) edge 
           node [left,near start]  {$b$}
      (a2); 
\path (a0) edge 
           node   {$a$}
      (a3); 
\path (a0) edge 
           node  [right,near start] {$c$}
      (a1); 

\path (a1) edge 
           node [right]  {$a$}
      (a3); 
\path (a1) edge 
           node   {$b$}
      (a2); 
\path (a1) edge  [loop right]
           node   {$c$}
      (a1); 

\path (a2) edge 
           node [left]  {$a$}
      (a3); 
\path (a2) edge  [loop left]
           node   {$b$}
      (a2); 
\path (a2) edge  [bend left]
           node   {$c$}
      (a1); 

\path (a3) edge [bend left]
           node   {$a$}
      (a4); 
\path (a3) edge  [bend right]
           node   {$b$}
      (a5); 
\path (a3) edge  [bend left]
           node   {$c$}
      (a6); 

\path (a4) edge [bend left]
           node   {$a$}
      (a3); 
\path (a4) edge  [bend left]
           node   {$b$}
      (a2); 
\path (a4) edge  [bend right]
           node   {$c$}
      (a1); 

\path (a5) edge 
           node   {$a$}
      (a4); 
\path (a5) edge [loop left]
           node   {$b$}
      (a5); 
\path (a5) edge  [bend right]
           node   {$c$}
      (a6); 

\path (a6) edge 
           node   {$a$}
      (a4); 
\path (a6) edge 
           node   {$b$}
      (a5); 
\path (a6) edge  [loop right]
           node   {$c$}
      (a6); 


\node[state,initial, right of=a0, node distance=8.25cm]    (b0)
{$2$};

\node[state]    (b1)  [above right of=b0, node distance=3cm]
{$1$};

\node[state]    (b2)  [above left of=b0, node distance=3cm]
{$2$};

\node[state]    (b3)  [below of=b0, node distance=3cm]
{$0$};

\node[state]    (b4)  [below of=b3, node distance=3cm]
{$0$};

\node[state]    (b5)  [below left of=b4, node distance=3cm]
{$1$};

\node[state]    (b6)  [below right of=b4, node distance=3cm]
{$2$};

\node[label] (qbL) [above left of=b2, node distance=1.5cm] {$\aut{P}^\vigor_{L_{\text{mod}2}}:$};

\path (b0) edge 
           node [left]  {$b$}
      (b2); 
\path (b0) edge 
           node   {$a$}
      (b3); 
\path (b0) edge 
           node [right]  {$c$}
      (b1); 

\path (b1) edge 
           node [right]  {$a$}
      (b3); 
\path (b1) edge 
           node   {$b$}
      (b2); 
\path (b1) edge  [loop right]
           node   {$c$}
      (b1); 

\path (b2) edge 
           node [left]  {$a$}
      (b3); 
\path (b2) edge  [loop left]
           node   {$b$}
      (b2); 
\path (b2) edge  [bend left]
           node   {$c$}
      (b1); 

\path (b3) edge [bend left]
           node   {$a$}
      (b4); 
\path (b3) edge  [bend right]
           node   {$b$}
      (b5); 
\path (b3) edge  [bend left]
           node   {$c$}
      (b6); 

\path (b4) edge [bend left]
           node   {$a$}
      (b3); 
\path (b4) edge  [bend left]
           node   {$b$}
      (b2); 
\path (b4) edge  [bend right]
           node   {$c$}
      (b1); 

\path (b5) edge 
           node   {$a$}
      (b4); 
\path (b5) edge [loop left]
           node   {$b$}
      (b5); 
\path (b5) edge  [bend right]
           node   {$c$}
      (b6); 

\path (b6) edge 
           node   {$a$}
      (b4); 
\path (b6) edge 
           node   {$b$}
      (b5); 
\path (b6) edge  [loop right]
           node   {$c$}
      (b6); 


\node[state,initial]  (l0)  [left of=b2, node distance=15.25cm] {$L_0$};
\node[state]    (l1)  [right of=l0, node distance=3.00cm]   {$L_1$};

\node[label] (qlL) [above left of=l0, node distance=1.25cm] {$\aut{A}[\sim_{L_{\text{mod}2}}]:$};

\path (l0) edge [loop below] 
           node   {$b,c$}
      (l0); 
\path (l1) edge [loop below] 
           node   {$b,c$}
      (l1); 

\path (l0) edge [bend left] 
           node {$a$} 
      (l1);
\path (l1) edge [bend left] 
           node {$a$} 
      (l0);


\node[label]    (qi)  [left of=b4,node distance=14.5cm]   {};
\node[state]    (q1)  [below of=qi,node distance=2cm]   {$1,b$};
\node[state,initial]  (q0)  [above left of=q1] {$2,\epsilon$};
\node[state]    (q2)  [below left of=q0]   {$0,a$};

\node[label] (qqL) [above left of=q0, node distance=1.25cm] {$\aut{P}_{a}^\vigor:$};

\path (q0) edge [loop above] 
           node   {$c$}
      (q0); 
\path (q2) edge [loop below] 
           node   {$a$}
      (q2); 
\path (q1) edge [loop below] 
           node   {$b$}
      (q1); 
\path (q0) edge  
           node {$a$} 
      (q2);
\path (q0) edge [bend left] 
           node {$b$} 
      (q1);
\path (q1) edge [bend left] 
           node {$a$} 
      (q2);
\path (q1) edge 
           node {$c$} 
      (q0);
\path (q2) edge [bend left] 
           node [near start] {$c$} 
      (q0);
\path (q2) edge 
           node {$b$} 
      (q1);


\node[state,initial]    (r0)    [above of=q0, node distance=4.75cm]           {$2,\epsilon$};
\node[state]    (r1)  [below right of=r0]   {$1,c$};
\node[state]    (r2)  [below left of=r0]   {$0,a$};

\node[label] (qL) [above left of=r0, node distance=1.25cm] {$\aut{P}_{\varepsilon}^\vigor:$};

\path (r0) edge [loop above] 
           node   {$b$}
      (r0); 
\path (r2) edge [loop below] 
           node   {$a$}
      (r2); 
\path (r1) edge [loop below] 
           node   {$c$}
      (r1); 
\path (r0) edge  
           node {$a$} 
      (r2);
\path (r0) edge [bend left] 
           node {$c$} 
      (r1);
\path (r1) edge [bend left] 
           node {$a$} 
      (r2);
\path (r1) edge 
           node {$b$} 
      (r0);
\path (r2) edge [bend left] 
           node [near start] {$b$} 
      (r0);
\path (r2) edge 
           node {$c$} 
      (r1);

\end{tikzpicture}}
\end{center}
\vspace{-2mm}
\caption{
The vigor DPA for $L_{\text{mod}2}=\infty a~\bigvee~(|w|_{a}$ is even $\wedge~\neg \infty c)~\bigvee~(|w|_{a}$ is odd $\wedge~\neg \infty b)$ over $\Sigma=\{a,b,c\}$, and the related components:
$\aut{A}[\sim_{L_{\text{mod}2}}]$ induced by the equivalence classes of $L_{\text{mod}2}$, parity automata $\aut{P}{\varepsilon}^\vigor$ and $\aut{P}_{a}^\vigor$ for the respective languages, as per \autoref{def:rbst-dpa}.
There are two versions of the vigor DPA $\aut{P}^{\vigor}_{L_{\text{mod}2}}$ for $L_{\text{mod}2}$ : the one on the left shows the relation to the components as per \autoref{def:rbst-dpa}, whereas the one on the right shows only the final ranks.\label{fig:P-rbst}}
\end{figure}

\begin{example}[Black edges issue]\label{ex:black-edges-issue}
Consider the DPA $\aut{P}_2$ from~\autoref{fig:aut-for-examples-parity}. Applying the edge-coloring results in $\aut{P}'_2$ of \autoref{fig:aut-for-examples-parity-colored}, so after reading a black edge all edges will be black. Thus the colors traversed when reading e.g. $aaabaaa$ will be \col{g}\col{g}\col{g}\col{b}\col{b}\col{b}\col{b} where \col{g} stands for \col{green} and \col{b} for \col{black} whereas we want \col{g}\col{g}\col{g}\col{b}\col{g}\col{g}\col{g}. This mismatch will be encountered whenever a black or a white letter exists. This is since a black (resp. white) edge will transition to a rejecting (resp. accepting) MSCC, from which all edges will be black (resp. white). However, in \autoref{def:letter-color} after a black or a white letter is observed, other colors can be observed since the definition makes use of the forgetful version of the word which essentially forgets the black and white letters seen before. 
This issue 
called for the definition of forgetful-parity acceptance condition, from which we derive the  robustness DPA $\aut{P}_{L_{\ltlG a}}^\truerobustness$ shown in~\autoref{fig:aut-for-examples-parity-colored}.
\end{example}

\begin{figure}
\begin{center}
\scalebox{0.65}{
\begin{tikzpicture}[->,>=stealth',shorten >=1pt,auto,node distance=2.0cm,semithick,initial text=, initial below]

\node[state,initial]    (r0)             {$0$};
\node[state]    (r1)  [right of=r0]   {$-1$};
\node[label] (qL) [above left of=r0, node distance=1.6cm] {$\aut{P}_2':$};

\path (r0) edge [pgreen, loop left] 
           node   {$a$}
      (r0); 
\path (r0) edge  [pblack]
           node {$b$} 
      (r1);
\path (r1) edge [pblack, loop right] 
           node  {$a,b$}
      (r1);          

\node[state,initial]    (k0)   [below of=r0, node distance=2.5cm]            {$0$};
\node[state]    (k1)  [right of=k0]   {$-1$};
\node[label] (qkL) [above left of=k0, node distance=1.6cm] {$\aut{P}_{L_{\ltlG a}}^\truerobustness:$};

\path (k0) edge [pgreen, loop left] 
           node   {$a$}
      (k0); 
\path (k0) edge [pblack, bend left] 
           node {$b$} 
      (k1);
\path (k1) edge [dotted, bend left] 
           node {$\epsilon$} 
      (k0);

\node[state]    (q1)  [left of=k0, node distance=4cm]   {$1$};

\node[state,initial]    (q0)     [left of=q1]     {$0$};
\node[label] (qL) [above left of=q0, node distance=1.6cm] {$\aut{P}_{L_{\infty a}}^\truerobustness:$};

\path (q0) edge [pgreen, loop above] 
           node   {$a$}
      (q0); 
\path (q0) edge [pyellow, bend left] 
           node {$b$} 
      (q1);
\path (q1) edge [pgreen, bend left] 
           node {$a$} 
      (q0);
\path (q1) edge [pred, loop above] 
           node  {$b$}
      (q1);

\node[state,initial]  [right of=k1, node distance=4cm] (p3)          {$3$};
\node[state]  (p2)  [right of=p3]   {$2$};
\node[state]  (p1)  [right of=p2]   {$1$};
\node[state]  (p0)  [right of=p1]   {$0$};
\node[label] (pL) [above left of=p3, node distance=1.6cm] {$\aut{P}_{L_{a\text{-seq}}}^\truerobustness:$};

\path (p3) edge  [pgreen] 
           node [below] {$a$}
      (p2); 
\path (p2) edge  [pred] 
           node [below] {$a$}
      (p1);
\path (p1) edge  [pgreen] 
           node [below] {$a$}
      (p0);
\path (p0) edge  [pgreen, loop right] 
           node [right] {$a$}
      (p0);      
      
\path (p3) edge [pred, loop above]  
           node  {$b$}
      (p3);   
\path (p2) edge [pyellow, bend right=25]  
           node [above, near start]  {$b$}
      (p3);   
\path (p1) edge [pyellow, bend right=35]  
           node [above, near start] {$b$}
      (p3);   
\path (p0) edge [pyellow, bend right=45]  
           node [above, near start] {$b$}
      (p3);

\end{tikzpicture}}
\end{center}
\vspace{-6mm}
\caption{The robustness parity automata $\aut{P}_{L_{\infty a}}^\truerobustness,\aut{P}_{L_{\ltlG a}}^\truerobustness,\aut{P}_{L_{a\text{-seq}}}^\truerobustness$ for languages $L_{\infty a}$, $L_{\ltlG a}$, $L_{a\text{-seq}}$, resp. The DPA $\aut{P}'_2$ is what we would get
from $\aut{P}_2$ of \autoref{fig:aut-for-examples-parity} if we hadn't apply the conversion to the forgetful parity automaton as per \autoref{prop:edge-color-correctness}, which is not what we want as explained in  \autoref{ex:black-edges-issue}.}\label{fig:aut-for-examples-parity-colored}
\end{figure}

\begin{figure}
\begin{center}
\scalebox{0.65}{
\begin{tikzpicture}[->,>=stealth',shorten >=1pt,auto,node distance=2.0cm,semithick,initial text=, initial above]

\node[state]    (r0)               {$-1$};
\node[state,initial]    (r1)  [right of=r0]   {$0$};
\node[state]    (r2)  [right of=r1]   {$1$};

\node[label] (qL) [above left of=r0, node distance=1.25cm] {$\aut{P}_{L''}^\vigor:$};

\path (r0) edge [loop left] 
           node   {$\Sigma$}
      (r0); 
\path (r1) edge  
           node {$rg,\overline{r}g$} 
      (r0);
\path (r1) edge [bend left] 
           node {$r\overline{g}$} 
      (r2);
\path (r2) edge [bend left] 
           node {$\overline{r}g$} 
      (r1);
\path (r1) edge [loop below] 
           node  {$\overline{rg}$}
      (r1);    
\path (r2) edge [loop right] 
           node  {$\Sigma\setminus\{\overline{r}g\}$}
      (r2);     


\node[state]  [right of=r2, node distance=4cm]  (q0)       {$-1$};
\node[state,initial]    (q1)  [right of=q0]   {$0$};
\node[state]    (q2)  [right of=q1]   {$1$};

\node[label] (qL) [above left of=q0, node distance=1.25cm] {$\aut{P}_{L''}^\truerobustness:$};

\path (q1) edge  [pblack, bend left]
           node {$rg,\overline{r}g$} 
      (q0);
\path (q0) edge  [bend left, dashed]
           node {$\varepsilon$} 
      (q1);
\path (q1) edge [pyellow, bend left] 
           node {$r\overline{g}$} 
      (q2);
\path (q2) edge [pgreen, bend left] 
           node {$\overline{r}g$} 
      (q1);
\path (q1) edge [pgreen, loop below] 
           node  {$\overline{rg}$}
      (q1);    
\path (q2) edge [pred, loop right] 
           node  {$\Sigma\setminus\{\overline{r}g\}$}
      (q2);     
      
\end{tikzpicture}}
\end{center}
\vspace{-6mm}
\caption{The vigor and robustness DPAs  $\aut{P}_{L''}^\vigor$ and $\aut{P}_{L''}^\truerobustness$ for \autoref{ex:no-redundant-grants} with $\Sigma=\{rg,r\overline{g},\overline{r}g,\overline{rg}\}$.}\label{fig:no-redundant-grants}
\end{figure}

\begin{figure}[h]
\begin{center}
\scalebox{0.65}{
\begin{tikzpicture}[->,>=stealth',shorten >=1pt,auto,node distance=2.0cm,semithick,initial text=, initial above]

\node[state,initial]    (r0)               {$0$};
\node[state]    (r1)  [right of=r0]   {$1$};
\node[state]    (r2)  [right of=r1]   {$1$};
\node[state]    (r3)  [below of=r1]   {$-1$};

\node[label] (qL) [above left of=r0, node distance=1.25cm] {$\aut{P}_{L'''}^\vigor:$};

\path (r0) edge [loop left] 
           node   {$s$}
      (r0); 
\path (r0) edge [bend left=55] 
           node {$d$} 
      (r2);
\path (r1) edge  
           node {$s$} 
      (r0);
\path (r2) edge 
           node {$r$} 
      (r1);
\path (r1) edge [loop above] 
           node [left] {$r$}
      (r1);    
\path (r2) edge [loop right] 
           node  {$d$}
      (r2);     
\path (r0) edge  
           node {$r$} 
      (r3);
\path (r1) edge  
           node {$d$} 
      (r3);
\path (r2) edge  
           node {$s$} 
      (r3);
\path (r3) edge [loop below] 
           node   {$s,d,r$}
      (r3); 


\node[state,initial]  [right of=r2, node distance=4cm]  (q0)       {$0$};
\node[state]    (q1)  [right of=q0]   {$1$};
\node[state]    (q2)  [right of=q1]   {$1$};
\node[state]    (q3)  [below of=q0]   {$-1$};
\node[state]    (q4)  [right of=q3]   {$-1$};
\node[state]    (q5)  [right of=q4]   {$-1$};

\node[label] (qL) [above left of=q0, node distance=1.25cm] {$\aut{P}_{L'''}^\truerobustness:$};

\path (q0) edge [pgreen, loop left] 
           node   {$s$}
      (q0); 
\path (q0) edge [pyellow, bend left=55] 
           node {$d$} 
      (q2);
\path (q1) edge  [pgreen]
           node {$s$} 
      (q0);
\path (q2) edge [pred]
           node {$r$} 
      (q1);
\path (q1) edge [pred, loop above] 
           node [left] {$r$}
      (q1);    
\path (q2) edge [pred, loop right] 
           node  {$d$}
      (q2);     
\path (q0) edge  [pblack, bend left]
           node {$r$} 
      (q3);
\path (q1) edge  [pblack, bend left]
           node {$d$} 
      (q4);
\path (q2) edge [pblack, bend left] 
           node {$s$} 
      (q5);
\path (q3) edge  [bend left, dashed]
           node {$\varepsilon$} 
      (q0);
\path (q4) edge  [bend left, dashed]
           node {$\varepsilon$} 
      (q1);
\path (q5) edge [bend left, dashed] 
           node {$\varepsilon$} 
      (q2);
      
\end{tikzpicture}}
\end{center}
\vspace{-6mm}
\caption{The vigor parity automaton $\aut{P}_{L'''}^\vigor$ and its forgetful version, the robustness automaton $\aut{P}_{L'''}^\truerobustness$ for \autoref{ex:danger}.}\label{fig:danger}
\end{figure}

\begin{example}[Safe, danger, recover]\label{ex:danger}
    Consider the following property 
    over the alphabet  $\{s,d,r\}$ standing for \signal{safe}, \signal{danger} and \signal{recover}, resp. 
    As long as there are no issues the system is in $\signal{safe}$ mode. Once an issue has occurred, the system moves to $\signal{danger}$ mode.
    When this occurs, the system should eventually move to $\signal{recover}$ mode and until then stay in $\signal{danger}$ mode. From $\signal{recover}$ mode it should eventually move back to $\signal{safe}$ mode.    
    Formally, the property is $L'''=\ltlG(s \to \ltlX [s\ltlW d]) \wedge \ltlG(d \to \ltlX [d \ltlU r]) \wedge \ltlG(r \to \ltlX [r \ltlU s])$. 
    Consider two words $(d^{i}r^{j}s^{k})^\omega$ and $(d^{i'}r^{j'}s^{k'})^\omega$ and assume the period lengths are equal, that is, $i+j+k=i'+j'+k'$. To capture the intuition that safe actions are the preferred ones, we require $(d^{i}r^{j}s^{k})^\omega \gtrbst_{L'''} (d^{i'}r^{j'}s^{k'})^\omega$ if $i+j<i'+j'$, or equivalently $k>k'$.
    
    Its vigor and robustbess DPAs are given in \autoref{fig:danger}.
\end{example}

\end{document}